\documentclass[aps,prl,reprint,superscriptaddress,floatfix,longbibliography]{revtex4-2}

\usepackage[unicode=true,pdfusetitle, bookmarks=true,bookmarksnumbered=false,bookmarksopen=false, breaklinks=false,pdfborder={0 0 0},backref=false,colorlinks=false,pagebackref=false]{hyperref}

\hypersetup{colorlinks,linkcolor=myurlcolor,citecolor=myurlcolor,urlcolor=myurlcolor}
\usepackage{graphics,epstopdf,graphicx,amsthm,amsmath,amssymb,braket,colortbl,color,bm,framed,mathrsfs}
\usepackage{float}
\usepackage{enumerate}
\usepackage[inline]{enumitem}
\usepackage{cleveref}

\usepackage{booktabs} % Required for better horizontal rules in tables

\definecolor{myurlcolor}{rgb}{0,0,0.9}

\DeclareMathOperator{\trace}{Tr}

\newcommand{\Ptr}[2]{\trace_{#1}\Pa{#2}}
\newcommand{\Tr}[1]{\Ptr{}{#1}}

\newcommand{\Pa}[1]{\left[#1\right]}

\newcommand{\Supp}{\operatorname{Supp}}

\theoremstyle{plain}
\newtheorem{thm}{Theorem}
\newtheorem{lemma}[thm]{Lemma}
\newtheorem{proposition}[thm]{Proposition}
\newtheorem{cor}[thm]{Corollary}

\theoremstyle{definition}
\newtheorem{Def}[thm]{Definition}

\newenvironment{proofsketch}{\begin{proof}[Proof sketch]}{\end{proof}}
\usepackage{pifont}

\newcommand{\FG}{\mathsf{FG}}
\newcommand{\FGC}{\mathsf{FGC}}

\newcommand{\Herm}{\operatorname{Herm}}
\newcommand{\ad}{\operatorname{ad}}

\definecolor{yrteal}{rgb}{0.0,0.45,0.45}

\begin{document}

\title{Convex-Gaussianity of fermionic Gibbs states in perturbation theory}
\author{Kaifeng Bu}
\email{bu.115@osu.edu}
\affiliation{Department of Mathematics, The Ohio State University, Columbus, Ohio 43210, USA}
\affiliation{Department of Physics, Harvard University, Cambridge, Massachusetts 02138, USA}
\author{Yuanjie Ren}
\email{yuanjie@mit.edu}
\affiliation{Department of Physics, Massachusetts Institute of Technology, Cambridge, Massachusetts 02139, USA}
\date{\today}

\begin{abstract}
We study the structure of Gibbs states in weakly perturbed interacting fermionic systems. First, for a sparse Hamiltonian $H=H_0+V$ with a quadratic term $H_0$ and a non-quadratic perturbation $V$ of scale $\epsilon$, we show that the Gibbs state $\rho_{\beta}$ decomposes into a convex combination of Gaussian states whenever the inverse temperature satisfies $\beta \le O(\log(1/\epsilon))$. Moreover, we prove that this bound is asymptotically tight by establishing that $\beta \le \Theta(\log(1/\epsilon))$ is necessary for certain sparse Hamiltonians. This general framework applies directly to the weak-coupling (small-$\vert{}U\vert{}$) regime of the Fermi--Hubbard model with hopping $t$ and on-site interaction $U$ on any graph of maximum degree $D$. Complementarily, in the strong-coupling (small-$\vert{}t\vert{}$) regime, we show that the Gibbs state remains convex-Gaussian up to $\beta \le O\big(\vert{}U\vert{}^{-1}\log(\vert{}U\vert{}/(D\vert{}t\vert{}))\big)$, revealing a mechanism for convex-Gaussianity distinct from the weak-coupling setting.

\end{abstract}

\maketitle

\section{Introduction}

For a quantum many-body Hamiltonian $H$ at inverse temperature $\beta = 1/T$, the thermal equilibrium state is described by the Gibbs state~\cite{PhysRevX.4.031019},
$\rho_\beta(H)=e^{-\beta H}/\Tr{e^{-\beta H}}$. The Gibbs state determines finite-temperature expectation values and fundamental physical properties, including correlation functions, response functions, free energies, and phase behavior~\cite{PhysRevX.4.031019, D_Alessio_2016}. Consequently, preparing, sampling, and learning Gibbs states constitute a central objective in both classical and quantum many-body simulation~\cite{PhysRevLett.103.220502, Temme2011, chowdhury2016quantumalgorithmsgibbssampling, Brandão2019, PRXQuantum.4.040201, Chen2025, Bergamaschi_2024, Chen_2025, chen2025quantumgibbsstateslocally}. For interacting fermionic systems, however, direct computation is notoriously challenging due to the exponential growth of the Hilbert space and the severe negative sign problem inherent to generic quantum Monte Carlo methods~\cite{PhysRevLett.94.170201, Li_2019}. It is therefore essential to identify interacting regimes in which fermionic Gibbs states admit a simpler, tractable structure and to characterize their classical or quantum simulability within the broader complexity phase diagram.

Quadratic fermionic Hamiltonians form an important, tractable class of models whose Gibbs states at all temperatures are fermionic Gaussian states. These thermal states are fully determined by two-point correlations, obey Wick's theorem, and admit efficient simulation via covariance-matrix methods~\cite{SciPostPhysLectNotes.54,Bravyi_2005,Bravyi_2012}. In addition, Gaussian states and operations are closely connected to matchgate quantum computation~\cite{Jozsa_2008,knill2001fermioniclinearopticsmatchgates,PhysRevA.65.032325}.
Fermionic Gaussian states and operators are preserved under fermionic convolution~\cite{lyu2024fermionicgaussiantestingnongaussian} and admit diagrammatic representations within a graphical calculus for fermionic tensors~\cite{ren2025graphicalcalculusfermionictensors}.
In quantum resource theory,  convex combinations do not generate non-free resources. Hence, 
we refer to a state as \emph{convex-Gaussian} if it is a convex combination of fermionic Gaussian states~\cite{Melo_2013,PhysRevA.90.062329}. This property provides a natural extension of free-fermion physics and is closely related to questions of fermionic linear optics, classical simulation, and non-Gaussian computational resources~\cite{Bravyi_2005,PhysRevA.90.020302,PhysRevLett.123.080503,Dias2024classicalsimulation,PRXQuantum.3.020328}.

In general, the Gibbs states of interacting fermionic Hamiltonians are not necessarily convex combinations of Gaussian states. 
At high temperatures $T \to \infty$, i.e., $\beta \to 0$, the Gibbs state of any Hamiltonian simplifies to a convex combination of Gaussian states. However,
determining the precise temperature threshold at which this property holds remains a non-trivial problem. 
More generally, high-temperature Gibbs states of short-range local systems
can exhibit decay of correlations and approximate Markov structure under suitable conditions
~\cite{PhysRevX.4.031019,
PhysRevLett.124.220601}.
Recently, Ramkumar et al. show that
Gibbs states of bounded-degree local fermionic Hamiltonians are convex-Gaussian
if the temperature is sufficiently high, with a threshold independent of system size~\cite{qprk-k7wn}.
This result is similar in spirit to work by Bakshi et al., which shows that
Gibbs states of bounded-degree local Hamiltonians are separable
if the temperature is sufficiently high~\cite{Bakshi_2024}.
In addition, for local Pauli Hamiltonians of locality $k$ and local strength $s$
that are $\epsilon$ close to commuting, the Gibbs state is proved to be a convex
combination of stabilizer states when
$\beta\lesssim \log(1/\epsilon)/(sk)$~\cite{putterman2026quantumthermalstateslook}.
These results leave
open how much farther convex Gaussianity persists when the interaction is
weak relative to a quadratic reference for fermionic systems, and what replaces that argument in
the strong-coupling Fermi--Hubbard regime.

Here we answer both questions in complementary regimes. We consider a sparse 
Hamiltonian $H=H_0+V$, where the quadratic term $H_0$ has energy scale $\mathcal E$,
and the nonquadratic terms in $V$ have locality bounded by $R$ and small relative scale $\epsilon$.
Then
its Gibbs state is a convex combination of Gaussian states 
if the inverse temperature satisfies
\begin{align}
 \beta\le
 \frac{1}{R(d+1)\mathcal E}\log\!\left(1+\frac{1}{72\epsilon}\right),
\end{align}
where $d$ is the degree of the interaction graph of the Hamiltonian.
Moreover, we construct a family of fermionic Hamiltonians
such that $\beta\leq O(\log(1/\epsilon))$ is necessary to guarantee that the Gibbs state is
a convex combination of Gaussian states. 

In addition, we apply our results to the Fermi--Hubbard model $H=tH_{\rm hop}+UV_{\rm int}$ on any finite degree-$D$ graph, with
$H_{\rm hop}$ being the hopping term
and $V_{\rm int}$ the on-site interaction term  \cite{Tasaki_1998,Arovas_2022,
Qin_2022,
CRPHYS_2018__19_6_365_0}.
This implies that 
the Gibbs state of the Fermi--Hubbard model in the weak-coupling regime (i.e., $ |U|\ll |t|$)
is a convex combination of Gaussian states if the inverse temperature satisfies
\begin{align}
    \beta\leq O(|t|^{-1}\log(|t|/|U|)).
\end{align}
Moreover, in the strong-coupling regime (i.e., $ |t|\ll|U|$), we also show that its
Gibbs state is a convex combination of Gaussian states
if the inverse temperature satisfies 
\begin{align}
    \beta\leq O(|U|^{-1}\log(|U|/|t|)).
\end{align}

\section{Preliminaries}

Consider $n$ fermionic modes with annihilation and creation operators $c_j$ and $c_j^\dagger$, $j=1,\ldots,n$,  which satisfy the canonical anticommutation relations
$\{c_j,c_k^\dagger\}=\delta_{jk}I, \{c_j,c_k\}=0, \{c_j^\dagger,c_k^\dagger\}=0.$
Define the $2n$ Majorana operators by
$ \gamma_{2j-1}=c_j+c_j^\dagger, \gamma_{2j}=-i(c_j-c_j^\dagger)$.
They are Hermitian and satisfy
$  \{\gamma_a,\gamma_b\}=2\delta_{ab}I$.

A  quadratic Hamiltonian is a real linear combination of $I$ and the operators $i\gamma_a\gamma_b$, $a<b$. For an even set $ J=\{j_1<\cdots<j_{2r}\}$,
define the normalized Majorana string $ \gamma_J:=i^r\gamma_{j_1}\cdots\gamma_{j_{2r}}.$
The Canonical Anti-commutation Relation (CAR) implies $  \gamma_J^\dagger=\gamma_J, \gamma_J^2=I.$
For the empty set, we use $\gamma_\varnothing=I$. 
Every even operator $A$ has a unique Majorana expansion
\begin{align}
      A=\sum_{\substack{J\subseteq[2n]\\ |J|\ \mathrm{even}}} a_J\gamma_J.
\end{align}
We define its Majorana coefficient norm by
\begin{align}
     \|A\|_{\gamma,1}:=\sum_J |a_J|. 
\end{align}

One important family of fermionic states is known as fermionic Gaussian states. 
They are defined as the Gibbs states of 
some quadratic Hamiltonian $H_0=\sum_{i,j}c_{ij}\gamma_i\gamma_j$, 
i.e., $\exp(-\beta H_0)/\Tr{\exp(-\beta H_0)}$. 
And the 
 fermionic Gaussian state is pure if $\beta \to \infty$, i.e., a ground state of $H_0$.
Besides, as the convex combination is a classical process which does not generate non-free resources, we always 
consider the convex combination of Gaussian states in the 
resource theory. 
Define
\begin{align}
 \FGC_n:= \bigg\{\sum_i p_i \rho_i\mid  p_i\ge 0, \rho_i \text{ is fermionic Gaussian}\bigg\}.
\end{align}
We also denote its normalized version by
$\operatorname{conv}(\FG)_n=\set{A/\Tr{A}:A\in \FGC_n}$, which represents the set of convex combinations of fermionic Gaussian states. 
\begin{lemma}[Single-string Gaussian-cone criterion]
\label{lem:single-string}
For every even set $J$ and every real $\alpha\in[-1,1]$,
\begin{align}
    I+\alpha \gamma_J\in\FGC_n.
\end{align}
Whenever $\Tr{I+\alpha \gamma_J}>0$, its normalization $ (I+\alpha \gamma_J) /\Tr{I+\alpha \gamma_J}$ belongs to $\operatorname{conv}(\FG)_n$.
\end{lemma}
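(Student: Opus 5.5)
The idea is to write $I+\alpha\gamma_J$ as a convex combination of the two operators $I\pm\gamma_J$, and to show that each of these is a positive multiple of a Gaussian state. For $J=\varnothing$ we have $\gamma_J=I$, so $I+\alpha I=(1+\alpha)I$ with $1+\alpha\ge 0$. The maximally mixed state $I/2^n$ is Gaussian, since it is the Gibbs state of $H_0=0$. Hence this case lies in $\FGC_n$.

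For $|J|=2r\ge 2$ we use
\begin{align}
I+\alpha\gamma_J=\tfrac{1+\alpha}{2}(I+\gamma_J)+\tfrac{1-\alpha}{2}(I-\gamma_J),
\end{align}
whose weights are nonnegative for $\alpha\in[-1,1]$. Since $\FGC_n$ is a cone closed under nonnegative combinations, it suffices to show $I\pm\gamma_J\in\FGC_n$.

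Pair the indices of $J=\{j_1<\cdots<j_{2r}\}$ as $(j_1,j_2),\dots,(j_{2r-1},j_{2r})$ and set $P_k:=i\gamma_{j_{2k-1}}\gamma_{j_{2k}}$. By the CAR these are commuting Hermitian involutions. Moreover, $\gamma_J=i^r\gamma_{j_1}\cdots\gamma_{j_{2r}}=P_1P_2\cdots P_r$, because regrouping adjacent pairs needs no reordering. Now expand $I+s\gamma_J$ for $s=\pm1$ as a sum over sign patterns:
\begin{align}
I+s\gamma_J=2^{1-r}\sum_{\substack{\sigma\in\{\pm1\}^r\\ \prod_k\sigma_k=s}}\ \prod_{k=1}^r(I+\sigma_kP_k).
\end{align}
To verify this, expand the product. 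A monomial $\prod_{k\in S}P_k$ gets total coefficient $\sum_{\sigma:\prod\sigma=s}\prod_{k\in S}\sigma_k$. This sum is $2^{r-1}$ for $S=\varnothing$, it is $s\,2^{r-1}$ for $S=[r]$, and it vanishes otherwise, because flipping some $\sigma_k$ with $k\notin S$ together with one with $k\in S$ preserves the constraint but flips the sign.

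Each term $\prod_k(I+\sigma_kP_k)$ is then a positive multiple of a pure Gaussian state. It is $2^r$ times the projector onto the joint $\sigma$-eigenspace of the commuting quadratic operators $P_k$, tensored with identity on the remaining Majorana modes. It therefore equals a limit as $\beta\to\infty$ of $e^{-\beta H_0}$ for $H_0=-\sum_k\sigma_kP_k$, up to normalization. This gives a Gaussian state, with the unpaired modes maximally mixed at infinite temperature.

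The main subtlety is that the paper defines Gaussian states as Gibbs states of a quadratic Hamiltonian at finite $\beta$, with pure states obtained as $\beta\to\infty$. I would handle this in one of two ways. The first is to accept limits as Gaussian, as the paper's remark permits. The second is to avoid limits by replacing each projector with a finite-$\beta$ mixture. Concretely, $\prod_k(I+\sigma_k\tanh(\beta)P_k)\propto e^{\beta\sum\sigma_kP_k}$ is Gaussian, and one can check that the sum over sign patterns with $|\tanh\beta|<1$ still produces the identity with reduced coefficient. The endpoint $|\alpha|=1$ would then require the limit anyway. The normalization claim is immediate: $\Tr{I+\alpha\gamma_J}=2^n(1+\alpha\delta_{J\varnothing})$, and dividing an element of $\FGC_n$ by its positive trace lands in $\operatorname{conv}(\FG)_n$ by definition.
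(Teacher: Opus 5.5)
Your proof is correct and is essentially the paper's argument: the paper also reduces to $\alpha=\pm1$ and splits $I\pm\gamma_J$ into products of pair factors. It does this by induction on $|J|/2$, peeling off the last pair and invoking closure of the cone under graded-CAR products on disjoint supports, and your sum over sign patterns with $\prod_k\sigma_k=s$ is exactly that induction unrolled. Identifying each $\prod_k(I+\sigma_kP_k)$ directly as a zero-temperature limit of the Gibbs state of $-\sum_k\sigma_kP_k$ lets you skip the product-closure lemma, which is a small gain in self-containedness; just note that these terms are mixed rather than pure Gaussian states when $r<n$, as your own description shows.
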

  \begin{proof}
  This criterion is implicit in the recursive pinning construction of
  Ref.~\cite{qprk-k7wn}; for completeness,
  a self-contained proof is given in Appendix~\ref{proof:single-string-sm}.
  \end{proof}

\begin{lemma}[Gaussian-cone closure]
\label{lem:cone-closure}
The following operations preserve $\FGC$:
\begin{enumerate}[label=(\roman*)]
\item positive finite sums, norm-convergent countable sums, and positive averages of integrable $\FGC$-valued operator families over finite measures;
\item ordered graded-CAR products of elements supported on pairwise disjoint even Majorana subspaces;
\item Gaussian congruences
  \[
    A\longmapsto GAG^\dagger,
  \]
where $G$ is an invertible fermionic Gaussian operator, in particular $G=e^{-Q/2}$ for Hermitian quadratic $Q$;
\item taking the reduced state on an ordered fermionic subsystem.
\end{enumerate}
\end{lemma}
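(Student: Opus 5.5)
We prove the four closure properties one at a time. Throughout, we use that $\FGC_n$ is the convex cone generated by fermionic Gaussian states: it is the set of finite nonnegative combinations of Gaussian density operators, including pure Gaussian states as limits. We also use that this set is closed, since the Gaussian states form a compact set in finite dimension (covariance matrices $M$ with $M^TM\le I$). Closedness together with Carathéodory's theorem is the basic tool.

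\emph{(i)} Finite positive sums stay in $\FGC_n$ by definition. For a norm-convergent countable sum, the partial sums lie in $\FGC_n$, so the limit does too by closedness. To see that the cone is closed, write each element as $\lambda\sigma$ with $\lambda=\Tr{A}$ and $\sigma\in\operatorname{conv}(\FG)_n$. Then use that $\operatorname{conv}$ of a compact set in finite dimension is compact, and that the trace is continuous. For integrals over a finite measure, approximate the integrand by simple functions in Bochner norm; each Riemann or simple-function sum lies in the cone, and closedness finishes.

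\emph{(ii)} Let $A=\sum_i p_i\rho_i$ and $B=\sum_k q_k\sigma_k$, with $A$ and $B$ supported on disjoint Majorana sets $S$ and $T$. By bilinearity, their ordered product is $AB=\sum_{i,k}p_iq_k\,\rho_i\sigma_k$, so it suffices to treat a product of two Gaussian states. Even operators on disjoint Majorana sets commute, so $\rho_i\sigma_k$ is positive and is the graded tensor product of the two states. Writing $\rho_i\propto e^{-Q_1}$ and $\sigma_k\propto e^{-Q_2}$ (or limits of these for pure states), we get $\rho_i\sigma_k\propto e^{-(Q_1+Q_2)}$, which is Gaussian with block-diagonal covariance $M_1\oplus M_2$. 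Limits are handled by closedness again. The case of more than two factors follows by induction.

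\emph{(iii)} By linearity and positivity of $A\mapsto GAG^\dagger$, it suffices to show that $G\rho G^\dagger$ is a positive multiple of a Gaussian state for each Gaussian $\rho$. Invertible Gaussian operators form a group, generated by Gaussian unitaries $e^{iQ}$ and exponentials $e^{-Q/2}$ of Hermitian quadratics. For a nondegenerate Gaussian $\rho\propto e^{-Q_1}$, the product $e^{-Q/2}e^{-Q_1}e^{-Q/2}$ is a product of exponentials of elements of the quadratic Lie algebra $\mathfrak{so}(2n,\mathbb C)$. It is therefore $e^{-Q'}$ for some quadratic $Q'$, at least after using the spin representation and the polar decomposition in $SO(2n,\mathbb C)$. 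Since the operator is positive and Hermitian, we can take $Q'$ Hermitian, and the result is a Gaussian state up to normalization. Pure Gaussian states follow by continuity: $G\rho_mG^\dagger\to G\rho G^\dagger$, and the limit of rescaled Gaussians with trace bounded away from zero is again Gaussian.

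\emph{(iv)} The partial trace over a complementary set of modes is linear and positive. Applied to a Gaussian state, it gives the Gaussian state whose covariance matrix is the principal submatrix of the original covariance. Linearity then gives the claim for the whole cone.

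The main obstacle is (iii). We need a clean argument that $e^{-Q/2}e^{-Q_1}e^{-Q/2}$ is again of the form $e^{-Q'}$ with Hermitian quadratic $Q'$. Our first attempt would be the standard Gaussian-operator formalism: Gaussian operators are characterized by Grassmann-Gaussian Fourier transforms, and they are closed under products (as in Bravyi's work), with positivity supplying Hermiticity. If that route is unwieldy, a fallback is the following. Write $e^{-Q/2}=e^{-Q/4}e^{-Q/4}$ and use the positive-matrix square-root argument in the spin representation, where $\exp$ maps the Hermitian part of the Lie algebra onto the positive elements of the group. Degenerate (pure) cases are then handled by limits.
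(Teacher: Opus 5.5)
Your proposal is correct and follows essentially the same route as the paper. For (i) you get closedness of the cone from compactness of the normalized Gaussian base together with trace normalization; for (ii) and (iii) you reduce by linearity to single Gaussian states; and for (iv) you use that the ordered fermionic restriction preserves Majorana moments, which gives the principal-submatrix covariance and the Wick structure. The paper only asserts (ii) and (iii), and your commuting-exponential identity $e^{-Q_1}e^{-Q_2}=e^{-(Q_1+Q_2)}$ together with the polar-decomposition argument in the spin representation (a positive element of the complex spin group is the exponential of a Hermitian quadratic, with degenerate states handled by closedness) supply that missing justification.
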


\begin{proof}
The itemized proof is given in Appendix~\ref{app:cone-closure}.
\end{proof}

\section{Main results}

We first introduce the structure of a sparse fermionic Hamiltonian.
Let
\begin{align}
    H=\sum_{a\in[K]}h_a
\end{align}
be a sum of even local fermionic operators, where
$\|h_a\|_{\gamma,1}\le g_a,J_a:=\Supp(h_a), |J_a|\le R.$
 The interaction graph has vertex set $[K]$ and an edge $(a,b)$ whenever $J_a\cap J_b\neq\varnothing$. The Hamiltonian is $(R,d)$-sparse if this graph has maximum degree at most $d$
 and the support of each term is at most $R$.
All local terms containing a fixed Majorana index are pairwise adjacent. Consequently, at most $d+1$ local terms can contain one fixed Majorana index.
In Theorem~\ref{thm:main}, the Majorana support size bound is imposed only on the perturbing family.

\begin{thm}[Sparse close-to-quadratic convex-Gaussian theorem]\label{thm:main}
Let $R\ge 2$, $\mathcal E>0$, and $\epsilon>0$.  Consider the even fermionic Hamiltonian
\begin{align}
    H=H_0+V=\sum_{b\in\mathcal A_0}q_b+\sum_{a\in\mathcal A_1}h_a,
\end{align}
where $\mathcal A_0$ is the set of labels for quadratic terms, 
and $\mathcal A_1$ is the set of labels for nonquadratic terms.
Assume:
\begin{enumerate}[label=(\roman*)]
\item each $q_b$ is quadratic, Hermitian, and $\|q_b\|_{\gamma,1}\le \mathcal E$;
\item each $h_a$ is even, Hermitian, has Majorana support size at most $R$, and $\|h_a\|_{\gamma,1}\le \epsilon\mathcal E$;
\item the interaction graph  has maximum degree $\leq d$.
\end{enumerate}
If the inverse temperature satisfies
\begin{equation}\label{eq:main-window}
\beta\le \frac{1}{R(d+1)\mathcal E}\log\!\left(1+\frac{1}{72\epsilon}\right),
\end{equation}
then the Gibbs state $\rho_\beta(H) \in \operatorname{conv}(\FG)_n$.
\end{thm}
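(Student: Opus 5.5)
We will prove that the unnormalized operator $e^{-\beta H}$ lies in $\FGC_n$; normalizing then gives $\rho_\beta(H)\in\operatorname{conv}(\FG)_n$. The plan is to peel off the quadratic part by a Gaussian congruence and to treat the nonquadratic part as a perturbation that the Gaussian cone can absorb.

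\textbf{Step 1: interaction picture.} Write
\begin{align}
e^{-\beta H}=e^{-\beta H_0/2}\,W_\beta\,e^{-\beta H_0/2}.
\end{align}
Here $W_\beta$ is given by the symmetrized Dyson series in the imaginary-time perturbations $V(\tau)=e^{\tau H_0}Ve^{-\tau H_0}$, with $\tau\in[-\beta/2,\beta/2]$. Since $e^{-\beta H_0/2}$ is Hermitian and Gaussian, Lemma~\ref{lem:cone-closure}(iii) reduces the task to showing $W_\beta\in\FGC_n$.

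We do not treat $W_\beta$ globally. Instead we expand it in a cluster (polymer) expansion over connected sets of perturbation terms $h_a$. Each $V(\tau)$ term $e^{\tau H_0}h_ae^{-\tau H_0}$ spreads only through the quadratic terms. A single conjugation acts linearly on Majoranas, so it maps $h_a$ to a polynomial of degree at most $R$ in the rotated Majoranas. Its $\|\cdot\|_{\gamma,1}$-norm is bounded by $(e^{2|\tau|(d+1)\mathcal E})^{R}$: each of the at most $R$ Majoranas is rotated by a matrix of $\ell_1$-operator norm at most $e^{2|\tau|(d+1)\mathcal E}$, since each index meets at most $d+1$ quadratic terms. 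This is exactly where the factor $R(d+1)\mathcal E$ in \eqref{eq:main-window} comes from.

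\textbf{Step 2: cone membership of each factor.} Since $\gamma$-support growth under $e^{\tau H_0}$ is not local, a cleaner route is to avoid interaction-picture locality altogether. Use a Trotter/product decomposition
\begin{align}
e^{-\beta H}=\lim_{N\to\infty}\Big(e^{-\beta H_0/2N}e^{-\beta V/N}e^{-\beta H_0/2N}\Big)^N .
\end{align}
Then show that $e^{-\beta H_0/2N}(I-\tfrac{\beta}{N}V)e^{-\beta H_0/2N}$ is in the cone, so that the products telescope with the required margin. Mere cone membership of each factor is not enough, because products of cone elements need not be in the cone. We therefore aim for a stronger, closed property.

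Write the running operator as $G(I+X)G^\dagger$, with $G$ Gaussian and $X$ having Majorana expansion $\sum_J x_J\gamma_J$ satisfying $\sum_J|x_J|\le 1$, after moving the Gaussian part into $G$. Then
\begin{align}
I+X=\sum_J |x_J|\,(I+\operatorname{sgn}(x_J)\gamma_J)+\Big(1-\sum_J|x_J|\Big)I,
\end{align}
which lies in $\FGC_n$ by Lemma~\ref{lem:single-string} and Lemma~\ref{lem:cone-closure}(i), and congruence by $G$ preserves it.

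So the invariant to propagate is: $e^{-\beta H}=G_\beta(I+X_\beta)G_\beta^\dagger$ with $G_\beta=e^{-\beta H_0/2}$ and $X_\beta=W_\beta-I$, and the estimate needed is $\|W_\beta-I\|_{\gamma,1}\le 1$. This requires subtracting the trace-normalization appropriately. The Dyson series gives
\begin{align}
\|W_\beta-I\|_{\gamma,1}\le \exp\Big(\int_{-\beta/2}^{\beta/2}\|V(\tau)\|_{\gamma,1}\,d\tau\Big)-1,
\end{align}
using submultiplicativity of $\|\cdot\|_{\gamma,1}$.

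\textbf{Step 3 (main obstacle).} The global bound in Step 2 scales with the number of terms $|\mathcal A_1|$, which is extensive. So the estimate must be localized. The plan is to write $W_\beta$ as an ordered product over a cluster decomposition, bounding each connected cluster's contribution through the degree $d$ (Kotecký--Preiss type counting). Then use Lemma~\ref{lem:cone-closure}(ii) for disjointly supported factors and a recursive pinning argument in the spirit of Ref.~\cite{qprk-k7wn} for overlapping ones. The per-term condition becomes
\begin{align}
\epsilon\,\big(e^{\beta R(d+1)\mathcal E}-1\big)\le \tfrac{1}{72},
\end{align}
where the constant $72$ absorbs the cluster-counting factors; this condition is exactly \eqref{eq:main-window}.

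Making this localization rigorous is the main difficulty. Two issues must be controlled:
\begin{enumerate*}[label=(\alph*)]
\item nonlocal tails of $e^{\tau H_0}\gamma_je^{-\tau H_0}$, and
\item the fact that the cone is not closed under products of overlapping factors.
\end{enumerate*}
My first attempt would be to bound each rotated term by the per-Majorana $\ell_1$ growth. I would then apply the single-string decomposition term by term to the full expanded series, choosing weights $\propto\epsilon^{|\text{cluster}|}$ so that the sum of $|x_J|$ per pinned index, rather than globally, is at most $1$.
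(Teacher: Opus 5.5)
Your reduction to $W_\beta$ by Gaussian congruence, your use of Lemma~\ref{lem:single-string} for terminal factors, and the growth rate $e^{\beta R(d+1)\mathcal E}$ all agree with the paper. Your target inequality $\epsilon(e^{\beta R(d+1)\mathcal E}-1)\le 1/72$ is exactly the paper's condition $\Delta\le 1/72$ combined with $\Delta\le\epsilon(e^{R(d+1)\mathcal E\beta}-1)$. But there is a genuine gap: Step~3, which carries the whole content of the theorem, is only announced. The constant $72$ is read off from the statement rather than derived, and your issues (a) and (b) are precisely what a proof has to resolve. The one concrete mechanism you propose also cannot work. A per-index bound on the Majorana $\ell_1$ mass of $W_\beta-I$ implies neither cone membership nor even positivity. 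For example, $I+x\sum_{a=1}^N\gamma_{J_a}$ with pairwise disjoint $J_a$ puts mass only $x$ on each index, yet it has eigenvalue $1-Nx<0$ once $Nx>1$. What does lie in $\FGC$ is $\prod_a(I+x\gamma_{J_a})$, whose $\ell_1$ mass $(1+x)^N$ grows exponentially. So the invariant you propagate must be structural, namely a nonnegative mixture of products of affine factors $I+\alpha\gamma_J$ with $|\alpha|\le 1$ and pairwise disjoint supports, not a norm bound. Moreover, $W_\beta$ is built from non-commuting, time-dependent conjugated terms with non-local tails, so it does not factor directly as an ordered product over clusters. A Koteck\'y--Preiss-type expansion would give complex operator-valued polymer weights rather than such a mixture.

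The paper supplies the missing mechanism in four steps.

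(1) It expands $e^{tH_0}Ve^{-tH_0}$ exactly into virtual labels $\lambda=(a,K)$ with activities $a_\lambda=\int_{-\beta/2}^{\beta/2}|z_\lambda(t)|\,dt$. It then bounds the integrated overlap strength $\Delta=\sup_\lambda\sum_{\nu\sim\lambda}a_\nu$ through the local strength $s(t)\le s_0e^{R\kappa_0|t|}$. This is how the tails are controlled, and it needs column sums of $e^{t\mathcal K}$, not only your per-term norm bound.

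(2) It deletes one label at a time, $U_S(\tau_\beta,0)=U_{\widehat S}(\tau_\beta,0)\mathcal R_{\lambda_*,S}$. It writes $\mathcal R_{\lambda_*,S}=\mathbb E[I+bY]$ over rooted overlap histories of length $t$ with $b\le\theta(3\Delta)^t$, using a two-stage Dyson expansion and Cayley's tree count.

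(3) It uses time reflection, $U_S(0,-\tau_\beta)=U_S(\tau_\beta,0)^\dagger$, so each update is a symmetric sandwich $\mathcal R(I+\alpha X)\mathcal R^\dagger$. With two independent samples this splits into seven Hermitian-part branches, which keeps every factor of the form $I+\alpha\gamma_J$ with real $\alpha$. One-sided Dyson terms carry complex coefficients and phases and would not give this.

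(4) It maintains the overlap-potential invariant $|\alpha|\le 2^{-\Phi_S(X)}$ using branch probabilities $2^{-\theta}$ and $(1-2^{-\theta})/6$. The requirement $12(3\Delta)^t\le 2^{-t}$ for all $t\ge 1$ is exactly where $\Delta\le 1/72$ comes from. Factors of zero potential commute with all later propagators, which yields the disjoint supports at termination.
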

\begin{proofsketch}
The complete proof is given in Appendix~\ref{app:sparse-proof}. Here, we provide a sketch proof. Let us expand the
interaction-picture perturbation exactly in the Majorana basis. At each deletion,
the support-decoupling construction records the selected remaining label, samples
two independent connected propagator histories, and samples one of seven algebraic
branches. Let $\Omega_{\mathrm{term}}$ be the countable set of complete terminal
records and let $p$ be the resulting probability distribution. The unnormalized
Gibbs state then has the following exact representation.
\begin{equation}
  e^{-\beta H}
  =
  \mathbb E_{\omega\sim p}\!\left[
    e^{-\beta H_0/2}
    \prod_j\bigl(I+\alpha_{\omega,j}\gamma_{J_{\omega,j}}\bigr)
    e^{-\beta H_0/2}
  \right],
  \label{eq:terminal-representation}
\end{equation}
where every $\omega\in\Omega_{\mathrm{term}}$ records a full deletion history,
$
  |\alpha_{\omega,j}|\leq1
$
and the nonempty supports $J_{\omega,j}$ are pairwise disjoint.
Lemma~\ref{lem:single-string} implies that each terminal affine Majorana factor is in the fermionic Gaussian cone. Lemma~\ref{lem:cone-closure} then applies to the disjoint graded-CAR product, the Gaussian congruence by $e^{-\beta H_0/2}$, and the positive expectation. Thus $e^{-\beta H}\in\FGC$.
\end{proofsketch}

\begin{proposition}[Asymptotic tightness]\label{prop:tight}
There exists a family of $(R,d)$-sparse fermionic Hamiltonians with fixed $R=4$, $d=4$, and quadratic local scale $\mathcal E=1$, whose nonquadratic local scale is $\epsilon$, such that 
the Gibbs state is not convex-Gaussian whenever
\begin{align}
    \beta>\Omega\left(\log\left(\frac{1}{\epsilon}\right)\right).
\end{align}
\end{proposition}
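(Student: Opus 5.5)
Our plan is to build a small, essentially decoupled family of Hamiltonians whose Gibbs state approaches a non-Gaussian pure state as $\beta$ grows, and then show that this state stays a bounded distance from $\operatorname{conv}(\FG)$.

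\emph{Construction.} We take a disjoint union of identical blocks, each on $4$ modes ($8$ Majoranas). Since the reduced state on a block is a partial trace, Lemma~\ref{lem:cone-closure}(iv) lets us work with a single block: if the global Gibbs state were convex-Gaussian, so would be every block marginal, and for a product of commuting blocks that marginal is the block Gibbs state. On one block we set
\begin{align}
H=H_0+\epsilon V,
\end{align}
where $H_0$ is quadratic with local terms of $\gamma$-norm at most $1$, and $V$ is a sum of quartic strings of support $4$. We choose them so that $H_0$ has a degenerate ground space containing a non-Gaussian state. A concrete choice is $H_0=\sum_j n_j$ restricted to a fixed-parity sector with two particles in four modes, or a Kitaev-type pairing with zero modes. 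The perturbation $\epsilon V$ then splits this degeneracy by an amount $c\epsilon$, with $c>0$ a constant, and its unique lowest state $\ket{\psi}$ is non-Gaussian, for instance a superposition $(\ket{1100}+\ket{0011})/\sqrt2$. Because the block is fixed, the interaction degree is at most $4$ and $R=4$.

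\emph{Witness.} We then fix a Hermitian witness $W$ (an observable) with
\begin{align}
\Tr{W\rho}\le w_0\quad\text{for all Gaussian }\rho,\qquad \bra{\psi}W\ket{\psi}=w_0+\delta,\quad \delta>0.
\end{align}
The natural choice is $W=\ket{\psi}\!\bra{\psi}$, for which $w_0=\max_{\rho\in\FG}\bra{\psi}\rho\ket{\psi}<1$. This maximum is attained and is strictly less than $1$ by compactness of the Gaussian set in fixed dimension together with non-Gaussianity of $\psi$; for our specific $\psi$ it can be computed as $1/2$. By linearity the bound $\Tr{W\rho}\le w_0$ extends to all of $\operatorname{conv}(\FG)$.

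\emph{Thermal estimate.} The Gibbs weight is concentrated as follows. The gap between $\ket{\psi}$ and the rest of the split ground manifold is $c\epsilon$, and the gap to states above the $H_0$ ground space is $\Theta(1)$; the block dimension $16$ is fixed. Hence
\begin{align}
\bra{\psi}\rho_\beta\ket{\psi}\ge 1-15\,e^{-\beta c\epsilon}.
\end{align}
The step we expect to be the main obstacle is matching the claimed $\log(1/\epsilon)$ scaling rather than $1/\epsilon$: a perturbative splitting only makes the witness fire once $\beta\gtrsim 1/\epsilon$. To obtain a threshold of order $\log(1/\epsilon)$ we would instead choose $H_0$ nondegenerate with an $O(1)$ gap and a Gaussian ground state $\ket{\phi_0}$, and let $\epsilon V$ admix a non-Gaussian component of amplitude $\Theta(\epsilon)$. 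The zero-temperature state then has non-Gaussian "robustness" of order $\epsilon^2$, while thermal excited-state mixing contributes of order $e^{-\beta\Delta}$.

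We would use a witness tailored to the admixture, for example $W$ equal to the projector onto the perturbed ground state minus the best Gaussian overlap. The Gaussian bound is then $1-\Theta(\epsilon^2)$, and the thermal value is at least $1-\Theta(\epsilon^2)\cdot(\text{smaller constant})-O(e^{-\beta\Delta})$. The witness therefore fires once $e^{-\beta\Delta}\lesssim\epsilon^2$, that is for $\beta>\Omega(\log(1/\epsilon))$. Establishing the precise second-order comparison is the delicate point: the perturbed ground state must sit strictly farther from the Gaussian set, by order $\epsilon^2$, than any Gaussian state. We would verify this by explicit diagonalization on the $8$-Majorana block, using that the Gaussian manifold near $\ket{\phi_0}$ is locally parametrized by quadratic generators whose first-order orbit cannot reach the quartic direction.
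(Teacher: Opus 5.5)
Your final route is the paper's strategy: a gapped quadratic $H_0$ with Gaussian ground state, a quartic perturbation that admixes a non-Gaussian component of amplitude $\Theta(\epsilon)$, the witness $\ket{\psi_\epsilon}\!\bra{\psi_\epsilon}$ with $\bra{\psi_\epsilon}\rho_\beta\ket{\psi_\epsilon}\ge 1-15e^{-\beta\Delta}$, and reduction to one block via Lemma~\ref{lem:cone-closure}(iv). You are also right that the degenerate-ground-space route only gives a $1/\epsilon$ threshold. As written, however, the proposal does not prove the statement, for two reasons.

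First, the only explicit Hamiltonian you give belongs to the abandoned route, and it is ill-posed even there. The operator $\sum_j n_j$ has the unique ground state $\ket{0000}$, and you cannot restrict to a two-particle sector because the Gibbs state lives on the full Fock space. For the working route, $H_0$ and $V$ are never fixed. So $R=4$, $d=4$, $\mathcal E=1$, an $\epsilon$-uniform gap, and even the non-Gaussianity of the perturbed ground state are all unverified.

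Second, and decisively, you defer the one nontrivial input as ``the delicate point'': a bound $\max_{\sigma\in\operatorname{conv}(\FG)}\bra{\psi_\epsilon}\sigma\ket{\psi_\epsilon}\le 1-c\,\epsilon^k$ with $c,k$ independent of $\epsilon$. Compactness gives $w_0<1$ only for each fixed $\epsilon$, with no rate. Your tangent-space remark is an infinitesimal statement at $\ket{\phi_0}$; turning it into a uniform bound over all Gaussian states, near and far, with an explicit constant is exactly the missing work. Without such a rate, the condition $e^{-\beta\Delta}\lesssim 1-w_0$ cannot be turned into $\beta=O(\log(1/\epsilon))$. Any polynomial rate would do, so what you need is an explicit bound, not second-order precision. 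Also, with $W$ the exact ground-state projector there is no $\epsilon^2$ loss on the thermal side.

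\textbf{How the paper closes both gaps.} It takes $H_\epsilon=\sum_{j=1}^4 i\gamma_{2j-1}\gamma_{2j}-\epsilon\gamma_1\gamma_3\gamma_5\gamma_7$.
\begin{enumerate}[label=(\roman*)]
\item The quartic string shares exactly one Majorana with each dimer, so it anticommutes with $H_0$ and $H_\epsilon^2=H_0^2+\epsilon^2 I$.
\item This gives the exact spectrum, the gap $\sqrt{16+\epsilon^2}-\sqrt{4+\epsilon^2}\to 2$, and the ground state $\cos\theta_\epsilon\ket{0000}+\sin\theta_\epsilon\ket{1111}$ with $\sin 2\theta_\epsilon=\epsilon/\sqrt{16+\epsilon^2}$.
\item The interaction graph is a 4-leaf star, which gives $R=d=4$ and $\mathcal E=1$.
\item The Oszmaniec--Gutt--Ku\'s four-mode criterion gives exactly $F_{\rm Gauss}=\tfrac12\bigl(1+4/\sqrt{16+\epsilon^2}\bigr)$. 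The deficit is $\epsilon^2/64+O(\epsilon^4)$ and the threshold is $\log(1/\epsilon)+\tfrac12\log 960+o(1)$.
\end{enumerate}

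If you want to avoid that external result, an elementary argument gives the same number. Write an even pure Gaussian state as $a_0\ket{0000}+\sum_{i<j}a_{ij}c_i^\dagger c_j^\dagger\ket{0000}+a_4\ket{1111}$. It satisfies the pure-spinor constraint $a_0a_4=\mathrm{Pf}(a)$. Since $|\mathrm{Pf}(a)|\le\tfrac12\sum_{i<j}|a_{ij}|^2$, normalization forces $|a_0|+|a_4|\le 1$. Hence its overlap with $\cos\theta\ket{0000}+\sin\theta\ket{1111}$, for $0<\theta<\pi/4$, is at most $\cos\theta$. Odd Gaussian states have zero overlap, and mixed Gaussian states are mixtures of pure ones, so the bound extends to all of $\operatorname{conv}(\FG)$.
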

\begin{proofsketch}
Let us consider 
a specific even  fermionic Hamiltonian
as follows
\begin{align}
  H_\epsilon=\sum_{j=1}^{4}i\gamma_{2j-1}\gamma_{2j}
 -\epsilon\gamma_1\gamma_3\gamma_5\gamma_7.  
\end{align}
Then we can show that the Gibbs state for $H_\epsilon$ is not convex-Gaussian whenever
$ \beta>\Omega\left(\log\left(\frac{1}{\epsilon}\right)\right).$
The complete proof
is given in Appendix~\ref{app:tightness}.
\end{proofsketch}

Theorem~\ref{thm:main}  and Ref.~\cite{qprk-k7wn} address complementary regimes. Ref.~\cite{qprk-k7wn} establishes a uniform, system-size-independent convex-Gaussian regime for bounded-degree local fermionic Hamiltonians. Theorem~\ref{thm:main} instead provides a perturbative bound showing that convex-Gaussianity extends to parametrically lower temperatures near the free-fermion limit. For fixed locality and degree, the inverse-temperature window scales as $\beta \mathcal{E} = \Theta(\log(1/\epsilon))$ as the nonquadratic interaction strength $\epsilon \to 0$.
Our result thus accesses regimes beyond what generic high-temperature expansions can capture.  Proposition~\ref{prop:tight} 
further shows that the logarithmic dependence on $\epsilon$ is
asymptotically optimal up to universal constants.

We now apply our results to an important class of fermionic Hamiltonians, namely the Fermi--Hubbard model~\cite{Scalapino2007,PhysRevX.5.041041,RevModPhys.80.885,Mazurenko2017}.

\emph{Fermi--Hubbard model.}
Consider a finite graph with vertex set $\Lambda$ ($|\Lambda|=L$), edge set $E$, and maximum degree $D$.
The Fermi--Hubbard Hamiltonian in the centered convention is given as follows
\begin{equation}\label{eq:hubbard}
H_{U,t}=tH_{\rm hop}+UV_{\rm int},
\end{equation}
where 
\begin{align}
  H_{\rm hop}&:=-\sum_{\{x,y\}\in E}\ \sum_{\sigma\in\{\uparrow,\downarrow\}}
  (c_{x\sigma}^\dagger c_{y\sigma}+c_{y\sigma}^\dagger c_{x\sigma}),\\ 
  V_{\rm int}&:=\frac14\sum_{x\in\Lambda}V_x,\quad  V_x:=(2n_{x\uparrow}-I)(2n_{x\downarrow}-I).
\end{align}

For fixed $D$, let us first consider the weak-coupling regime, i.e., $|U|/|t|\to0$.
When $U=0$, the Hamiltonian is quadratic, so the Gibbs state belongs to $\operatorname{conv}(\FG)_{2L}$ for arbitrarily low temperatures $T$.
In addition, we have the following result for the general weak-coupling regime.

\begin{proposition}[Fermi--Hubbard model in weak-coupling regime]
\label{prop:hubbard-weak}
Consider the Fermi--Hubbard Hamiltonian~\eqref{eq:hubbard} in the weak-coupling regime.
If the inverse temperature satisfies
\begin{align}
    \beta\le\frac{1}{4(2D+1)|t|}
 \log\!\left(1+\frac{|t|}{18|U|}\right),
 \label{eq:weak-window}
\end{align}
then the Gibbs state $\rho_{\beta}(H_{U,t})\in  \operatorname{conv}(\FG)_{2L} $. 
\end{proposition}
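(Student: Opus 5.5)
The plan is to apply Theorem~\ref{thm:main} directly. We write $H_{U,t}$ in the form $H_0+V$ and read off $R$, $d$, $\mathcal E$ and $\epsilon$.

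\emph{Quadratic part.} The quadratic part is $H_0=tH_{\rm hop}$. We split it into one term per edge and per spin,
\[
q_{\{x,y\},\sigma}=-t\,(c_{x\sigma}^\dagger c_{y\sigma}+c_{y\sigma}^\dagger c_{x\sigma}).
\]
In Majorana form this equals $-\tfrac{it}{2}(\gamma_{x\sigma,1}\gamma_{y\sigma,2}-\gamma_{x\sigma,2}\gamma_{y\sigma,1})$, up to sign conventions. Its support is $\{x\sigma,y\sigma\}$, which is four Majorana indices, and its coefficient norm is $|t|$. So we take $\mathcal E=|t|$.

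\emph{Interaction part.} In the centered convention, $2n-I=-i\gamma_1\gamma_2$ for each mode. Hence
\[
V_x=(-i\gamma_{x\uparrow,1}\gamma_{x\uparrow,2})(-i\gamma_{x\downarrow,1}\gamma_{x\downarrow,2}),
\]
a single normalized Majorana string of weight four. The term $h_x=\tfrac U4 V_x$ therefore has support size $R=4$ and coefficient norm $|U|/4$. This gives $\epsilon=|U|/(4|t|)$. The centering is what makes $V_x$ a single string with no quadratic or identity pieces.

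\emph{Degree of the interaction graph.} A hopping term on edge $\{x,y\}$ with spin $\sigma$ meets three kinds of terms:
\begin{itemize}
\item the other hopping terms with the same spin on edges through $x$ or $y$, at most $2(D-1)$ of them;
\item the on-site terms $h_x$ and $h_y$, which are 2 terms;
\item no hopping terms of opposite spin.
\end{itemize}
That gives degree at most $2D$. An on-site term $h_x$ meets the hopping terms of both spins on edges at $x$, at most $2D$ of them. Hence $d=2D$ and $d+1=2D+1$.

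\emph{Substitution.} Plugging these values into \eqref{eq:main-window} gives
\[
\beta\le \frac{1}{4(2D+1)|t|}\log\!\left(1+\frac{1}{72\cdot |U|/(4|t|)}\right)=\frac{1}{4(2D+1)|t|}\log\!\left(1+\frac{|t|}{18|U|}\right),
\]
which is exactly \eqref{eq:weak-window}. The Gibbs state on $2L$ modes is then in $\operatorname{conv}(\FG)_{2L}$. The case $U=0$ is trivial, since the Gibbs state of a quadratic Hamiltonian is Gaussian.

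\emph{Main obstacle.} The only delicate step is the bookkeeping of the degree and of $\mathcal E$. The choice of decomposition matters: splitting each hop into two Majorana bilinears would change both $\mathcal E$ and $d$. So I would fix the per-edge, per-spin grouping and carefully verify that it yields exactly $d=2D$ with $\mathcal E=|t|$. I would also check the Majorana normalization, namely $\|q\|_{\gamma,1}=|t|$ and $\|h_x\|_{\gamma,1}=|U|/4$, because a factor-of-two slip there would spoil the constant 18.
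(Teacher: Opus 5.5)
Your proposal is correct and matches the paper's proof essentially verbatim. Like the paper, you apply Theorem~\ref{thm:main} with $R=4$, $\mathcal E=|t|$ and $\epsilon=|U|/(4|t|)$, and you use the same degree count: $(\deg_G(x)-1)+(\deg_G(y)-1)+2\le 2D$ for a hopping term and $2\deg_G(x)\le 2D$ for an on-site term, giving $d\le 2D$. (In the paper's convention one has $2n_j-I=i\gamma_{2j-1}\gamma_{2j}$ rather than $-i\gamma_1\gamma_2$, but this sign cancels in $V_x$ and affects no coefficient norm.)
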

\begin{proof}
The result comes from Theorem ~\ref{thm:main} by choosing
$R=4$, $\mathcal E=|t|$, $d\leq 2D$, and
$\epsilon=\frac{|U|}{4|t|}$. The parameter verification is given in
Appendix~\ref{app:weak-hubbard}.

\end{proof}

As $U \to 0$, the above proposition establishes that the threshold on $\beta$ grows arbitrarily high, meaning the Gibbs state of the interacting fermions can remain convex-Gaussian even at near-zero temperatures.

We now turn to the Fermi--Hubbard model in the strong-coupling regime ($|t| \ll |U|$)~\cite{DUPUIS_2000,PhysRevA.105.033317}. Starting from the atomic limit $t = 0$, where the Hamiltonian contains only nonquadratic terms, we establish that the corresponding Gibbs state is a convex combination of Gaussian states for all inverse temperatures $\beta$.

\begin{proposition}[Atomic limit]
\label{prop:atomic}
Consider the Fermi--Hubbard model in \eqref{eq:hubbard}. If the hopping
$t=0$, then the Gibbs state $ \rho_\beta(H_{U,0})
  \in
  \operatorname{conv}(\FG)_{2L}$
  for any inverse temperature $\beta$.

\end{proposition}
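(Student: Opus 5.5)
At $t=0$ the Hamiltonian is $H_{U,0}=\frac{U}{4}\sum_{x\in\Lambda}V_x$. The plan is to exploit that the $V_x$ for different sites live on pairwise disjoint sets of Majorana modes and commute, so the Gibbs operator factorizes into a graded-CAR product of single-site factors. Each single-site factor will then be matched to the criterion of Lemma~\ref{lem:single-string}.

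First, rewrite $V_x$ in Majorana form. With $\gamma_{x\sigma,1}=c_{x\sigma}+c_{x\sigma}^\dagger$ and $\gamma_{x\sigma,2}=-i(c_{x\sigma}-c_{x\sigma}^\dagger)$ one has $2n_{x\sigma}-I=i\gamma_{x\sigma,1}\gamma_{x\sigma,2}$, up to a sign fixed by the convention. Hence
\begin{align}
V_x=(i\gamma_{x\uparrow,1}\gamma_{x\uparrow,2})(i\gamma_{x\downarrow,1}\gamma_{x\downarrow,2})=\pm\gamma_{J_x},
\end{align}
where $J_x$ is the set of the four Majorana indices at site $x$. By the properties of normalized strings, $\gamma_{J_x}$ is Hermitian and satisfies $\gamma_{J_x}^2=I$. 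For $x\neq y$ the supports $J_x,J_y$ are disjoint and both strings are even, so $V_x$ and $V_y$ commute. Therefore
\begin{align}
e^{-\beta H_{U,0}}=\prod_{x\in\Lambda}e^{-\beta U V_x/4}.
\end{align}

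Second, since $V_x^2=I$, each factor expands as
\begin{align}
e^{-\beta U V_x/4}=\cosh(\beta U/4)\,I-\sinh(\beta U/4)\,V_x=\cosh(\beta U/4)\bigl(I+\alpha_x\gamma_{J_x}\bigr),
\end{align}
with $\alpha_x=\mp\tanh(\beta U/4)$. Because $|\tanh|<1$ for every real $\beta U$, we have $\alpha_x\in[-1,1]$. Lemma~\ref{lem:single-string} then gives $I+\alpha_x\gamma_{J_x}\in\FGC_{2L}$, and multiplying by the positive scalar $\cosh(\beta U/4)$ keeps it in the cone.

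Third, the factors are supported on pairwise disjoint even Majorana subspaces, so their ordered product lies in $\FGC_{2L}$ by Lemma~\ref{lem:cone-closure}(ii). Since all factors are even, the ordering does not matter. The trace of $e^{-\beta H_{U,0}}$ is positive, and normalizing places $\rho_\beta(H_{U,0})$ in $\operatorname{conv}(\FG)_{2L}$ for every $\beta$, including negative $U$.

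The main obstacle is the bookkeeping in the first two steps. One must check that $V_x$ is exactly $\pm\gamma_{J_x}$ under the paper's normalization $\gamma_J=i^r\gamma_{j_1}\cdots\gamma_{j_{2r}}$, and that the identification of a plain product of commuting even operators with the "ordered graded-CAR product" in Lemma~\ref{lem:cone-closure}(ii) is legitimate. Both checks are routine: the sign is absorbed into $\alpha_x$, and even operators on disjoint supports commute, so no grading signs arise.
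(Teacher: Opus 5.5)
Your proposal is correct and follows the paper's proof essentially step for step: factor $e^{-\beta H_{U,0}}$ over sites, write each factor as $\cosh(\beta U/4)\bigl(I-\tanh(\beta U/4)V_x\bigr)$ with $V_x=\pm\gamma_{J_x}$, apply Lemma~\ref{lem:single-string} site by site, and combine the factors using the disjoint-support closure of Lemma~\ref{lem:cone-closure}(ii). One wording fix: apply the single-string lemma in the local two-mode cone of each site before invoking (ii), as the paper does (``its supported fermionic Gaussian cone''), since a product of arbitrary elements of $\FGC_{2L}$ need not stay in the cone.
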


\begin{proof}
Since $t=0$, the Fermi--Hubbard Hamiltonian in \eqref{eq:hubbard} 
will be simplified to $H_{U,0}
  =
  \frac U4\sum_{x\in\Lambda}V_x,$
where the operators $V_x$ have pairwise disjoint site supports. For each site,
\begin{align}
      e^{-\beta UV_x/4}
  =
  \cosh\!\left(\frac{\beta U}{4}\right)
  \left[
    I-
    \tanh\!\left(\frac{\beta U}{4}\right)V_x
  \right].
\end{align}
Since $V_x$ is a normalized even Majorana string up to a sign and
$ \left|\tanh\!\left(\beta U/4\right)\right|\leq1$,
Lemma~\ref{lem:single-string} implies that each site factor is in its supported fermionic Gaussian cone. The site supports are disjoint, so Lemma~\ref{lem:cone-closure} implies that their ordered graded-CAR product also belongs to the fermionic Gaussian cone. Hence, the Gibbs state $\rho_\beta(H_{U,0})$ is a convex combination of Gaussian states for any inverse temperature $\beta$.
\end{proof}

The strong-coupling regime differs conceptually from the weak-coupling setting of Theorem~\ref{thm:main}. In the latter, convex-Gaussianity is obtained by perturbing a quadratic, free-fermion reference Hamiltonian. 
Here, in contrast, the unperturbed Hamiltonian $UV_{\mathrm{int}}$ features quartic interactions. 
The argument of
Theorem~\ref{thm:main} cannot be applied simply by exchanging the roles of the
quadratic and interacting parts. Although Proposition~\ref{prop:atomic} shows that
the atomic Gibbs state is convex-Gaussian for every inverse
temperature, congruence by a general convex-Gaussian operator does
not in general preserve the fermionic Gaussian cone.
The strong-coupling result therefore relies on a different
mechanism.  The commuting on-site structure of
$V_{\mathrm{int}}$ implies a model-specific local
Gaussian-cone property: certain Majorana-string corrections can be combined with the atomic Gibbs factors while staying inside the fermionic Gaussian cone.  Combined with a site-based support-decoupling construction,
this allows the terminal Majorana factors to be grouped into
pairwise disjoint site blocks and treated locally.  Theorem~\ref{thm:hubbard-strong} shows
that this atomic Gaussian-cone structure is stable under
sufficiently weak hopping.  Thus convex-Gaussianity can persist
perturbatively around an interacting, nonquadratic reference
Hamiltonian, by a mechanism distinct from perturbation around the
free-fermion limit.

\begin{thm}[Strong-coupling Fermi--Hubbard theorem]
\label{thm:hubbard-strong}
Consider the Fermi--Hubbard model~\eqref{eq:hubbard} with $U\neq0$. If the inverse temperature $\beta$ satisfies 
\begin{equation}
  \frac{8D|t|}{|U|}
  \left(e^{\beta|U|/2}-1\right)
  \leq
  \frac1{72},
  \label{eq:strong-condition}
\end{equation}
then 
the Gibbs state 
$\rho_\beta(H_{U,t})
  \in
  \operatorname{conv}(\FG)_{2L}.$
In addition, for $t\neq0$, condition \eqref{eq:strong-condition} is equivalent to
\begin{equation}
  \beta
  \leq
  \frac2{|U|}
  \log\!\left(1+\frac{|U|}{576D|t|}\right).
  \label{eq:strong-window}
\end{equation}
\end{thm}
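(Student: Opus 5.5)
Our plan is to mirror the architecture of the proof of Theorem~\ref{thm:main}, with the atomic Hamiltonian $H_{\rm at}:=UV_{\rm int}$ as the reference and the hopping $tH_{\rm hop}$ as the perturbation. Since the on-site terms $V_x$ commute with each other, we write
\begin{align}
e^{-\beta H_{U,t}}=e^{-\beta H_{\rm at}/2}\,W_\beta\,e^{-\beta H_{\rm at}/2},
\end{align}
where $W_\beta$ is the Dyson series in the interaction-picture hopping $tH_{\rm hop}(s)=e^{sH_{\rm at}}tH_{\rm hop}e^{-sH_{\rm at}}$ with $|s|\le\beta/2$. Each bond hopping term $c^\dagger_{x\sigma}c_{y\sigma}+\mathrm{h.c.}$ is a sum of two quadratic Majorana strings supported on sites $x,y$. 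Conjugation by $e^{sUV_x/4}$ acts on a single Majorana $\gamma$ at site $x$ as $\cosh(sU/2)\gamma\pm\sinh(sU/2)\gamma V_x$, because $V_x$ anticommutes with $\gamma$. The rotated hopping term is therefore still supported on the two sites $x,y$, and its Majorana coefficient norm grows by at most a factor $e^{|s||U|/2}$ per site. Integrating over the imaginary-time interval gives a per-bond weight of order $(|t|/|U|)(e^{\beta|U|/2}-1)$. This is the origin of the combination in \eqref{eq:strong-condition}; the factor $8D$ counts the two spins, the two Majorana strings per hopping term, and the $2D$ neighboring bonds that touch a given site block.

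Next we run the support-decoupling construction from Appendix~\ref{app:sparse-proof} at the level of \emph{sites} rather than Majorana indices. Each site $x$ carries the four Majoranas of $(x,\uparrow),(x,\downarrow)$, and the interaction graph has bonds as vertices with degree at most $2D$. We expand exactly: delete a remaining bond, sample two connected propagator histories, and sample one of the seven algebraic branches. We reuse the same probabilistic bookkeeping, so the constant $1/72$ arises exactly as in the proof of Theorem~\ref{thm:main}, now with branch weights controlled by the per-bond weight above. The outcome should be an exact representation
\begin{align}
e^{-\beta H_{U,t}}=\mathbb E_{\omega}\Bigl[e^{-\beta H_{\rm at}/2}\prod_{B}\bigl(I+\alpha_{\omega,B}\Gamma_{\omega,B}\bigr)e^{-\beta H_{\rm at}/2}\Bigr],
\end{align}
where the blocks $B$ are pairwise disjoint unions of sites, $|\alpha_{\omega,B}|\le1$, and each $\Gamma_{\omega,B}$ is a normalized Majorana string supported in $B$. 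Condition \eqref{eq:strong-condition} is exactly what makes the sampling probabilities sum to at most one.

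The final step, and the one we expect to be hardest, is cone membership. The outer factors are no longer Gaussian, so Lemma~\ref{lem:cone-closure}(iii) does not apply. Instead we group everything by site blocks: $e^{-\beta H_{\rm at}/2}$ factorizes over sites, so each block contributes an operator of the form
\begin{align}
\prod_{x\in B}e^{-\beta UV_x/8}\,\bigl(I+\alpha\Gamma\bigr)\prod_{x\in B}e^{-\beta UV_x/8}.
\end{align}
The key local lemma to prove is that this block operator lies in $\FGC$ whenever $|\alpha|\le1$. To prove it, we write $e^{-\beta UV_x/8}=\cosh(\beta U/8)\,[I-\tau V_x]$ with $|\tau|\le1$, and sort $\Gamma$ according to whether it commutes or anticommutes with each $V_x$. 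Commuting parts combine with the atomic factors into products of mutually commuting strings $(I-\tau V_x)^2(I+\alpha\Gamma)$. Anticommuting parts produce $(I-\tau^2)\,\Gamma$-type terms, and $1-\tau^2\le1$ only shrinks the coefficient. In both cases we aim to rewrite the block as a positive combination of products of affine factors $I+\alpha'\gamma_{J}$ that are either disjoint or commuting with disjoint refinements, and then invoke Lemma~\ref{lem:single-string}. Commuting strings $\gamma_J,\gamma_K$ satisfy $(I+a\gamma_J)(I+b\gamma_K)=\tfrac12[(1+ab\,\cdot)\dots]$, and the product can be re-expanded through the joint $\pm1$ eigenprojector decomposition. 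This re-expansion is where we expect the real difficulty.

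Once the local lemma holds, Lemma~\ref{lem:cone-closure}(ii) gives the ordered product over disjoint blocks, and (i) gives the expectation, so $e^{-\beta H_{U,t}}\in\FGC$. Normalizing yields $\rho_\beta(H_{U,t})\in\operatorname{conv}(\FG)_{2L}$. The equivalence with \eqref{eq:strong-window} follows by solving \eqref{eq:strong-condition} for $\beta$.
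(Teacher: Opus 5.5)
Your architecture matches the paper's. You use the atomic interaction picture with $e^{|U||s|}$ growth of the conjugated hopping strings, support decoupling with overlaps measured through the site map, and a reduction to the local claim $\mathcal D_S(I+\alpha\Gamma)\mathcal D_S\in\FGC$, where $\mathcal D_S:=\prod_{x\in S}e^{-\beta UV_x/8}$, on pairwise disjoint site blocks $S$ (your $B$). One correction: the history sampler and the seven-way branch choice are normalized for every $\Delta_{\mathrm{site}}>0$. Condition \eqref{eq:strong-condition} is instead what keeps the reweighted coefficients below $2^{-\Phi}$. So at termination you know only $|\alpha|\le1$, and the local claim must hold all the way up to $|\alpha|=1$.

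The genuine gap is this local claim. You leave it open, and the route you sketch cannot prove it once $\Gamma$ has odd degree on some site. Take $\Gamma=i\gamma_p\gamma_q$ with $p$ at site $x$ and $q$ at site $y$, and put $t:=\tanh(\beta U/4)\neq0$. Since $\Gamma$ anticommutes with $V_x$ and $V_y$, the block equals $\cosh^2(\beta U/4)\,[(I-tV_x)(I-tV_y)+\alpha(1-t^2)\Gamma]$.

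Everything your sketch allows lies in the cone generated by pair products $\prod_k(I\pm i\gamma_{a_k}\gamma_{b_k})$ over disjoint pairs of the fixed Majorana basis. This covers positive combinations of products of disjoint (or commuting, disjointly refinable) affine factors, re-expanded through joint eigenprojectors and Lemma~\ref{lem:single-string}. In such a product, every string with nonzero coefficient is a union of pairs and has coefficient of modulus one. A term contributing to $\Gamma$ must contain the pair $\{p,q\}$. Its $V_x$ coefficient then vanishes, because $V_x$ contains $p$ but not $q$. Comparing the coefficients of $I$, $\Gamma$ and $V_x$ therefore forces $|\alpha|(1-t^2)+|t|\le1$, i.e.\ $|\alpha|\le 1/(1+|t|)$, which fails near $|\alpha|=1$. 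Your remark that anticommuting parts only pick up a factor at most one does not rescue this. That factor places the relevant $2\times2$ blocks exactly on the boundary of positivity when $|\alpha|=1$.

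The missing idea is the paper's odd-site lemma, which leaves the fixed string basis.
\begin{itemize}
\item Suppose $\gamma_J$ has odd degree at both $x$ and $y$. Then it sends each occupation configuration $f$ to $\zeta(f)|g(f)\rangle$, where $g(f)$ differs from $f$ in exactly one mode at each site. Both site parities $p_x(f),p_y(f)\in\{\pm1\}$ flip, so $w(g(f))=w(f)^{-1}$ for $w(f)=e^{-a_xp_x(f)-a_yp_y(f)}$.
\item Hence $e^{-a_xV_x-a_yV_y}+r\gamma_J$ is a direct sum of blocks $\begin{pmatrix}w&r\bar\zeta\\ r\zeta&w^{-1}\end{pmatrix}$ with determinant $1-r^2\ge0$. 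Each block equals $|v_f\rangle\langle v_f|+|u_f\rangle\langle u_f|$ with $v_f,u_f\in\operatorname{span}\{|f\rangle,|g(f)\rangle\}$.
\item Every such vector is a pure Gaussian state: a two-mode BCS state or a one-particle Slater determinant on the two flipped modes. For unequal amplitudes it is not a projector built from strings of the original basis, which is why your route misses it.
\item Even-degree sites are handled by the one-site lemma: a positive operator commuting with $V_x$ is block diagonal in the two local parity sectors, all of whose pure states are Gaussian. This is the case your joint-eigenprojector idea does cover.
\item A string touching many sites is reduced to these local cases. Group even-degree sites as singletons, pair the odd-degree sites, and average over random signs $\chi_i$ with prescribed product:
\[
\mathbb E_\chi\prod_i\bigl(B_i+|\alpha|^{1/m}\chi_iC_i\bigr)=\mathcal D_S(I+\alpha\gamma_J)\mathcal D_S .
\]
Here $B_i$ is the atomic density and $C_i$ the dressed local string of block $i$. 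Each $B_i+rC_i$ with $|r|\le1$ lies in $\FGC$ by one of the two lemmas, and disjoint-product closure finishes.
\end{itemize}
Without these ingredients your terminal sandwiches are not shown to lie in $\FGC$. The argument is therefore incomplete exactly at the step you flagged as hardest.
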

\begin{proofsketch}
The complete proof is given in Appendix~\ref{app:strong-proof}. 
The proof is similar to that of Theorem \ref{thm:main}.
One key point is to show that the unnormalized 
Gibbs state can be written as 
\begin{align}
   e^{-\beta H_{U,t}}=\mathbb{E}_{\omega}\left[e^{-\beta UV_{\rm int}/2}
 \prod_j(I+\alpha_{\omega,j} \gamma_{J_{\omega,j}})
 e^{-\beta UV_{\rm int}/2}\right],
\end{align}
which belongs to $\FGC$. This holds because
\begin{align}
\label{eq:key_prop}
   e^{-\beta UV_{\rm int}/2}
   (I+\alpha X)   e^{-\beta UV_{\rm int}/2}\in \FGC,
\end{align}
where $|\alpha|\leq 1$ and $X$ is any normalized Hermitian even Majorana
monomial. Note that equation \eqref{eq:key_prop} is not a consequence of the Gaussian congruences in Lemma~\ref{lem:cone-closure}, but rather stems from the following model-specific local Gaussian-cone statement of $V_{\rm int}$: 
if $\gamma_J$ has odd Majorana degree on each of two sites $x,y$, then for all
real $a_x,a_y,$ and $|r|\leq 1$,
\begin{align}
    e^{-a_xV_x-a_yV_y}+r\gamma_J\in \FGC_4.
\end{align}
The above equation is a generalization of Lemma \ref{lem:single-string} for the Fermi--Hubbard model.
\end{proofsketch}

Theorem~\ref{thm:hubbard-strong} complements the discussion of Proposition~\ref{prop:hubbard-weak}. The weak-coupling result expands around the quadratic hopping Hamiltonian and is useful when $|U|\ll|t|$, whereas the strong-coupling result expands around the atomic interaction and is useful when $|t|\ll|U|$.

We now consider a specific example of the two-dimensional Fermi-Hubbard model, for which 
$D=4$.  In Fig.~\ref{fig:hubbard-certified-windows}, 
the horizontal axis represents the coupling ratio $\epsilon=|U|/(4|t|)$. 
The solid curve denotes the equality boundary of Eq.~\eqref{eq:weak-window},
read along the left $\beta|t|$ axis; it behaves like $\log(1/\epsilon)$ for small $\epsilon$ and $1/\epsilon$ for large $\epsilon$. 
The dashed curve denotes the equality boundary of Eq.~\eqref{eq:strong-window}, read along the right 
$\beta|U|$ axis; it behaves like $\epsilon$ for small $\epsilon$ and $\log\epsilon$ for large $\epsilon$. Each curve bounds its corresponding dimensionless inverse temperature. For better figure display, we further compactify the coordinates $X$ and $Y$ via
\[
\begin{aligned}
  X&=\arctan\!\left(\frac14\log\frac{\epsilon}{\sqrt8}\right),\qquad \upsilon=200,\\
  Y_t&=\frac{36\upsilon\,\beta|t|}{1+36\upsilon\,\beta|t|},\qquad
  Y_U=\frac{\upsilon\,\beta|U|/2}{1+\upsilon\,\beta|U|/2}.
\end{aligned}
\]
Figure~\ref{fig:hubbard-certified-windows} highlights the regime where the Gibbs state of the Fermi–Hubbard model exhibits ``classical'' behavior, defined here as a convex combination of Gaussian states.

\begin{figure}[H]
\centering
\includegraphics{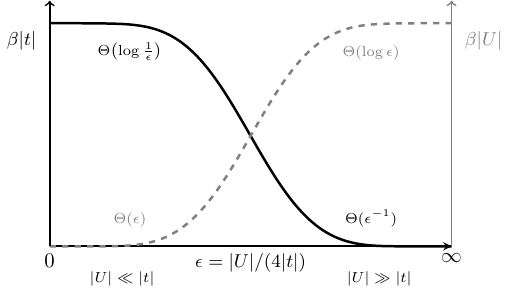}
\caption{The weak- and strong-coupling windows for the 2-dim Fermi--Hubbard model on a square lattice. The solid curve is read on the left axis and the dashed curve on the right axis. The asymptotic labels refer to the bounds before compactification.}
\label{fig:hubbard-certified-windows}
\end{figure}

\section{Conclusion}
In this work, we have investigated the convex-Gaussianity of fermionic Gibbs states in perturbation theory and delineated its precise parameter regimes. First, we present a general weak-coupling theorem showing that for a sparse Hamiltonian $H = H_0 + V$ with a quadratic reference $H_0$ and a non-quadratic perturbation $V$ of scale $\epsilon$, the Gibbs state $\rho_{\beta}$ decomposes into a convex combination of Gaussian states for $\beta \le O(\log(1/\epsilon))$, a bound we establish as asymptotically tight.
Applying this framework to the Fermi--Hubbard model covers the weak-coupling (small-$\vert{}U\vert{}$) regime. Furthermore, for the strong-coupling (small-$\vert{}t\vert{}$) regime, we demonstrate that convex-Gaussianity persists up to $\beta \le O\big(\vert{}U\vert{}^{-1} \log(\vert{}U\vert{} / (D\vert{}t\vert{}))\big)$, revealing a mechanism for convex-Gaussianity distinct from that of the weak-coupling setting.

Our results identify two perturbative mechanisms for convex-Gaussianity: one around a quadratic free-fermion Hamiltonian and another around the interacting atomic limit of the Fermi–Hubbard model.
However, several key challenges remain open, such as efficient sampling of the Gaussian decompositions, optimized locality dependence, and the tightness of the strong-coupling result in Theorem~\ref{thm:hubbard-strong}.

\subsection*{Acknowledgements and AI disclosure}
The authors gratefully acknowledge the assistance of OpenAI's Codex (GPT-5.6 Sol) in carrying out part of the derivations and
implementing the Lean 4 formalization.
Y.R. thanks  Jiaqing Jiang, Yu Tong, Ruihua Fan, Roland Bauerschmidt, and Joel Feldman for helpful discussions.
K. B. thanks Arthur Jaffe and Seth Lloyd for helpful discussions.
Y.R. was supported by the U.S.
Department of Energy, Office of Science, National Quantum Information
Science Research Centers, Quantum Systems Accelerator, under Grant
number DOE DE-SC0012704.
K. B. acknowledges the support from NSF grant DMS-2606749.

\bibliography{published-reference}
\clearpage
\onecolumngrid
\setcounter{secnumdepth}{3}
\appendix
\setcounter{equation}{0}
\renewcommand{\theequation}{S\arabic{equation}}

\makeatletter
\@removefromreset{equation}{section}
\makeatother

\section{Proof of the sparse close-to-quadratic theorem}
\label{app:sparse-proof}
\subsection{Convex-Gaussian and sparse-interaction preliminaries}
\subsubsection{Majorana strings and the Majorana coefficient norm}
Let $[2n]:=\{1,\ldots,2n\}$ be the set of Majorana indices. The Majorana operators satisfy
\begin{equation}
\{\gamma_x,\gamma_y\}=2\delta_{xy}I,\qquad x,y\in[2n].
\end{equation}
For an even set $J=\{j_1<\cdots<j_{2r}\}\subseteq[2n]$, define
\begin{equation}
\gamma_J:=i^r\gamma_{j_1}\cdots\gamma_{j_{2r}}.
\end{equation}
Then $\gamma_J=\gamma_J^\dagger$ and $\gamma_J^2=I$. Every even operator has a unique expansion
\begin{equation}
O=\sum_{J:\,|J|\ \mathrm{even}}o_J\gamma_J.
\end{equation}
We use the Majorana coefficient norm
\begin{equation}
\|O\|_{\gamma,1}:=\sum_J|o_J|.
\end{equation}

\begin{lemma}[Basic Majorana coefficient norm bounds]
\label{lem:coefficient-norm-bounds-sm}
For even operators $A,B$,
\[
\|AB\|_{\gamma,1}\le \|A\|_{\gamma,1}\|B\|_{\gamma,1},\qquad
\|[A,B]\|_{\gamma,1}\le 2\|A\|_{\gamma,1}\|B\|_{\gamma,1}.
\]
Here $[A,B]:=AB-BA$ denotes the commutator.
If $A$ and $B$ have disjoint Majorana supports, then $[A,B]=0$.
\end{lemma}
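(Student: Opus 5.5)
The plan is to reduce all three claims to how single normalized strings multiply. Write $A=\sum_J a_J\gamma_J$ and $B=\sum_K b_K\gamma_K$. By bilinearity,
\[
AB=\sum_{J,K}a_Jb_K\,\gamma_J\gamma_K .
\]
The key fact to establish first is that for even sets $J,K$ we have $\gamma_J\gamma_K=\eta_{J,K}\gamma_{J\triangle K}$ with a phase $|\eta_{J,K}|=1$. To see this, write out both products of Majorana operators, anticommute the factors of $\gamma_K$ into sorted position (each transposition of distinct $\gamma$'s contributes a factor $-1$), and cancel each repeated index using $\gamma_x^2=I$. What remains is $\pm$ the ordered product over $J\triangle K$, times a power of $i$ from the normalizations $i^{r}$. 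Since $|J\triangle K|$ is even, this is a unimodular multiple of $\gamma_{J\triangle K}$.

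With this fact, the Majorana coefficient of $AB$ at a set $L$ is
\[
\sum_{J\triangle K=L}a_Jb_K\,\eta_{J,K}.
\]
Summing absolute values over $L$ and applying the triangle inequality gives
\[
\|AB\|_{\gamma,1}\le\sum_{J,K}|a_J||b_K|=\|A\|_{\gamma,1}\|B\|_{\gamma,1}.
\]
Uniqueness of the Majorana expansion guarantees that the coefficients of $AB$ are exactly these sums, so no other contributions appear. The commutator bound then follows from $[A,B]=AB-BA$, the triangle inequality for $\|\cdot\|_{\gamma,1}$ (which is immediate because it is an $\ell^1$ norm on coefficients), and the product bound applied to $AB$ and to $BA$.

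For the disjoint-support statement, I again reduce to strings. If $J\cap K=\varnothing$, commuting $\gamma_K$ past $\gamma_J$ requires moving each of the $|K|$ factors across $|J|$ distinct anticommuting Majoranas. This produces the sign $(-1)^{|J||K|}=+1$, because both sets are even, and the normalization phases are scalars, so $\gamma_J\gamma_K=\gamma_K\gamma_J$. Here $A$ and $B$ having disjoint Majorana supports means every $J$ with $a_J\neq0$ is disjoint from every $K$ with $b_K\neq0$. Bilinearity then gives $[A,B]=0$.

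The only step needing care is the phase bookkeeping in $\gamma_J\gamma_K=\eta\gamma_{J\triangle K}$. However, only $|\eta|=1$ is needed, and that holds because the result is a product of Majorana operators, signs, and powers of $i$. I would therefore avoid computing $\eta$ explicitly.
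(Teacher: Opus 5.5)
Your proposal is correct and follows the same route as the paper. You reduce to the fact that a product of two normalized strings is a unimodular phase times $\gamma_{J\triangle K}$, expand $AB$ bilinearly and apply the triangle inequality (then use $AB-BA$ for the commutator), and use that even strings on disjoint supports commute; you just spell out the phase and sign bookkeeping, such as $(-1)^{|J||K|}=1$, that the paper's one-line proof leaves implicit.
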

\begin{proof}
The product of two Majorana strings is a phase times one Majorana string. Expanding $AB$ and applying the triangle inequality gives the first inequality; the commutator bound follows from $AB-BA$. Even strings on disjoint Majorana supports commute, and linearity gives the last claim.
\end{proof}

\subsubsection{Fermionic Gaussian cone}
\label{app:cone-closure}
Here a normalized fermionic Gaussian state may be pure or mixed: it is a trace-one positive operator in the norm-closed quadratic-Gaussian class. Let $\FGC$ denote the cone of finite nonnegative combinations of normalized fermionic Gaussian states; its trace-one slice is the set of convex-Gaussian states.
We omit the mode subscript when the number of modes is clear.

\begin{lemma}[Gaussian-cone closure]
The following operations preserve $\FGC$:
\begin{enumerate}[label=(\roman*)]
\item positive finite sums, norm-convergent countable sums, and positive averages of integrable $\FGC$-valued operator families over finite measures;
\item ordered graded-CAR products of elements supported on pairwise disjoint even Majorana subspaces;
\item Gaussian congruence $A\mapsto GAG^\dagger$, where $G$ is an invertible fermionic Gaussian operator, in particular $G=e^{-Q/2}$ for Hermitian quadratic $Q$.
\item 
taking the reduced state on an ordered fermionic
subsystem.
\end{enumerate}
\end{lemma}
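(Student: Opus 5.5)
The plan is to reduce every item to a statement about individual normalized Gaussian states, using that $\FGC$ is by definition the cone generated by them. Any element of $\FGC$ is then $\sum_i p_i\rho_i$ with $p_i\ge0$ and $\rho_i$ Gaussian. Each operation below is linear or multilinear and positivity-preserving, so it suffices to check that it sends Gaussian states to positive multiples of Gaussian states. Throughout, I will use the characterization of (possibly pure) Gaussian states as norm limits of $e^{-Q}/\Tr{e^{-Q}}$ with $Q$ Hermitian quadratic. Equivalently, they are the states with covariance matrix $M$ (real antisymmetric, $\|M\|\le1$) satisfying Wick's theorem, with generating function $\Tr{\rho\, e^{i\theta^T\gamma}}$ Gaussian in Grassmann variables.

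For (i), finite positive sums are immediate. Countable norm-convergent sums and integrals over finite measures are the delicate part, because $\FGC$ as literally defined (finite combinations) is not obviously closed. I would argue via Carathéodory: on $n$ modes, $\FGC$ sits in a finite-dimensional real space of Hermitian operators of dimension $N=4^n$. The set $\operatorname{conv}(\FG)_n$ is the convex hull of the set of Gaussian states, and that set is compact, being the continuous image of the compact set of admissible covariance matrices. The convex hull of a compact set in finite dimension is compact by Carathéodory, so $\FGC=\mathbb R_{\ge0}\cdot\operatorname{conv}(\FG)_n$ is a closed cone. Sums and Bochner integrals of $\FGC$-valued integrable families are limits of finite positive combinations, hence lie in $\FGC$. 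I expect this closedness to be the main technical point, since (iii) and (iv) for pure states also rely on limits.

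For (ii), write $A=\sum_i p_i\rho_i$ on Majorana subspace $S_1$ and $B=\sum_j q_j\sigma_j$ on a disjoint $S_2$. Multilinearity reduces the claim to a product $\rho\sigma$ of even Gaussian operators on disjoint supports. For the thermal form $e^{-Q_1}e^{-Q_2}=e^{-(Q_1+Q_2)}$, this holds because even operators on disjoint supports commute, and $Q_1+Q_2$ is quadratic. Normalization holds since $\Tr{\rho\sigma}=\Tr{\rho}\Tr{\sigma}$ up to the usual $2^{-n}$ factors for even operators. Pure states then follow by taking limits and using the closedness from (i).

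For (iii), for $G=e^{-Q/2}$ one has $Ge^{-Q'}G^\dagger = e^{-Q/2}e^{-Q'}e^{-Q/2}$. Quadratic Majorana operators span the Lie algebra $\mathfrak{so}(2n)$ (complexified), and the exponentials form a group by BCH. So this product equals $e^{-Q''}$ for a quadratic $Q''$ (up to scalar), and it is positive, hence $Q''$ may be taken Hermitian. A general invertible Gaussian $G$ is handled by polar decomposition: $G=U|G|$ with $U$ a Gaussian unitary and $|G|=e^{-Q/2}$. The unitary acts by $\mathrm{SO}(2n)$ rotations of the covariance matrix, and limits again cover pure states.

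For (iv), the partial trace onto an ordered subsystem of modes maps the covariance matrix to its principal submatrix, and Wick's theorem restricts. So reduced states of Gaussian states are Gaussian, and linearity finishes the argument.
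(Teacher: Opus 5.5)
Your arguments for (i), (ii) and (iv) follow the paper's route: a compact Gaussian base with compact convex hull, commuting even factors under ordered embeddings, and restriction preserving moments and Wick's theorem. In (ii), your exponent-addition identity even supplies a justification the paper only asserts. In (i), add the trace step the paper uses to turn a compact base into a closed cone. If $t_kx_k\to y$ with $t_k\ge0$ and $x_k$ in the normalized hull, then $t_k\to\operatorname{Tr}(y)$. So either $x_k\to y/\operatorname{Tr}(y)$, which lies in the compact hull, or $y=0$.

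The genuine gap is in (iii). You derive $e^{-Q/2}e^{-Q'}e^{-Q/2}=e^{-Q''}$ from the claim that exponentials of quadratic operators ``form a group by BCH''. That does not work: BCH converges only near the identity, and in complex Lie groups (already in $SL(2,\mathbb C)$) a product of exponentials need not be a single exponential. So at that point you only know the product lies in the group generated by Gaussian exponentials. Positivity does not close this by itself. The Hermitian logarithm of a positive operator is unique, but nothing you wrote shows it is quadratic. Your polar-decomposition claim, that $|G|=e^{-Q/2}$ and $U$ is a Gaussian unitary, rests on the same unproved fact.

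The missing lemma is this: if $P>0$ lies in the group generated by exponentials of quadratic operators, then $P=c\,e^{-Q''}$ with $c>0$ and $Q''$ Hermitian quadratic. Here is a proof.

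Conjugation by $P$ preserves the span of the Majoranas, say $P\gamma_aP^{-1}=\sum_bS_{ab}\gamma_b$, and the CAR force $SS^T=I$.

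$\operatorname{Ad}_P$ is self-adjoint and positive definite for the Hilbert--Schmidt inner product, since $\operatorname{Tr}(A^\dagger PAP^{-1})=\|P^{1/2}AP^{-1/2}\|_2^2$. The $\gamma_a$ are an orthogonal basis of an invariant subspace, so $S$ is Hermitian positive definite.

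Uniqueness of the Hermitian logarithm, together with $S^T=S^{-1}$, gives $(\log S)^T=-\log S$. Hence $\log S=iK$ with $K$ real antisymmetric.

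Then $Q'':=\frac i4\sum_{a,b}K_{ab}\gamma_a\gamma_b$ is Hermitian and satisfies $e^{-Q''}\gamma_ae^{Q''}=\sum_bS_{ab}\gamma_b$. So $e^{Q''}P$ commutes with every $\gamma_a$ and is therefore a scalar, which must be positive.

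Applying the lemma to $P=Ge^{-Q'}G^\dagger$, which lies in the group whenever $G$ does, handles every invertible Gaussian $G$ directly; the polar decomposition becomes unnecessary. Pure states and other limits then follow, as you say, from continuity of $\rho\mapsto G\rho G^\dagger$ and the closedness from (i). The paper's own (iii) only asserts the congruence property, so with this lemma your version becomes an actual proof.
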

\begin{proof}
\begin{enumerate}[label=(\roman*)]
\item The normalized fermionic Gaussian base is compact in finite dimension,
and so is its convex hull. Trace normalization therefore shows that its finite
nonnegative cone is closed: a positive-trace limit is its trace times a limit
point of the compact normalized convex hull, while a positive zero-trace limit
is zero. Norm-convergent positive sums and positive averages remain in this
closed cone.
\item An ordered CAR embedding $\iota$ satisfies
\[
 \iota(AB)=\iota(A)\iota(B),\qquad
 \iota(A^\dagger)=\iota(A)^\dagger,\qquad
 \iota(I)=I.
\]
Let $\iota_k$ be the canonical ordered CAR embeddings of pairwise
disjoint even Majorana subspaces. For $A_k\in\FGC$ on the corresponding
subspaces,
\[
 [\iota_k(A_k),\iota_\ell(A_\ell)]=0\quad(k\ne\ell),
 \qquad
 \prod_k\iota_k(A_k)\in\FGC.
\]
The ordered embeddings contain the required parity strings, and the displayed
graded product preserves fermionic Gaussianity.
\item Congruence by a quadratic-Gaussian group element preserves the positive
fermionic Gaussian class, its norm closure, and finite nonnegative
combinations.
\item
Let $\iota$ be a canonical ordered CAR embedding from $m$-mode operators into
$n$-mode operators.  For an $n$-mode operator $\rho$, define its restriction
$\rho_\iota$ as the unique $m$-mode operator satisfying
\[
  \operatorname{Tr}_m(A\rho_\iota)
  =\operatorname{Tr}_n(\iota(A)\rho)
  \qquad\text{for every $m$-mode operator $A$}.
\]
The Majorana expansion constructs $\rho_\iota$, proves uniqueness, and shows
that restriction is linear.  Taking $A=I$ shows that trace is preserved.  For
a normalized fermionic Gaussian state $\rho$ and every rank-one positive
operator $B^\dagger B$,
\[
  \operatorname{Tr}_m(B^\dagger B\rho_\iota)
  =\operatorname{Tr}_n\!\bigl(\iota(B)\rho\iota(B)^\dagger\bigr)
  \ge0,
\]
so positivity is preserved.  Taking $A$ to be a Majorana basis operator
preserves every embedded Majorana moment, including the two-point
correlations.  Because the embedding is ordered, Wick's theorem is preserved.
Thus restriction preserves normalized fermionic Gaussian states.  Finally, if
$C=\sum_i p_i\rho_i\in\FGC$ with $p_i\ge0$, then linearity gives
\[
  C_\iota=\sum_i p_i(\rho_i)_\iota\in\FGC.
\]
\end{enumerate}
\end{proof}

\phantomsection
\label{proof:single-string-sm}
\begin{proof}[Proof of Lemma~\ref{lem:single-string}]
It is enough to prove the result for $\alpha=\pm1$, because
\[
I+\alpha \gamma_J
=
\frac{1+\alpha}{2}(I+\gamma_J)
+
\frac{1-\alpha}{2}(I-\gamma_J).
\]
For $J=\varnothing$, the two endpoints are $2I$ and $0$.

Suppose $J\neq\varnothing$. We use induction on $|J|/2$. If $|J|=2$, then $I\pm \gamma_J$ is proportional to a fermionic Gaussian state. For the induction step, write $J=J_1\sqcup J_2$, where $J_2$ consists of the final two ordered Majorana indices. The supports of $\gamma_{J_1}$ and $\gamma_{J_2}$ are disjoint, both operators are even, and $\gamma_{J_1}^2=\gamma_{J_2}^2=I$. Hence they commute and
\[
I+\gamma_{J_1}\gamma_{J_2}=\frac12(I+\gamma_{J_1})(I+\gamma_{J_2})+\frac12(I-\gamma_{J_1})(I-\gamma_{J_2}).
\]
Similarly,
\[
I-\gamma_{J_1}\gamma_{J_2}=\frac12(I+\gamma_{J_1})(I-\gamma_{J_2})+\frac12(I-\gamma_{J_1})(I+\gamma_{J_2}).
\]
The induction hypothesis and closure of the fermionic Gaussian cone under ordered graded-CAR products on disjoint even supports prove the claim.
\end{proof}

\subsubsection{Sparse interaction geometry}
\begin{Def}[Sparse fermionic Hamiltonian]
Let
\[
H=\sum_{a\in[K]}h_a
\]
be a sum of even local fermionic operators, where
\[
\|h_a\|_{\gamma,1}\le g_a,\qquad \Supp(h_a)\subseteq J_a,\qquad |J_a|\le R,
\]
where $J_a$ is a chosen Majorana support set. The interaction graph has vertex set $[K]$ and an edge $(a,b)$ whenever $J_a\cap J_b\neq\varnothing$. The Hamiltonian is $(R,d)$-sparse if this graph has maximum degree at most $d$.
\end{Def}
All local terms containing a fixed Majorana index are pairwise adjacent. Consequently, at most $d+1$ local terms can contain one fixed Majorana index.

\subsection{Quadratic interaction-picture locality and local perturbation-strength bounds}
For the general argument below, write $H=H_0+V$, where $H_0$ is a quadratic
Hamiltonian and $V=\sum_a h_a$ is a finite sum of even Hermitian perturbation
terms.  For $\beta\geq0$, write
$\rho_\beta(H):=e^{-\beta H}/\Tr{e^{-\beta H}}$.

\subsubsection{One-particle propagation}
With $H_0$ as above, define the one-particle generator $\mathcal K$ by
\begin{equation}
[H_0,\gamma_x]=\sum_{y\in[2n]}\mathcal K_{xy}\gamma_y.
\end{equation}
Define
\begin{equation}
\|B\|_0:=\max\left\{\sup_x\sum_y|B_{xy}|,\
\ \sup_y\sum_x|B_{xy}|\right\},
\qquad \kappa_0:=\|\mathcal K\|_0.
\end{equation}
Note that the maximal row-sum and column-sum norm in the
Pauli basis has been used in the study of quantum machine learning~\cite{PhysRevA.105.062431,Bu_2023}.

\begin{lemma}[One-particle propagation]\label{lem:one-particle}
Let $M(t)=e^{t\mathcal K}$. Then
\begin{equation}
e^{tH_0}\gamma_xe^{-tH_0}=\sum_yM_{xy}(t)\gamma_y,
\end{equation}
and
\begin{equation}
\|M(t)\|_0\le e^{|t|\kappa_0}.
\end{equation}
\end{lemma}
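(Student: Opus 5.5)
Define $\Gamma(t):=(e^{tH_0}\gamma_1e^{-tH_0},\ldots,e^{tH_0}\gamma_{2n}e^{-tH_0})$, regarded as a column vector of operators. The plan is to show that $\Gamma(t)$ and $M(t)\Gamma(0)$ solve the same linear ODE with the same initial data, and then to bound $\|M(t)\|_0$ from the series of the exponential.

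For the first claim, differentiate componentwise:
\[
\frac{d}{dt}\,e^{tH_0}\gamma_xe^{-tH_0}=e^{tH_0}[H_0,\gamma_x]e^{-tH_0}=\sum_y\mathcal K_{xy}\,e^{tH_0}\gamma_ye^{-tH_0}.
\]
Thus $\dot\Gamma=\mathcal K\Gamma$, where the scalar matrix $\mathcal K$ acts on the operator-valued vector, and $\Gamma(0)=(\gamma_y)_y$. The candidate $\tilde\Gamma(t):=M(t)\Gamma(0)$ with $M(t)=e^{t\mathcal K}$ satisfies $\dot{\tilde\Gamma}=\mathcal K e^{t\mathcal K}\Gamma(0)=\mathcal K\tilde\Gamma$ and has the same initial value. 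This is a finite-dimensional linear ODE with constant coefficients on the space of $2n$-tuples of operators, so uniqueness gives $\Gamma(t)=\tilde\Gamma(t)$, which is the first claim.

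The definition of $\mathcal K$ is consistent because $[H_0,\gamma_x]$ is linear in the $\gamma_y$. To see this, note that $[\,i\gamma_a\gamma_b,\gamma_x]$ vanishes unless $x\in\{a,b\}$, in which case it is $\pm2i\gamma_{b}$ or $\pm2i\gamma_a$. The identity part of $H_0$ commutes with $\gamma_x$ and so contributes nothing.

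For the norm bound, first check that $\|\cdot\|_0$ is submultiplicative. For the row-sum norm,
\[
\sup_x\sum_z\Bigl|\sum_y A_{xy}B_{yz}\Bigr|\le \sup_x\sum_y|A_{xy}|\sum_z|B_{yz}|\le \|A\|_{\mathrm{row}}\|B\|_{\mathrm{row}},
\]
and the column-sum norm is handled the same way. Taking the maximum of the two gives $\|AB\|_0\le\max(\|A\|_{\mathrm{row}}\|B\|_{\mathrm{row}},\|A\|_{\mathrm{col}}\|B\|_{\mathrm{col}})\le\|A\|_0\|B\|_0$. The triangle inequality for $\|\cdot\|_0$ is immediate, and $\|I\|_0=1$. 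Expanding the exponential then gives
\[
\|e^{t\mathcal K}\|_0\le\sum_{k\ge0}\frac{|t|^k\|\mathcal K\|_0^k}{k!}=e^{|t|\kappa_0}.
\]

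No step here is a real obstacle. The only points needing care are the submultiplicativity of the max-of-row-and-column norm and the vector-valued uniqueness argument, and both are routine.
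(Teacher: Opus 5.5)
Your proof is correct and follows the same route as the paper: you derive the linear ODE from $\frac{d}{dt}e^{tH_0}\gamma_xe^{-tH_0}=e^{tH_0}[H_0,\gamma_x]e^{-tH_0}$, identify the solution as $e^{t\mathcal K}$ by uniqueness, and bound $\|e^{t\mathcal K}\|_0$ using submultiplicativity of the max row/column-sum norm and the exponential series. Your uniqueness argument on $2n$-tuples of operators is slightly more careful than the paper's ODE for $M$, since it does not presuppose that the evolved Majorana already lies in the span of the $\gamma_y$.
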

\begin{proof}
Differentiating the evolved Majorana gives the linear ODE
$M'(t)=\mathcal K M(t)$ with $M(0)=I$, so $M(t)=e^{t\mathcal K}$.
The maximum row-sum and column-sum norm is submultiplicative, and expansion of the
exponential gives the bound.
\end{proof}

\subsubsection{Propagation of the local perturbation strength}
Write
\[
h_a=\sum_Jz_{a,J}\gamma_J,\qquad
h_a(t):=e^{tH_0}h_ae^{-tH_0}=\sum_Kz_{a,K}(t)\gamma_K.
\]
Define the initial and evolved local perturbation strengths
\begin{equation}
\label{eq:def_s0}
s_0:=\sup_{x\in[2n]}\sum_a\sum_{J\ni x}|z_{a,J}|,
\end{equation}
\begin{equation}
s(t):=\sup_y\sum_a\sum_{K\ni y}|z_{a,K}(t)|.
\end{equation}

\begin{thm}[Local-strength propagation]
\label{thm:local_str_prop}
Suppose every Majorana string $\gamma_J$ appearing in a perturbation term satisfies $|J|\le R$. Then
\begin{equation}
s(t)\le s_0e^{R\kappa_0|t|}.
\end{equation}
\end{thm}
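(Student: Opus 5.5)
The plan is to propagate each Majorana string $\gamma_J$ in the perturbation through Lemma~\ref{lem:one-particle} and then track, by a counting argument, how much coefficient mass lands on a fixed output index $y$. Since conjugation by $e^{tH_0}$ is an algebra automorphism, for $J=\{j_1<\cdots<j_{2r}\}$ we have
\[
e^{tH_0}\gamma_Je^{-tH_0}=i^r\prod_{k=1}^{2r}\Bigl(\sum_{y_k}M_{j_ky_k}(t)\gamma_{y_k}\Bigr)
=i^r\sum_{y_1,\ldots,y_{2r}}\Bigl(\prod_k M_{j_ky_k}(t)\Bigr)\gamma_{y_1}\cdots\gamma_{y_{2r}}.
\]
Each monomial $\gamma_{y_1}\cdots\gamma_{y_{2r}}$ equals a unimodular phase times a single normalized string $\gamma_K$. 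Here $K$ is the set of indices appearing an odd number of times among $(y_1,\ldots,y_{2r})$, so $K\subseteq\{y_1,\ldots,y_{2r}\}$.

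Consequently, for every $K$, the coefficient $z_{a,K}(t)$ is bounded in absolute value by the sum over $J$ of $|z_{a,J}|$ times the sum of $\prod_k|M_{j_ky_k}(t)|$ over tuples $(y_k)$ that produce $K$. Fix an output index $y$ and sum over all $K\ni y$. By the triangle inequality, it suffices to bound
\[
\sum_a\sum_J|z_{a,J}|\sum_{(y_k):\,y\in\{y_1,\ldots,y_{2r}\}}\prod_k|M_{j_ky_k}(t)|
\le\sum_a\sum_J|z_{a,J}|\sum_{k=1}^{|J|}\sum_{(y_\ell)_{\ell\neq k}}|M_{j_ky}(t)|\prod_{\ell\ne k}|M_{j_\ell y_\ell}(t)|.
\]
This uses a union bound over which slot $k$ equals $y$. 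The point is that a tuple with $y\in K$ must have some slot equal to $y$; any cancellation or collision only helps.

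Now sum freely over the $y_\ell$ with $\ell\neq k$. Each such factor contributes at most the row sum $\sum_{y_\ell}|M_{j_\ell y_\ell}(t)|\le\|M(t)\|_0\le e^{\kappa_0|t|}$, and there are $|J|-1\le R-1$ of them. This leaves
\[
s(t)\le e^{(R-1)\kappa_0|t|}\sum_x|M_{xy}(t)|\sum_a\sum_{J\ni x}|z_{a,J}|,
\]
after reorganizing $\sum_J\sum_{k}$ as $\sum_x\sum_{J\ni x}$ with $x=j_k$. The inner sum is at most $s_0$ by \eqref{eq:def_s0}. The remaining column sum $\sum_x|M_{xy}(t)|$ is at most $\|M(t)\|_0\le e^{\kappa_0|t|}$, which is exactly why $\|\cdot\|_0$ includes column sums. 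Multiplying gives $s(t)\le s_0e^{R\kappa_0|t|}$. For strings with $|J|<R$ the exponent is only smaller, since $\kappa_0|t|\ge0$.

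The main obstacle is the bookkeeping in the second step. The reduction of $\gamma_{y_1}\cdots\gamma_{y_{2r}}$ to a normalized $\gamma_K$ can merge several tuples into the same $K$, introduce phases, and drop repeated indices. The key claim is that $\sum_{K\ni y}|z_{a,K}(t)|$ is bounded by the total tuple weight over tuples containing $y$. This follows because the map from tuples to strings has unimodular coefficients and $K\subseteq\{y_k\}$. I would state it as a small lemma before the estimate, and handle the $J=\varnothing$ term separately: it is invariant and contributes no index $y$, so it is irrelevant.
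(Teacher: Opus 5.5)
Your proposal is correct and follows the paper's proof essentially step for step. You expand each $\gamma_J$ via the one-particle propagation formula, union-bound over the slot equal to $y$, control the other $|J|-1$ slots by row sums of $M(t)$, regroup the distinguished index as $x\in J$ to extract $s_0$, and finish with the column-sum bound on $M(t)$; your separate remarks on the empty string and on $|J|<R$ are harmless additions that the paper leaves implicit.
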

\begin{proof}
Fix an output Majorana index $y\in[2n]$. We first estimate the contribution of a single input Majorana string
$\gamma_J$, where $J=\{j_1,\ldots,j_q\},q\le R$. 
By the one-particle propagation formula,
\[
e^{tH_0}\gamma_{j_r}e^{-tH_0}=\sum_{x_r\in[2n]}M_{j_rx_r}(t)\gamma_{x_r}.
\]
Hence, 
\begin{equation}\label{eq:local-strength-tuple-expansion}
e^{tH_0}\gamma_Je^{-tH_0}
=\zeta_J\prod^q_{r=1} \left(e^{tH_0}\gamma_{j_r}e^{-tH_0}\right)
=\zeta_J\sum_{x_1,\ldots,x_q}
\left(\prod_{r=1}^q M_{j_rx_r}(t)\right)
\gamma_{x_1}\cdots\gamma_{x_q},
\end{equation}
where $\zeta_J$ is a phase with  $|\zeta_J|=1$,

The tuple $(x_1,\ldots,x_q)$ may contain repeated Majorana indices. After applying the canonical anticommutation relations, its product reduces, up to phase, to a normalized Majorana string $\gamma_K$. Write $z_K^{(J)}(t)$ for the coefficient of $\gamma_K$ in this expansion. If the reduced support $K$ contains $y$, then $y$ must occur at least once among the tuple entries. Therefore, discarding cancellations and overcounting repeated occurrences can only increase the total coefficient mass, and the total mass of output strings containing $y$ that originate from $\gamma_J$ satisfies
\begin{align}\label{eq:single-input-incidence}
\sum_{K\ni y}|z^{(J)}_K(t)|
\le \sum_{r=1}^q
\sum_{\substack{x_1,\ldots,x_q\\x_r=y}}
\prod_{\ell=1}^q|M_{j_\ell x_\ell}(t)|
=\sum_{r=1}^q |M_{j_ry}(t)|
\prod_{\ell\ne r}\left(\sum_u|M_{j_\ell u}(t)|\right).
\end{align}
The row-sum bound in Lemma~\ref{lem:one-particle} gives
\begin{equation}\label{eq:row-mass-M}
\sup_x\sum_u|M_{xu}(t)|\le e^{\kappa_0|t|}.
\end{equation}
Consequently,
\begin{equation}\label{eq:single-input-incidence-2}
\sum_{K\ni y}|z^{(J)}_K(t)|
\le e^{(q-1)\kappa_0|t|}\sum_{r=1}^q|M_{j_ry}(t)|.
\end{equation}
Multiply by $|z_{a,J}|$, sum over $a$ and $J$, and use $q\le R$. Regrouping the distinguished input Majorana $j_r$ as $x$ gives
\begin{align}
\sum_a\sum_{K\ni y}|z_{a,K}(t)|
&\le e^{(R-1)\kappa_0|t|}
\sum_a\sum_J|z_{a,J}|\sum_{x\in J}|M_{xy}(t)|\nonumber\\
&=e^{(R-1)\kappa_0|t|}
\sum_x|M_{xy}(t)|
\left(\sum_a\sum_{J\ni x}|z_{a,J}|\right)\nonumber\\
&\le s_0e^{(R-1)\kappa_0|t|}\sum_x|M_{xy}(t)|,
\end{align}
where the last inequality comes from the definition of $s_0$ in \eqref{eq:def_s0}.
Because the definition of $\|M(t)\|_0$ controls column sums as well as row sums,
\[
\sup_y\sum_x|M_{xy}(t)|\le e^{\kappa_0|t|}.
\]
Taking the supremum over $y$ proves
\[
s(t)\le s_0e^{R\kappa_0|t|}.
\]

\end{proof}

\begin{cor}[Sparse bounds]
If the interaction graph on all local terms has maximum degree $d$ and each perturbation term satisfies $\|h_a\|_{\gamma,1}\le g$, then
\begin{equation}
s_0\le(d+1)g.
\end{equation}
If each quadratic local term $q_b$ satisfies $\|q_b\|_{\gamma,1}\le \mathcal E$, then
\begin{equation}
\kappa_0\le2(d+1)\mathcal E.
\end{equation}
\end{cor}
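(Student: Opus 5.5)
Both bounds come from counting local terms that contain a fixed Majorana index, and then adding up their coefficient mass. The key observation, already recorded after the definition of the interaction graph, is this. Fix $x\in[2n]$. All local terms whose chosen support $J_a$ contains $x$ pairwise intersect (at $x$), so they are pairwise adjacent. Hence they form a clique in a graph of maximum degree at most $d$, and there are at most $d+1$ of them. Here the interaction graph is the one on all local terms, quadratic and nonquadratic together.

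For the first bound, recall $s_0=\sup_x\sum_a\sum_{J\ni x}|z_{a,J}|$, where $a$ ranges over the perturbation labels. If $x\in J$ with $z_{a,J}\neq0$, then $J\subseteq\Supp(h_a)\subseteq J_a$, so $x\in J_a$. Hence only labels $a$ with $x\in J_a$ contribute, and there are at most $d+1$ of them. For each such $a$,
\[
\sum_{J\ni x}|z_{a,J}|\le\sum_J|z_{a,J}|=\|h_a\|_{\gamma,1}\le g,
\]
which gives $s_0\le(d+1)g$.

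For the second bound, write $H_0=\sum_b q_b$ with $q_b=\sum_{u<v}c^{(b)}_{uv}\,i\gamma_u\gamma_v$ plus a multiple of $I$ (the identity part drops out of commutators). Use the identity
\[
[i\gamma_u\gamma_v,\gamma_x]=2i(\delta_{vx}\gamma_u-\delta_{ux}\gamma_v),
\]
which follows from the CAR. It shows that row $x$ of $\mathcal K$ receives contributions only from the pairs $\{u,v\}\ni x$. Each such pair contributes one entry, of modulus $2|c^{(b)}_{uv}|$. Therefore
\[
\sum_y|\mathcal K_{xy}|\le 2\sum_b\sum_{\{u,v\}\ni x}|c^{(b)}_{uv}|\le 2\sum_{b:\,x\in J_b}\|q_b\|_{\gamma,1}\le 2(d+1)\mathcal E,
\]
again using that at most $d+1$ terms contain $x$. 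The same estimate applies to column sums. Concretely, column $y$ collects $2|c^{(b)}_{uv}|$ over the pairs $\{u,v\}$ containing $y$, because $\mathcal K$ is antisymmetric up to the factor structure: $\mathcal K_{xy}=\pm2i\,c_{\{x,y\}}$. So $\kappa_0=\|\mathcal K\|_0\le2(d+1)\mathcal E$.

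I expect no real obstacle here. The only care needed is to check two things. First, the quadratic terms are counted in the same degree-$d$ graph. Second, the commutator gives exactly the factor $2$ per coefficient rather than something larger.
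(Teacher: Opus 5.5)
Your proof is correct and follows the same route as the paper. The clique argument bounds the number of local terms containing a fixed Majorana index by $d+1$, each perturbation term contributes at most its coefficient norm to $s_0$, and the commutator $[i\gamma_u\gamma_v,\gamma_x]$ contributes a factor $2$ per quadratic coefficient to $\kappa_0$. Your explicit check that $\mathcal K$ is antisymmetric, so that column sums equal row sums, supplies a detail the paper leaves implicit.
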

\begin{proof}
First, by the definition of the interaction graph, there are
at most $d+1$ terms containing a fixed Majorana index. One perturbation term contributes at most its full Majorana coefficient norm to $s_0$. In addition, for a quadratic term, the commutator with one Majorana operator introduces coefficients of total magnitude at most twice the quadratic Majorana coefficient norm. Summing over incident quadratic terms proves the second bound.
\end{proof}

\subsubsection{Majorana-basis expansion and integrated overlap strength}

We can always replace $H$ by $H-\text{constant} I$, which does not change the Gibbs states. In addition, 
removing these scalar coefficients does not increase the Majorana coefficient norms or enlarge the chosen Majorana support sets, and it leaves the local strengths unchanged. We therefore relabel the scalar-free Hamiltonian and perturbation terms as $H$ and $h_a$ and retain $z_{a,J}$ and $z_{a,K}(t)$ for their static and evolved coefficients. The nonempty coefficients are unchanged, while $z_{a,\varnothing}=z_{a,\varnothing}(t)=0$. The case $\beta=0$ is immediate because the Gibbs state is the normalized identity; hence below we assume $\beta>0$.
Fix
$\tau_\beta:=\frac{\beta}{2}$.
For every perturbation term,
\begin{equation}
h_a(t)=\sum_{K:\,0<|K|\le R}z_{a,K}(t)\gamma_K.
\end{equation}
Define the finite virtual-label set
\begin{equation}
\mathcal L:=\{\lambda=(a,K):z_{a,K}\not\equiv0\text{ on }[-\tau_\beta,\tau_\beta]\}.
\end{equation}
For any virtual label $\lambda=(a,K)$, we have
\[
P_\lambda:=\gamma_K,\qquad z_\lambda(t):=z_{a,K}(t),\qquad \Supp(\lambda):=K.
\]
Then
\begin{equation}
e^{tH_0}Ve^{-tH_0}=\sum_{\lambda\in\mathcal L}z_\lambda(t)P_\lambda.
\end{equation}

\begin{lemma}[Time reflection of the virtual coefficients]
\label{lem:tim_ref}
For every virtual label $\lambda$,
\begin{equation}
z_\lambda(-t)=\overline{z_\lambda(t)}.
\end{equation}
Equivalently,
\[
\left(e^{tH_0}Ve^{-tH_0}\right)^\dagger=e^{-tH_0}Ve^{tH_0}.
\]
\end{lemma}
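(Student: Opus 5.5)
The plan is to prove the operator identity first and then read off the coefficient relation by uniqueness of the Majorana expansion. Since $H_0$ is a Hermitian quadratic operator and $V=\sum_a h_a$ is Hermitian, $e^{tH_0}$ is Hermitian for real $t$. Taking adjoints therefore gives
\[
\left(e^{tH_0}Ve^{-tH_0}\right)^\dagger=(e^{-tH_0})^\dagger V^\dagger (e^{tH_0})^\dagger=e^{-tH_0}Ve^{tH_0}.
\]
This is the claimed equivalent form. The same computation holds term by term: $h_a(t)^\dagger=h_a(-t)$ for each perturbation label $a$, because each $h_a$ is Hermitian.

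Next I would expand both sides in the Majorana basis. Every $\gamma_K$ is Hermitian, so
\[
h_a(t)^\dagger=\sum_K\overline{z_{a,K}(t)}\,\gamma_K .
\]
On the other hand,
\[
h_a(-t)=\sum_K z_{a,K}(-t)\gamma_K .
\]
The Majorana strings form a basis of the even operators, so the expansion is unique, and comparing coefficients gives $z_{a,K}(-t)=\overline{z_{a,K}(t)}$ for every pair $(a,K)$.

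It remains to pass to virtual labels. Since $z_\lambda(t):=z_{a,K}(t)$ for $\lambda=(a,K)$, the coefficient relation is exactly $z_\lambda(-t)=\overline{z_\lambda(t)}$. The label set $\mathcal L$ is symmetric under $t\mapsto -t$, because the interval $[-\tau_\beta,\tau_\beta]$ is symmetric, so this reflection maps labels in $\mathcal L$ to themselves. The scalar-removal step does not affect the argument, since $H_0$ stays Hermitian and quadratic and each $h_a$ stays Hermitian.

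There is no real obstacle here. The only point to state carefully is that the identity must be used per label $a$, not just for the sum $V$. Labels $\lambda$ carry the index $a$, and two terms $h_a,h_b$ can share the same string $K$; the relation for the sum $V$ alone would not separate their coefficients. Using the per-term identity $h_a(t)^\dagger=h_a(-t)$ resolves this.
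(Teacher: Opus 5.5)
Your proposal is correct and matches the paper's proof: both use Hermiticity of $H_0$ and each $h_a$ to get $h_a(t)^\dagger=h_a(-t)$ term by term. Both then compare coefficients in the Hermitian, linearly independent Majorana basis. Your remark that the identity must be applied per term $h_a$ is apt, since the summed operator identity alone would not separate labels $(a,K)$ and $(b,K)$ that share the same string.
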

\begin{proof}
Since $H_0$ and every $h_a$ are Hermitian,
\[
h_a(t)^\dagger=\left(e^{tH_0}h_ae^{-tH_0}\right)^\dagger=e^{-tH_0}h_ae^{tH_0}=h_a(-t).
\]
The basis operators $\gamma_K$ are Hermitian and linearly independent. Comparing coefficients in
\[
h_a(t)^\dagger=\sum_K\overline{z_{a,K}(t)}\gamma_K
\quad\text{and}\quad
h_a(-t)=\sum_Kz_{a,K}(-t)\gamma_K
\]
gives the result.
\end{proof}

\begin{Def}[Virtual activities and integrated overlap strength]
Define
\begin{equation}
a_\lambda:=2\int_0^{\tau_\beta}|z_\lambda(t)|\,dt=\int_{-\tau_\beta}^{\tau_\beta}|z_\lambda(t)|\,dt.
\end{equation}
The support of every virtual label is nonempty, and continuity, time reflection, and $\tau_\beta>0$ imply $a_\lambda>0$ for every $\lambda\in\mathcal L$.
Two labels overlap, written $\lambda\sim\nu$, if their Majorana supports
intersect. Define the integrated overlap strength
\begin{equation}
\Delta:=\sup_{\lambda\in\mathcal L}\sum_{\nu:\,\nu\sim\lambda}a_\nu,
\end{equation}
where the sum includes $\nu=\lambda$.
\end{Def}

\begin{lemma}[System-size-independent bound on the integrated overlap strength]
The integrated overlap strength obeys
\begin{equation}
\Delta\le2R\int_0^{\tau_\beta}s(t)\,dt\le2Rs_0\int_0^{\tau_\beta}e^{R\kappa_0t}\,dt.
\end{equation}
For $\kappa_0>0$,
\begin{equation}
\Delta\le\frac{2s_0}{\kappa_0}\left(e^{R\kappa_0\tau_\beta}-1\right),
\end{equation}
and for $\kappa_0=0$, the corresponding bound is $\Delta\le2R\tau_\beta s_0$.
\end{lemma}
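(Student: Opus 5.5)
The plan is to reduce $\Delta$ to an integral of the evolved local strength $s(t)$, and then apply Theorem~\ref{thm:local_str_prop}. Fix a label $\lambda=(a,K)\in\mathcal L$ with $|K|\le R$. A label $\nu=(b,K')$ overlaps $\lambda$ only if $K'\cap K\neq\varnothing$, that is, only if $K'\ni y$ for some $y\in K$. A union bound over $y\in K$ then gives
\[
\sum_{\nu\sim\lambda}a_\nu\le\sum_{y\in K}\ \sum_{b}\sum_{K'\ni y}\int_{-\tau_\beta}^{\tau_\beta}|z_{b,K'}(t)|\,dt .
\]
Overcounting a $\nu$ that meets $K$ in several indices only enlarges the right-hand side, and including labels not in $\mathcal L$ adds zero.

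Next, exchange the finite sums with the integral. For each $t$ the inner sum $\sum_b\sum_{K'\ni y}|z_{b,K'}(t)|$ is at most $s(t)$ by definition. This gives $\sum_{\nu\sim\lambda}a_\nu\le R\int_{-\tau_\beta}^{\tau_\beta}s(t)\,dt$. By Lemma~\ref{lem:tim_ref}, $|z_\lambda(-t)|=|z_\lambda(t)|$, so $s(-t)=s(t)$ and the symmetric integral equals $2\int_0^{\tau_\beta}s(t)\,dt$. Taking the supremum over $\lambda$ yields the first inequality, $\Delta\le 2R\int_0^{\tau_\beta}s(t)\,dt$.

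The second inequality follows by inserting $s(t)\le s_0e^{R\kappa_0 t}$ for $t\ge0$ from Theorem~\ref{thm:local_str_prop}; its hypothesis $|J|\le R$ holds because each $h_a$ has support size at most $R$. The last step is elementary integration. For $\kappa_0>0$,
\[
2Rs_0\int_0^{\tau_\beta}e^{R\kappa_0t}\,dt=\frac{2s_0}{\kappa_0}\bigl(e^{R\kappa_0\tau_\beta}-1\bigr),
\]
and for $\kappa_0=0$ the integrand is constant, giving $2R\tau_\beta s_0$.

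The only delicate point is the first step. The outputs $z_{b,K'}(t)$ are evolved coefficients, whose supports $K'$ also have size at most $R$ because $H_0$ is quadratic and preserves Majorana degree. One must be careful that the union bound uses the size of the fixed label's support, $|K|\le R$, and not that of $J_a$. This is exactly where the factor $R$ comes from, and it gives a system-size-independent bound.
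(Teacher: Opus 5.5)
Your proof is correct and follows the same route as the paper: a union bound over the at most $R$ indices of $\Supp(\lambda)$, domination of the incident coefficient mass by $s(t)$, then Theorem~\ref{thm:local_str_prop} and elementary integration. The only cosmetic difference is that you start from the symmetric form $\int_{-\tau_\beta}^{\tau_\beta}|z_\nu(t)|\,dt$ of $a_\nu$ and fold it using $s(-t)=s(t)$, whereas the paper directly uses the equivalent one-sided form $a_\nu=2\int_0^{\tau_\beta}|z_\nu(t)|\,dt$.
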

\begin{proof}
For fixed $\lambda$,
\[
\sum_{\nu:\nu\sim\lambda}a_\nu
\le2\int_0^{\tau_\beta}\sum_{x\in\Supp(\lambda)}\sum_{\nu:\,x\in\Supp(\nu)}|z_\nu(t)|\,dt
\le2R\int_0^{\tau_\beta}s(t)\,dt.
\]
By the local-strength propagation in Theorem \ref{thm:local_str_prop}, we get the results.
\end{proof}

\subsection{Deletion propagator and rooted overlap histories}
For $S\subseteq\mathcal L$, define
\begin{equation}
V_S(t):=\sum_{\lambda\in S}z_\lambda(t)P_\lambda
\end{equation}
and let $U_S(t_2,t_1)$ solve
\begin{equation}
\partial_{t_2}U_S(t_2,t_1)=-V_S(t_2)U_S(t_2,t_1),\qquad U_S(t_1,t_1)=I.
\end{equation}
Based on the definition, for $S=\mathcal L$, 
$U_{\mathcal L}(t_2,t_1)=e^{t_2H_0}e^{-(t_2-t_1)H}e^{-t_1H_0}$.

By the time-reflection identity in Lemma~\ref{lem:tim_ref},
\begin{equation}
V_S(t)^\dagger=V_S(-t),
\end{equation}
which implies
\begin{equation}
U_S(t_2,t_1)^\dagger=U_S(-t_1,-t_2).
\end{equation}
Choose $\lambda_*\in S$, put $\widehat S=S\setminus\{\lambda_*\}$, and define
\begin{equation}
\mathcal R_{\lambda_*,S}:=U_{\widehat S}(0,\tau_\beta)U_S(\tau_\beta,0).
\end{equation}

\subsubsection{Propagator calculus and the deletion equation}
We first record the two-time identities that will be used in the proof. Uniqueness of solutions gives the composition law
\begin{equation}
U_S(t_3,t_2)U_S(t_2,t_1)=U_S(t_3,t_1)
\end{equation}
and hence $U_S(t_2,t_1)^{-1}=U_S(t_1,t_2)$. In addition to the defining derivative in the first time variable, one has
\begin{equation}
\partial_{t_1}U_S(t_2,t_1)=U_S(t_2,t_1)V_S(t_1).
\end{equation}
Indeed, from
\[
U_S(t_2,t_1+h)U_S(t_1+h,t_1)=U_S(t_2,t_1)
\]
and
\[
U_S(t_1+h,t_1)=I-hV_S(t_1)+O(h^2)
\]
one obtains
\[
U_S(t_2,t_1+h)=U_S(t_2,t_1)+hU_S(t_2,t_1)V_S(t_1)+O(h^2).
\]
In particular,
\begin{equation}
\frac{d}{dt}U_{\widehat S}(0,t)=U_{\widehat S}(0,t)V_{\widehat S}(t),\qquad
\frac{d}{dt}U_S(t,0)=-V_S(t)U_S(t,0).
\end{equation}

Define the deletion propagator up to time $t$ by
\begin{equation}
G(t):=U_{\widehat S}(0,t)U_S(t,0),\qquad G(0)=I.
\end{equation}
Differentiating both factors gives
\begin{align}
\frac{d G(t)}{dt}
&=U_{\widehat S}(0,t)V_{\widehat S}(t)U_S(t,0)
-U_{\widehat S}(0,t)V_S(t)U_S(t,0)\nonumber\\
&=U_{\widehat S}(0,t)\bigl(V_{\widehat S}(t)-V_S(t)\bigr)U_S(t,0)\nonumber\\
&=-z_{\lambda_*}(t)U_{\widehat S}(0,t)P_{\lambda_*}U_S(t,0).
\end{align}
Since $G(t)=U_{\widehat S}(0,t)U_S(t,0)$, left multiplication by $U_{\widehat S}(t,0)$ yields
\begin{equation}
U_S(t,0)=U_{\widehat S}(t,0)G(t).
\end{equation}
Thus the exact deletion equation is
\begin{equation}\label{eq:deletion-propagation}
\frac{d G(t)}{dt}=-A_*(t)G(t),\qquad
A_*(t):=z_{\lambda_*}(t)B_*(t),
\end{equation}
where
\begin{equation}
B_*(t):=U_{\widehat S}(0,t)P_{\lambda_*}U_{\widehat S}(t,0).
\end{equation}
The identity $G(\tau_\beta)=\mathcal R_{\lambda_*,S}$ shows that it suffices to expand the solution of \eqref{eq:deletion-propagation}.

\subsubsection{Rooted overlap histories and the two-stage Dyson expansion}
A rooted overlap history is a finite list $\boldsymbol\lambda=(\lambda_1,\ldots,\lambda_t)$ with $\lambda_1=\lambda_*$. For every $j>1$, either $\lambda_j=\lambda_*$ or $\lambda_j\sim\lambda_i$ for some $i<j$. This is the connectivity forced by a nonzero nested commutator: disjoint even Majorana strings commute.

\begin{thm}[Deletion propagator from the Majorana-basis expansion]\label{thm:connected-propagator}
Assume $\Delta>0$. Put
\begin{equation}
\theta:=\frac{a_{\lambda_*}}{\Delta}\in(0,1].
\end{equation}
There is a probability distribution over triples $(b,Y,\boldsymbol\lambda)$ such that
\begin{equation}
\mathbb E[I+bY]=\mathcal R_{\lambda_*,S},
\end{equation}
where $b\ge0$, $Y$ is the history operator, namely an even Majorana string up to a unit-modulus phase, and, whenever $b\ne0$, the rooted overlap history $\boldsymbol\lambda=(\lambda_1,\ldots,\lambda_t)$ has length $t\ge1$ and satisfies:
\begin{enumerate}[label=(\roman*)]
\item $\boldsymbol\lambda$ is a rooted overlap history;
\item $Y=\zeta P_{\lambda_1}\cdots P_{\lambda_t}$ for some $\zeta\in\mathbb C$ with $|\zeta|=1$; in particular the phase is not assumed to be a fourth root of unity;
\item
\begin{equation}
b\le\theta(3\Delta)^t.
\end{equation}
\end{enumerate}
\end{thm}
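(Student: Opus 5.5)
The plan is to expand $G(\tau_\beta)=\mathcal R_{\lambda_*,S}$ as an absolutely convergent sum $I+\sum_{\omega}c_\omega Y_\omega$. Here $\omega$ runs over labelled rooted overlap histories together with their integration times. Each $Y_\omega$ is a unit-modulus phase times $P_{\lambda_1}\cdots P_{\lambda_t}$, and the sum of $|c_\omega|$ over histories of length $t$ will be bounded by $\theta\,2^{t-1}\Delta^t$.

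Granting this bound, the probabilistic representation follows by reweighting. Define $p_\omega:=|c_\omega|/(\theta(3\Delta)^{t(\omega)})$, set $b_\omega:=|c_\omega|/p_\omega=\theta(3\Delta)^{t}$, and absorb the phase $c_\omega/|c_\omega|$ into $\zeta$. This is why $\zeta$ cannot be restricted to fourth roots of unity: the $z_\lambda(t)$ are complex. The total mass is then
\[
\sum_\omega p_\omega\le\sum_{t\ge1}\tfrac12(2/3)^t=1 .
\]
The leftover probability $1-\sum_\omega p_\omega$ goes to the trivial outcome $b=0$. Hence $\mathbb E[I+bY]=I+\sum_\omega c_\omega Y_\omega=\mathcal R_{\lambda_*,S}$, with $b\le\theta(3\Delta)^t$ by construction. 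Properties (i) and (ii) are then inherited from the expansion.

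\emph{Outer stage.} Solve the deletion equation \eqref{eq:deletion-propagation} by its time-ordered Dyson series,
\[
G(\tau_\beta)=I+\sum_{k\ge1}(-1)^k\int_{\tau_\beta>s_k>\cdots>s_1>0} A_*(s_k)\cdots A_*(s_1)\,ds .
\]
Each $A_*(s)=z_{\lambda_*}(s)B_*(s)$ contributes a fresh occurrence of the root $\lambda_*$. This is exactly the clause ``$\lambda_j=\lambda_*$'' in the definition of a rooted overlap history.

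\emph{Inner stage.} Expand each conjugated root $B_*(s)=U_{\widehat S}(0,s)P_{\lambda_*}U_{\widehat S}(s,0)$ as a series of nested commutators. Differentiating in $s$ gives $\partial_s B_*(s)=[U_{\widehat S}(0,s)V_{\widehat S}(s)U_{\widehat S}(s,0),B_*(s)]$, and iterating yields iterated integrals of nested commutators $[P_{\nu_m},[\cdots[P_{\nu_1},P_{\lambda_*}]\cdots]]$ with weights $\prod_j z_{\nu_j}(u_j)$. By Lemma~\ref{lem:coefficient-norm-bounds-sm}, a nested commutator vanishes unless each $\nu_j$ overlaps the support already generated, i.e.\ unless it overlaps $\lambda_*$ or some earlier $\nu_i$. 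Therefore only rooted overlap histories survive. Each commutator of strings is either zero or $2\times$ (phase) $\times$ the product, which is where the factor $2$ per added label comes from.

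\emph{Main obstacle: the combinatorial mass bound.} Order all labels chronologically into a single history $\boldsymbol\lambda$ and bound the coefficient mass. The first root contributes $\int|z_{\lambda_*}|\le a_{\lambda_*}=\theta\Delta$. Each subsequent root occurrence contributes at most $a_{\lambda_*}\le\Delta$. Each inner label $\nu$ attached to an earlier label $\lambda_i$ contributes, after summing over $\nu\sim\lambda_i$, at most $2\Delta$ by the definition of $\Delta$. The difficulty is that a label may overlap any of the $j-1$ earlier ones, and summing over the choice of $i$ naively produces factorials. These must be cancelled by the simplex volumes $1/k!$ and $1/m!$ of the time-ordered integrals. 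I would do this with a tree-graph/Penrose-type argument: assign each new label to its first overlapping predecessor, and use the nesting order of the integration variables to pay for the choice of parent. The slack that $a_\nu$ integrates over $[-\tau_\beta,\tau_\beta]$ while the times lie in $[0,\tau_\beta]$ gives room here. If this tree argument does not close at exactly $2^{t-1}$, the target $(3\Delta)^t$ still leaves room for a weaker constant per step, provided the corresponding geometric series sums to at most $1$.

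Finally, absolute convergence of both series in the $\|\cdot\|_{\gamma,1}$ norm justifies rearranging into a sum over histories. The same convergence, together with Lemma~\ref{lem:cone-closure}(i), justifies treating the countable expectation as an exact identity.
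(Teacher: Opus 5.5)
Your architecture matches the paper's: the outer Dyson series in occurrences of $\lambda_*$, the inner nested-commutator expansion of $B_*(s)$, connectivity because disjoint even strings commute, and reweighting by $\theta(3\Delta)^t$ with the leftover mass on a $b=0$ atom. The gap is the coefficient-mass bound, which is the entire quantitative content of the theorem. You assert $\sum|c_\omega|\le\theta 2^{t-1}\Delta^t$ and then only sketch a tree-graph argument, but that argument cannot produce this bound.

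Done correctly, the tree-graph argument runs as follows.
\begin{enumerate}
\item Dominate the nonzero-commutator indicator of one root block with $r$ inner labels by the sum over labelled spanning trees on $\{0,\dots,r\}$.
\item After summing over labels and trees, the integrand is symmetric under simultaneous permutation of the pairs $(\tau_j,\nu_j)$. The ordered simplex is therefore exactly $1/r!$ of the cube.
\item Integrating from the leaves, each edge costs $\sum_{\nu\sim\mu}2\int_0^{\tau_\beta}|z_\nu|=\sum_{\nu\sim\mu}a_\nu\le\Delta$, and the root costs $a_{\lambda_*}/2$.
\item Cayley's formula then bounds the block mass by $\frac{\theta}{2}\frac{t^{t-1}}{t!}\Delta^{t}$, with $t=r+1$.
\end{enumerate}
This grows like $e^t$ and exceeds $\theta 2^{t-1}\Delta^t$ for every $t\ge 17$. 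It is also essentially attained: take a label tree of large branching whose successive generations have coefficients concentrated at successively earlier times. So no argument that uses only $\Delta$ gives your bound.

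In particular, ``using the nesting order to pay for the choice of parent'' does not work label by label. $\Delta$ controls only time-integrated overlap sums. The $1/r!$ is recovered only after symmetrizing over all trees, and the price is the Cayley count $(r+1)^{r-1}$ rather than $O(1)$ per step.

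Your fallback, ``a weaker constant per step,'' is exactly where the constants must be tracked, and the ones you state do not close. The normalization $(3\Delta)^t$ works only because $\mathcal T(x)=\sum_{n\ge1}n^{n-1}x^n/n!$ converges at $x=1/3<1/e$.
\begin{itemize}
\item Charging each inner label $2\Delta$, as you write, makes the block series a multiple of $\mathcal T(2/3)$. That diverges, since $2/3>1/e$.
\item Charging the first root $a_{\lambda_*}$ and each later root occurrence $\Delta$, instead of the exact $\int_0^{\tau_\beta}|z_{\lambda_*}|=a_{\lambda_*}/2$, gives total probability $\sum_{m\ge1}\mathcal T(1/3)^m\approx 1.6>1$.
\end{itemize}

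The paper closes the estimate as follows. It charges every root occurrence $a_{\lambda_*}/2=\theta\Delta/2$. It keeps connectivity block by block: a label need only overlap earlier labels of its own block. Your single chronological history would interleave inner times of different blocks and lose this product structure. This gives normalized block mass at most $\frac{\theta}{2}\mathcal T(1/3)<\frac{\theta}{2}$. It then relaxes the outer simplex to a cube with no $1/m!$ and concludes $\int dp\le\theta^{-1}\sum_{m\ge1}(\theta/2)^m\le 1$.
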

\begin{proof}
We expand the deletion equation in two stages. The outer Dyson expansion counts explicit selected-label occurrences $\lambda_*$, while the inner expansion expresses the dressing of each selected-label occurrence through nested commutators with labels in $\widehat S$.

\paragraph{1) outer Dyson expansion and root Dyson blocks.}
The Volterra form of the deletion equation~\eqref{eq:deletion-propagation} is
\begin{align}
    G(t)=I-\int_0^t A_*(s)G(s)\,ds.
\end{align}
Iterating gives the absolutely convergent finite-dimensional Dyson series
\begin{equation}\label{eq:outer-dyson-v4}
G(\tau_\beta)=I+\sum_{m\ge1}(-1)^m
\int_{0\le s_m\le\cdots\le s_1\le \tau_\beta}
A_*(s_1)\cdots A_*(s_m)\,d\mathbf s,
\end{equation}
where $\mathbf s=(s_1, s_2,...,s_m)$.
Since $A_*(s)=z_{\lambda_*}(s)B_*(s)$, each factor in \eqref{eq:outer-dyson-v4} contains one selected-label occurrence $\lambda_*$. We call one such outer factor together with its inner nested-commutator expansion a \emph{root Dyson block}. Thus the outer order $m$ is exactly the number of root Dyson blocks.

\paragraph{2) inner Dyson expansion of the dressed selected-label operator.}
For a fixed operator $B$, differentiating its conjugation and using the two endpoint derivatives gives
\begin{align*}
\frac{d}{d\tau}\left(U_{\widehat S}(0,\tau)B\,U_{\widehat S}(\tau,0)\right)
=U_{\widehat S}(0,\tau)V_{\widehat S}(\tau)B\,U_{\widehat S}(\tau,0)
-U_{\widehat S}(0,\tau)BV_{\widehat S}(\tau)U_{\widehat S}(\tau,0)=U_{\widehat S}(0,\tau)[V_{\widehat S}(\tau),B]U_{\widehat S}(\tau,0).
\end{align*}
Integrating once and iterating the same identity gives, for $0\le s\le \tau_\beta$,
\begin{equation}\label{eq:inner-dyson-v4}
B_*(s)=P_{\lambda_*}+\sum_{r\ge1}\int_{\Sigma_r(s)}
\ad_{V_{\widehat S}(\tau_r)}\cdots\ad_{V_{\widehat S}(\tau_1)}(P_{\lambda_*})\,d\boldsymbol\tau,
\end{equation}
where $\boldsymbol\tau=(\tau_1, \tau_2,..., \tau_r)$, and 
\[
\Sigma_r(s):=\{0\le\tau_r\le\cdots\le\tau_1\le s\}.
\]
Expanding $V_{\widehat S}(\tau)=\sum_{\nu\in\widehat S}z_\nu(\tau)P_\nu$
in \eqref{eq:inner-dyson-v4} yields
\begin{align}\label{eq:label-resolved-inner-v4}
B_*(s)
=P_{\lambda_*}+\sum_{r\ge1}\sum_{\nu_1,\ldots,\nu_r\in\widehat S}
\int_{\Sigma_r(s)}\left(\prod_{j=1}^rz_{\nu_j}(\tau_j)\right)
\ad_{P_{\nu_r}}\cdots\ad_{P_{\nu_1}}(P_{\lambda_*})\,d\boldsymbol\tau.
\end{align}

\paragraph{3) algebraic form and connectivity of one root Dyson block.}
Every $P_\nu$ is a normalized Hermitian even Majorana string. Two such strings either commute or anticommute; in the latter case their commutator is twice their product up to phase. Thus, whenever a nested commutator in \eqref{eq:label-resolved-inner-v4} is nonzero,
\begin{equation}\label{eq:nested-string-form-v4}
\ad_{P_{\nu_r}}\cdots\ad_{P_{\nu_1}}(P_{\lambda_*})
=2^r\zeta(\boldsymbol\nu)P_{\nu_r}\cdots P_{\nu_1}P_{\lambda_*},
\qquad |\zeta(\boldsymbol\nu)|=1.
\end{equation}
At the $j$th nested commutator, if $P_{\nu_j}$ had support disjoint from $P_{\lambda_*}$ and from every earlier $P_{\nu_i}$, then it would commute with their even Majorana string and the $j$th commutator would vanish. Hence every nonzero block sequence
\[
(\lambda_*,\nu_1,\ldots,\nu_r)
\]
is a rooted overlap history. Support overlap is used only as a necessary condition for a nonzero commutator; it need not be sufficient.

\paragraph{4) exact full Dyson-history expansion and overlap graph.}
We now combine the two Dyson expansions into a single explicit history formula. Fix an outer order $m\ge1$ and a vector
\[
\mathbf r=(r_1,\ldots,r_m)\in\mathbb N_0^m.
\]
For block $i$ choose labels $\nu_{i,1},\ldots,\nu_{i,r_i}\in\widehat S$ and define
\begin{equation}\label{eq:block-operator-Ci}
\mathcal C_i:=\ad_{P_{\nu_{i,r_i}}}\cdots\ad_{P_{\nu_{i,1}}}(P_{\lambda_*}),
\end{equation}
with the convention $\mathcal C_i=P_{\lambda_*}$ if $r_i=0$. Define the joint time domain
\begin{align}\label{eq:history-time-domain}
\mathfrak D_{m,\mathbf r}(\tau_\beta):=\Bigl\{
0\le s_m\le\cdots\le s_1\le \tau_\beta,\quad
0\le\tau_{i,r_i}\le\cdots\le\tau_{i,1}\le s_i,
\quad 1\le i\le m
\Bigr\}.
\end{align}
Substitution of \eqref{eq:label-resolved-inner-v4} into \eqref{eq:outer-dyson-v4} gives the exact expansion
\begin{align}\label{eq:full-history-expansion}
G(\tau_\beta)=I
&+\sum_{m\ge1}\sum_{r_1,\ldots,r_m\ge0}
\sum_{\substack{\nu_{i,j}\in\widehat S\\1\le i\le m,\ 1\le j\le r_i}}
(-1)^m
\int_{\mathfrak D_{m,\mathbf r}(\tau_\beta)}
\left(\prod_{i=1}^m z_{\lambda_*}(s_i)\right)
\left(\prod_{i=1}^m\prod_{j=1}^{r_i}z_{\nu_{i,j}}(\tau_{i,j})\right)
\mathcal C_1\cdots\mathcal C_m\,
 d\boldsymbol\tau\,d\mathbf s.
\end{align}
Since $G(\tau_\beta)=\mathcal R_{\lambda_*,S}$, \eqref{eq:full-history-expansion} is also the exact Dyson-history expansion of the deletion propagator.

A \emph{full Dyson history} is the data
\[
h=(m,\mathbf r,\mathbf s,\boldsymbol\tau,\boldsymbol\nu)
\]
appearing in one integrand of \eqref{eq:full-history-expansion}. Its total length is
\begin{equation}\label{eq:history-length-v4}
t(h):=m+\sum_{i=1}^m r_i.
\end{equation}
Introduce occurrence labels
\[
\lambda_{i,0}:=\lambda_*,\qquad \lambda_{i,j}:=\nu_{i,j}\quad(1\le j\le r_i).
\]
The ordered label list in the theorem is the concatenation of the blocks,
\begin{align}\label{eq:history-label-list-v4}
\boldsymbol\lambda(h)=(&\lambda_*,\nu_{1,1},\ldots,\nu_{1,r_1},
\lambda_*,\nu_{2,1},\ldots,\nu_{2,r_2},\ldots).
\end{align}

The \emph{overlap graph} $\Gamma(h)$ has occurrence vertex set
$\{(i,j):1\le i\le m,\ 0\le j\le r_i\}$ and an edge between $(i,j)$ and
$(k,\ell)$ precisely when
\begin{equation}\label{eq:history-overlap-edge}
\Supp(\lambda_{i,j})\cap\Supp(\lambda_{k,\ell})\ne\varnothing,
\end{equation}
i.e. $\lambda_{i,j}\sim\lambda_{k,\ell}$. If the block operator $\mathcal C_i$
is nonzero, then every non-root occurrence must overlap some earlier occurrence,
so the subgraph of $\Gamma(h)$ induced by $(i,0),\ldots,(i,r_i)$ is connected
and rooted at $(i,0)$. Since all root occurrences carry the same label
$\lambda_*$, they overlap one another; therefore every full Dyson history with all
block operators nonzero has connected $\Gamma(h)$.

Whenever all blocks are nonzero, \eqref{eq:nested-string-form-v4} shows that the operator contribution is a normalized even Majorana string up to phase. Since $\sum_i r_i=t(h)-m$, define the scalar integrand
\begin{equation}\label{eq:history-coefficient-v4}
c(h):=(-1)^m2^{\sum_i r_i}
\left(\prod_{i=1}^mz_{\lambda_*}(s_i)\right)
\left(\prod_{i=1}^m\prod_{j=1}^{r_i}z_{\nu_{i,j}}(\tau_{i,j})\right).
\end{equation}
Absorbing all Majorana multiplication phases into an even Majorana string up to a unit-modulus phase, denoted $Y_0(h)$, the history part of the expansion may be written compactly as
\begin{equation}\label{eq:history-integral-compact}
G(\tau_\beta)=I+\int_{\mathfrak H^\times}c(h)Y_0(h)\,d\pi(h),
\end{equation}
where $\mathfrak H^\times$ denotes the nonzero histories and $d\pi$ is counting measure on the discrete data times Lebesgue measure on the ordered time domains. By absorbing also the coefficient phase into
\begin{align}
    Y(h):=\frac{c(h)}{|c(h)|}Y_0(h),
\end{align}
we have
\begin{equation}\label{eq:history-integral-positive}
G(\tau_\beta)=I+\int_{\mathfrak H^\times}|c(h)|Y(h)\,d\pi(h).
\end{equation}
This is the form used below to construct the probability distribution.

\paragraph{5) one-block coefficient mass.}
Fix one root Dyson block with $r$ non-root labels. Its absolute coefficient mass is bounded by
\begin{align}\label{eq:block-mass-def-v4}
M_r:=\;2^r\sum_{\nu_1,\ldots,\nu_r\in\widehat S}
\int_0^{\tau_\beta} ds\,|z_{\lambda_*}(s)|
\int_{\Sigma_r(s)}\prod_{j=1}^r|z_{\nu_j}(\tau_j)|\,d\boldsymbol\tau \times\mathbf 1\!\left\{
\ad_{P_{\nu_r}}\cdots\ad_{P_{\nu_1}}(P_{\lambda_*})\ne0
\right\}.
\end{align}
For a nonzero nested commutator, the overlap graph on vertices $\{0,1,\ldots,r\}$, with $\nu_0:=\lambda_*$, is connected. Hence
\begin{equation}\label{eq:tree-domination-v4}
\mathbf 1_{\{\mathrm{nested}\ne0\}}
\le\sum_{\mathcal T\in\mathfrak T_{r+1}}
\prod_{\{p,q\}\in E(\mathcal T)}\mathbf 1_{\{\nu_p\sim\nu_q\}},
\end{equation}
where $\mathfrak T_{r+1}$ denotes the labeled spanning trees on $\{0,\ldots,r\}$.

We justify carefully the factor $1/r!$. After inserting \eqref{eq:tree-domination-v4} and summing over all labels $\nu_1,\ldots,\nu_r$ and all labeled trees, the resulting nonnegative integrand is invariant under a simultaneous permutation of the $r$ non-root pairs $(\tau_j,\nu_j)$, because the set of labeled trees is itself invariant under relabeling non-root vertices. Therefore
\[
\int_{\Sigma_r(s)}(\cdots)\,d\boldsymbol\tau
=\frac1{r!}\int_{[0,s]^r}(\cdots)\,d\boldsymbol\tau.
\]
Enlarging $[0,s]^r$ to $[0,\tau_\beta]^r$ then gives an upper bound. The root integration contributes
\[
\int_0^{\tau_\beta}|z_{\lambda_*}(s)|\,ds=\frac{a_{\lambda_*}}2.
\]
For a fixed rooted tree, orient every edge away from the root. Given a parent label $\mu$, summing over a child label and integrating its time variable, including the factor $2$ from the commutator, gives
\[
\sum_{\nu:\nu\sim\mu}2\int_0^{\tau_\beta}|z_\nu(t)|\,dt
=\sum_{\nu:\nu\sim\mu}a_\nu\le\Delta.
\]
Traversing the tree from the root therefore contributes at most $\Delta^r$. Cayley's formula gives $|\mathfrak T_{r+1}|=(r+1)^{r-1}$, hence
\begin{equation}\label{eq:block-mass-bound-v4}
M_r\le\frac{a_{\lambda_*}}2\frac{(r+1)^{r-1}}{r!}\Delta^r.
\end{equation}

\paragraph{6) normalization of one root Dyson block.}
Divide a root Dyson block of total length $r+1$ by $(3\Delta)^{r+1}$. Using $\theta=a_{\lambda_*}/\Delta$ and \eqref{eq:block-mass-bound-v4}, its normalized mass is at most
\[
\frac\theta2\frac{(r+1)^{r-1}}{r!3^{r+1}}.
\]
Let
\begin{align}
    \mathcal T(x):=\sum_{n\ge1}\frac{n^{n-1}}{n!}x^n
\end{align}
be the tree function. Summing over $r\ge0$ gives
\[
\frac\theta2\mathcal T(1/3)<\frac\theta2.
\]
To identify the power-series solution, note that for $n\ge4$ the ratio of consecutive summands is
\[
\frac{(n+1)^n}{3^{n+1}(n+1)!}\,
\frac{3^n n!}{n^{n-1}}
=\frac13\left(1+\frac1n\right)^{n-1}<\frac{10}{11}.
\]
Thus the tail from $n=5$ is bounded by the corresponding geometric series, and direct evaluation of the first four terms gives
\[
0<\mathcal T(1/3)\le\frac{4483}{5832}<1.
\]
This selects the solution analytic at zero; the functional equation $\mathcal T(x)e^{-\mathcal T(x)}=x$ alone would not distinguish it from the second positive solution.

\paragraph{7) several root Dyson blocks.}
The outer Dyson series imposes $0\le s_m\le\cdots\le s_1\le \tau_\beta$. For an upper bound on absolute mass, relax this ordering constraint and enlarge the outer simplex to $[0,\tau_\beta]^m$. The $m$ root Dyson blocks then factorize, so the total normalized mass of histories with $m$ root Dyson blocks is bounded by
\[
\left(\frac\theta2\mathcal T(1/3)\right)^m<\left(\frac\theta2\right)^m.
\]
The inner factor $1/r!$ came from permutation symmetry after tree domination; here we merely enlarge the outer simplex, so there is no additional $1/m!$.

\paragraph{8) probability measure on histories.}
Using the explicit history space from 4), define for every nonzero history
\begin{equation}\label{eq:history-probability-v4}
b(h):=\theta(3\Delta)^{t(h)},\qquad
dp(h):=\frac{|c(h)|}{\theta(3\Delta)^{t(h)}}\,d\pi(h).
\end{equation}
The block estimates imply
\[
\int_{\mathfrak H^\times}dp(h)
\le\frac1\theta\sum_{m\ge1}\left(\frac\theta2\right)^m
=\frac1{2(1-\theta/2)}\le1,
\]
where $0<\theta\le1$ because the definition of $\Delta$ includes the self-overlap activity $a_{\lambda_*}$. Assign the remaining probability mass to the zero-amplitude record $b=0$, $Y=I$, $\boldsymbol\lambda=\varnothing$. By \eqref{eq:history-integral-positive},
\begin{align*}
\mathbb E[bY]
=\int_{\mathfrak H^\times}\theta(3\Delta)^{t(h)}Y(h)
\frac{|c(h)|}{\theta(3\Delta)^{t(h)}}\,d\pi(h)
=G(\tau_\beta)-I.
\end{align*}
Since $G(\tau_\beta)=\mathcal R_{\lambda_*,S}$, this proves $\mathbb E[I+bY]=\mathcal R_{\lambda_*,S}$. Every nonzero record has rooted overlap history $\boldsymbol\lambda(h)$, and its history operator is the product of the listed Majorana strings up to a unit-modulus phase and satisfies
\[
b(h)=\theta(3\Delta)^{t(h)}.
\]
The distribution may equivalently be chosen countably supported. Let $\eta$ denote the finite label history, let $A_\eta$ be its exact integrated operator contribution, let $M_\eta$ be the integral of the absolute coefficient, and put $s_\eta:=\theta(3\Delta)^{t(\eta)}$. For $M_\eta>0$ set $p_\eta:=M_\eta/s_\eta$; if also $A_\eta\ne0$, set
\begin{align}
   b_\eta:=s_\eta\frac{\|A_\eta\|}{M_\eta},\qquad
Y_\eta:=\frac{A_\eta}{\|A_\eta\|}, 
\end{align}
while for $A_\eta=0$ take $b_\eta=0$ and $Y_\eta=I$. If $M_\eta=0$, take $p_\eta=b_\eta=0$ and $Y_\eta=I$. Each $A_\eta$ is a complex scalar times one fixed Majorana string, so every nonzero $Y_\eta$ is a history operator; moreover $p_\eta b_\eta Y_\eta=A_\eta$ and $b_\eta\le s_\eta$ because $\|A_\eta\|\le M_\eta$. The discrete histories form a countable union of finite Cartesian powers of $\mathcal L$, and absolute convergence permits the regrouping and gives $\sum_\eta p_\eta\le1$; the residual mass is again assigned to $b=0$, $Y=I$, $\boldsymbol\lambda=\varnothing$.
The theorem follows.
\end{proof}

\subsection{Support-decoupling construction}
\subsubsection{Overlap potential}
The support-decoupling construction needs more than a record of the
support of the current Majorana string. It must also measure how much of the
\emph{remaining} interaction-picture perturbation can still overlap that
string. This is the role of the overlap potential. For a Majorana string $X$ up to a unit-modulus phase, define
\begin{equation}
\Phi_S(X):=\frac1\Delta\sum_{\lambda\in S:\,\Supp(\lambda)\cap\Supp(X)\ne\varnothing}a_\lambda,
\end{equation}
with $\Phi_S(I)=0$.

\paragraph{Why this overlap potential is needed.}
There are two complementary uses of $\Phi_S$.  First, $\Phi_S(X)=0$ means that $X$ is disjoint from every remaining virtual label in $S$.  Since all relevant Majorana strings are even, $X$ then commutes with $V_S(t)$ and with every subsequent propagator generated by $V_S(t)$.  Thus a factor with zero overlap potential is \emph{decoupled}: later deletion steps leave it untouched.

Second, the overlap potential is a normalized potential controlling coefficient growth.  The support-decoupling invariant will be
\begin{equation}
|\alpha|\le 2^{-\Phi_S(X)}.
\end{equation}
If the next deleted label $\lambda_*$ overlaps $X$, then removing it lowers the overlap potential by
\begin{equation}
\theta:=\frac{a_{\lambda_*}}{\Delta},\qquad
\Phi_{S\setminus\{\lambda_*\}}(X)=\Phi_S(X)-\theta.
\end{equation}
The decomposition branch that leaves $X$ unchanged is chosen with probability $2^{-\theta}$, so conditioning on that branch enlarges its coefficient by $2^\theta$.  The decrease in the overlap potential exactly offsets this coefficient increase:
\begin{equation}
2^\theta 2^{-\Phi_S(X)}=2^{-\Phi_{S\setminus\{\lambda_*\}}(X)}.
\end{equation}
For decomposition branches that insert a propagator history, the history length $t$ will upper-bound the new overlap potential, while the smallness condition on the integrated overlap strength will make the reweighted coefficient at most $2^{-t}$.  Lemma~\ref{lem:potential-bounds} below is precisely the bridge between these two statements.

It is useful to introduce the overlap set
\begin{equation}
\mathcal C_S(X):=\{\lambda\in S:\ \Supp(\lambda)\cap\Supp(X)\ne\varnothing\}.
\end{equation}
Then
\begin{equation}
\Phi_S(X)=\frac1\Delta\sum_{\lambda\in\mathcal C_S(X)}a_\lambda.
\end{equation}

\begin{lemma}[Bounds on the overlap potential]\label{lem:potential-bounds}
For Majorana strings $X,Y$ up to unit-modulus phases,
\begin{equation}
\Phi_S(XY)\le\Phi_S(X)+\Phi_S(Y).
\end{equation}
More generally, if
\begin{equation}
Y=\zeta P_{\lambda_1}\cdots P_{\lambda_t},\qquad |\zeta|=1,
\end{equation}
for virtual labels $\lambda_1,\ldots,\lambda_t\in\mathcal L$, then
\begin{equation}
\Phi_S(Y)\le t.
\end{equation}
In particular, the latter estimate holds when $Y$ is generated by a rooted overlap history of length $t$.
\end{lemma}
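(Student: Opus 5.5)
The plan rests on the support structure of products of Majorana strings together with the definition of $\Delta$.

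\emph{Subadditivity.} For strings $X,Y$ up to phase, the product $XY$ equals, up to a unit-modulus phase, the string $\gamma_{\Supp(X)\triangle\Supp(Y)}$. Hence
\[
\Supp(XY)\subseteq\Supp(X)\cup\Supp(Y).
\]
Any $\lambda\in S$ whose support meets $\Supp(XY)$ therefore meets $\Supp(X)$ or $\Supp(Y)$, so
\[
\mathcal C_S(XY)\subseteq\mathcal C_S(X)\cup\mathcal C_S(Y).
\]
Every activity $a_\lambda$ is positive. Summing $a_\lambda/\Delta$ over the union and bounding by the sum over the two sets separately gives $\Phi_S(XY)\le\Phi_S(X)+\Phi_S(Y)$. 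The conventions are consistent: $\Phi_S(I)=0$, and the phase $\zeta$ plays no role because $\Phi_S$ depends only on the support.

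\emph{Bound by the history length.} For $Y=\zeta P_{\lambda_1}\cdots P_{\lambda_t}$, iterating subadditivity gives
\[
\Phi_S(Y)\le\sum_{j=1}^t\Phi_S(P_{\lambda_j}).
\]
It remains to show $\Phi_S(P_\lambda)\le1$ for each $\lambda\in\mathcal L$. By definition,
\[
\Phi_S(P_\lambda)=\frac1\Delta\sum_{\nu\in S,\ \nu\sim\lambda}a_\nu.
\]
Since $S\subseteq\mathcal L$ and each $a_\nu>0$, this sum is at most $\Delta^{-1}\sum_{\nu\in\mathcal L:\nu\sim\lambda}a_\nu$. That is at most $\Delta^{-1}\cdot\Delta=1$, because $\Delta$ is the supremum of exactly this quantity over $\lambda\in\mathcal L$. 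Here $\Delta>0$, as assumed wherever $\Phi_S$ is used. One point to check: $\lambda_j$ must lie in $\mathcal L$, not merely in $S$. This holds because histories are built from labels in $\mathcal L$, and the supremum defining $\Delta$ ranges over all of $\mathcal L$.

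\emph{Application to rooted overlap histories.} The final claim follows from Theorem~\ref{thm:connected-propagator}(ii), which says that a history operator has exactly the form $\zeta P_{\lambda_1}\cdots P_{\lambda_t}$ with $|\zeta|=1$.

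There is no real obstacle here. The only care points are the support-inclusion step for products, where cancellation can only shrink the support, and making sure the normalization uses the global supremum $\Delta$ over $\mathcal L$ rather than a quantity depending on $S$.
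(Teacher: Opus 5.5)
Your proposal is correct and follows essentially the same route as the paper: the inclusion $\Supp(XY)\subseteq\Supp(X)\cup\Supp(Y)$ gives $\mathcal C_S(XY)\subseteq\mathcal C_S(X)\cup\mathcal C_S(Y)$ and hence subadditivity, and the length bound comes from each label's overlap sum being at most the supremum $\Delta$ over $\mathcal L$. The only cosmetic difference is that the paper applies the union bound directly to $\Supp(Y)\subseteq\bigcup_j\Supp(\lambda_j)$, rather than iterating subadditivity and then bounding each $\Phi_S(P_{\lambda_j})\le 1$; this amounts to the same estimate.
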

\begin{proof}
We prove the two estimates separately.

\emph{Subadditivity.}  Multiplication of Majorana strings can cancel Majorana operators, but it cannot create a Majorana operator outside the union of the two input supports.  Hence
\begin{equation}
\Supp(XY)\subseteq\Supp(X)\cup\Supp(Y).
\end{equation}
If $\lambda\in\mathcal C_S(XY)$, choose
\[
x\in\Supp(\lambda)\cap\Supp(XY).
\]
The support inclusion implies that $x\in\Supp(X)$ or $x\in\Supp(Y)$.  Therefore $\lambda\in\mathcal C_S(X)$ or $\lambda\in\mathcal C_S(Y)$, and thus
\begin{equation}
\mathcal C_S(XY)\subseteq\mathcal C_S(X)\cup\mathcal C_S(Y).
\end{equation}
Since all activities $a_\lambda$ are nonnegative,
\begin{align}
\Delta\Phi_S(XY)
=\sum_{\lambda\in\mathcal C_S(XY)}a_\lambda
\le \sum_{\lambda\in\mathcal C_S(X)\cup\mathcal C_S(Y)}a_\lambda
\le \sum_{\lambda\in\mathcal C_S(X)}a_\lambda
   +\sum_{\lambda\in\mathcal C_S(Y)}a_\lambda
=\Delta\Phi_S(X)+\Delta\Phi_S(Y).
\end{align}
Dividing by $\Delta>0$ proves
\[
\Phi_S(XY)\le\Phi_S(X)+\Phi_S(Y).
\]
The same support argument also gives the useful identity
\begin{equation}
\Phi_S(X^\dagger)=\Phi_S(X).
\end{equation}

\emph{A product of $t$ virtual Majorana strings.}  Let
\[
Y=\zeta P_{\lambda_1}\cdots P_{\lambda_t}.
\]
Again, multiplication can only delete support, so
\begin{equation}
\Supp(Y)\subseteq\bigcup_{j=1}^t\Supp(\lambda_j).
\end{equation}
Suppose $\mu\in S$ overlaps $Y$.  Choose
\[
x\in\Supp(\mu)\cap\Supp(Y).
\]
The preceding inclusion implies that $x\in\Supp(\lambda_j)$ for some $j$, hence $\mu\sim\lambda_j$.  Therefore
\begin{equation}
\mathcal C_S(Y)\subseteq\bigcup_{j=1}^t
\{\mu\in S:\ \mu\sim\lambda_j\}.
\end{equation}
Using nonnegativity of the activities and then enlarging each inner sum from $S$ to the full virtual-label set $\mathcal L$ gives
\begin{align}
\Delta\Phi_S(Y)
&=\sum_{\mu\in\mathcal C_S(Y)}a_\mu
\le \sum_{j=1}^t\sum_{\substack{\mu\in S\\ \mu\sim\lambda_j}}a_\mu
\le \sum_{j=1}^t\sum_{\mu:\,\mu\sim\lambda_j}a_\mu.
\end{align}
By the definition of the integrated overlap strength,
\begin{equation}
\sum_{\mu:\,\mu\sim\lambda_j}a_\mu\le\Delta
\end{equation}
for every $j$.  Consequently
\begin{equation}
\Delta\Phi_S(Y)\le t\Delta,
\end{equation}
and therefore
\[
\Phi_S(Y)\le t.
\]
Notice that the rooted overlap condition is not needed for this overlap-potential estimate itself; it is needed earlier, in Theorem~\ref{thm:connected-propagator}, to obtain the coefficient bound for the sampled propagator history.
\end{proof}

\paragraph{Decoupled-factor consequence.}
If $\Phi_S(X)=0$ for an even Majorana string $X$ up to a unit-modulus phase, then $X$ has support disjoint from every label in $S$.  Indeed, every virtual label has positive activity, so a zero sum of activities contains no overlapping label.  Hence
\begin{equation}
[X,P_\lambda]=0\qquad(\lambda\in S),
\end{equation}
and therefore
\begin{equation}
[X,V_S(t)]=0,
\qquad
[X,U_S(t_2,t_1)]=0.
\end{equation}
This is the formal reason why the decoupled-factor condition in the
support-decoupling invariant permits all earlier factors to pass through every
subsequent deletion propagator unchanged.

\paragraph{Bridge between histories and support decoupling.}
A sampled history of length $t$ satisfies $\Phi_S(Y)\le t$. Together with
$b\le\theta(3\Delta)^t$, the decomposition-branch reweighting estimates below ensure that the resulting coefficient is at most
$2^{-t}\le2^{-\Phi_S(Y)}$. Separately, deleting $\lambda_*$ gives the exact relation
\[
\Phi_{S\setminus\{\lambda_*\}}(X)=\Phi_S(X)-\theta,
\qquad
2^\theta2^{-\Phi_S(X)}=2^{-\Phi_{S\setminus\{\lambda_*\}}(X)}.
\]
Without this overlap potential one could still write the history expansion, but there would be no inductive mechanism preventing repeated decomposition-branch reweighting from producing $|\alpha|>1$ before the support-decoupling construction terminates.

\subsubsection{Support-decoupling invariant}
At stage $k$, each record $\omega$ of the current distribution carries its own subset $S(\omega)\subseteq\mathcal L$ and finite ordered list
\[
((\alpha_1(\omega),X_1(\omega)),\ldots,(\alpha_m(\omega),X_m(\omega))).
\]
We maintain the following five invariants. Conditions (P2)--(P5) hold pointwise in $\omega$; in the displayed pointwise formulas we suppress $\omega$. In (P1), the expectation includes $\omega$ and all displayed data are evaluated at that record.

\begin{enumerate}[label=(P\arabic*)]
\item \textbf{Gibbs identity:}
\begin{equation}
e^{-\beta H}=\mathbb E\!\left[e^{-\beta H_0/2}U_S(\tau_\beta,0)\prod_{j=1}^m(I+\alpha_jX_j)U_S(0,-\tau_\beta)e^{-\beta H_0/2}\right].
\end{equation}

\item \textbf{Normalized Hermitian Majorana-string form:} $\alpha_j\in\mathbb R$ for every $j$. If $\alpha_j\ne0$, then
\[
X_j=\gamma_{K_j},\qquad X_j=X_j^\dagger,\qquad X_j^2=I,
\]
for an even index set $K_j$. An inactive factor is represented by $\alpha_j=0$ and $X_j=I$. Thus every real scalar, including signs and Hermitian-part factors $\operatorname{Re}(\zeta)$, is absorbed into $\alpha_j$.

\item \textbf{Disjoint active supports:} whenever $i\ne j$ and $\alpha_i\alpha_j\ne0$,
\[
\Supp(X_i)\cap\Supp(X_j)=\varnothing.
\]

\item \textbf{Decoupled-factor condition:} for every $j<m$ with $\alpha_j\ne0$,
\[
\Phi_S(X_j)=0.
\]

\item \textbf{Coefficient bound:}
\begin{equation}
|\alpha_j|\le2^{-\Phi_S(X_j)}.
\end{equation}
\end{enumerate}
The initial distribution is concentrated on one record with $S=\mathcal L$ and $m=0$, so (P1) follows from the symmetric interaction-picture identity.

\subsubsection{One deletion step}
For each current record $\omega$ with $S(\omega)\ne\varnothing$, if $m\ge1$, $\alpha_m(\omega)\ne0$, and $\Phi_{S(\omega)}(X_m(\omega))>0$, choose a selected label $\lambda_*(\omega)\in S(\omega)$ overlapping $X_m(\omega)$. Otherwise append the inactive factor $(0,I)$ and choose any selected label $\lambda_*(\omega)\in S(\omega)$. Suppressing $\omega$ in the following pointwise formulas, put
\[
\widehat S=S\setminus\{\lambda_*\},\qquad \theta=\frac{a_{\lambda_*}}{\Delta}.
\]
Independently sample $(b_1,Y_1)$ and $(b_2,Y_2)$ from the deletion-propagator theorem. Let $\alpha=\alpha_m$ and $X=X_m$. Choose decomposition branch 1 with probability
\[
p_1=2^{-\theta},
\]
and each decomposition branch $2,\ldots,7$ with probability
\[
p_j=\frac{1-2^{-\theta}}6.
\]
The successor distribution is the dependent union, over current records, of these two countably supported history samples and the seven-way decomposition-branch choice; every successor record inherits $\widehat S$ from the current record. Thus different current records may select different labels.
Define the Hermitian-part map
\[
\Herm(Y):=\frac{Y+Y^\dagger}{2}.
\]
The seven unnormalized Hermitian updates are
\begin{align*}
W_1&=\alpha X,\\
W_2&=b_1\Herm(Y_1),\qquad W_3=b_2\Herm(Y_2),\\
W_4&=b_1\alpha\Herm(Y_1X),\qquad W_5=b_2\alpha\Herm(Y_2X),\\
W_6&=b_1b_2\Herm(Y_1Y_2^\dagger),\\
W_7&=b_1b_2\alpha\Herm(Y_1XY_2^\dagger).
\end{align*}

Every argument of $\Herm$ above is an even Majorana string up to a unit-modulus phase. Thus, if $Y=\zeta \gamma_K$ with $|\zeta|=1$,
\begin{equation}
\Herm(Y)=\operatorname{Re}(\zeta)\gamma_K,\qquad
\operatorname{Re}(\zeta)\in[-1,1].
\end{equation}
For the selected decomposition branch $J$, write
\[
W_J=c_J\gamma_{K_J},\qquad c_J\in\mathbb R,
\]
with the convention $c_J=0$ and $\gamma_{K_J}=I$ if $W_J=0$. The updated factor is defined by
\begin{equation}
\widehat\alpha:=\frac{c_J}{p_J},\qquad \widehat X:=\gamma_{K_J}.
\end{equation}
In particular, the real-part factors are always absorbed into $\widehat\alpha$, never into $\widehat X$.

\begin{lemma}[Validity of one support-decoupling step]
Assume $\Delta\le1/72$. Applying the conditional successor construction above preserves probability normalization, the Gibbs-operator expectation identity, and the pointwise invariants (P2)--(P5).
\end{lemma}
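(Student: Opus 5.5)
The plan is to check the five requirements in turn: normalization and (P1) together, then (P2)--(P5). The main tools are the deletion-propagator representation of Theorem~\ref{thm:connected-propagator} and the overlap-potential bounds of Lemma~\ref{lem:potential-bounds}.

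\emph{Normalization and the Gibbs identity (P1).} Normalization is immediate: the two history samples are probability distributions, and $p_1+\sum_{j=2}^7p_j=1$. For (P1), fix a record and use $U_S(t,0)=U_{\widehat S}(t,0)G(t)$ at $t=\tau_\beta$, together with $U_S(t_2,t_1)^\dagger=U_S(-t_1,-t_2)$. These give
\[
U_S(\tau_\beta,0)=U_{\widehat S}(\tau_\beta,0)\,\mathcal R_{\lambda_*,S},\qquad
U_S(0,-\tau_\beta)=\mathcal R_{\lambda_*,S}^\dagger\,U_{\widehat S}(0,-\tau_\beta).
\]
Every active factor with $j<m$ has $\Phi_S(X_j)=0$ by (P4). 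By the decoupled-factor consequence it therefore commutes with $U_S$, $U_{\widehat S}$ and $\mathcal R_{\lambda_*,S}$. So these factors can be moved outside, leaving $\mathcal R_{\lambda_*,S}(I+\alpha X)\mathcal R_{\lambda_*,S}^\dagger$ in the middle. (If a new inactive factor was appended, then $\alpha=0$, $X=I$.) Writing $\mathcal R_{\lambda_*,S}=\mathbb E[I+b_1Y_1]$ and $\mathcal R_{\lambda_*,S}^\dagger=\mathbb E[I+b_2Y_2^\dagger]$ with independent samples, the product expands as
\[
(I+b_1Y_1)(I+\alpha X)(I+b_2Y_2^\dagger)=I+\alpha X+b_1Y_1+b_2Y_2^\dagger+\alpha b_1Y_1X+\alpha b_2XY_2^\dagger+b_1b_2Y_1Y_2^\dagger+\alpha b_1b_2Y_1XY_2^\dagger .
\]
Since $G^\dagger=\mathcal R_{\lambda_*,S}^\dagger$, the conditional expectation $G(I+\alpha X)G^\dagger$ is Hermitian. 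Hence $\mathbb E[Z]=\mathbb E[\Herm Z]$, and we may replace each term by its Hermitian part. Using $\Herm(A)=\Herm(A^\dagger)$, we get $\Herm(XY_2^\dagger)=\Herm(Y_2X)$ and $\Herm(Y_2^\dagger)=\Herm(Y_2)$. The resulting terms are exactly $W_1,\dots,W_7$. Finally,
\[
I+\sum_J W_J=\sum_J p_J\,(I+W_J/p_J),
\]
which shows that the seven-branch sampling reproduces the expectation. This establishes (P1) with $\widehat S$ in place of $S$.

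\emph{Real Hermitian form (P2) and disjointness (P3).} (P2) holds by construction: $\Herm(\zeta\gamma_K)=\operatorname{Re}(\zeta)\gamma_K$, and the real factor is absorbed into $\widehat\alpha$. For (P3), the new string $\widehat X$ is supported in $\Supp(X)$ together with the supports of the history labels. All history labels lie in $S$: the root is $\lambda_*\in S$ and the other labels lie in $\widehat S\subset S$. Each earlier active $X_j$ is disjoint from every label in $S$, by $\Phi_S(X_j)=0$, and disjoint from $X$, by the old (P3). Hence it is disjoint from $\widehat X$.

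\emph{Decoupling (P4).} The earlier factors keep $\Phi_{\widehat S}\le\Phi_S=0$. If an inactive factor was appended, the old last factor had $\Phi_S(X_m)=0$ by the selection rule, so it too satisfies (P4) after the step.

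\emph{Coefficient bound (P5) --- the main obstacle.} The bound needs to be checked branch by branch.
\begin{itemize}
\item Branch 1 with $X$ overlapping $\lambda_*$: $|\widehat\alpha|\le 2^{\theta}2^{-\Phi_S(X)}=2^{-\Phi_{\widehat S}(X)}$ by the exact potential drop. If instead $X$ is inactive, then $\widehat\alpha=0$.
\item Branches $2$--$7$: use $1-2^{-\theta}\ge\theta/2$ for $\theta\in(0,1]$, which holds by concavity. This gives $1/p_J\le 12/\theta$.
\item Branch 2: Theorem~\ref{thm:connected-propagator} gives $b_1\le\theta(3\Delta)^{t_1}$, so $|\widehat\alpha|\le 12(3\Delta)^{t_1}\le 12\cdot 24^{-t_1}\le 2^{-t_1}$, using $\Delta\le1/72$ and $t_1\ge1$. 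Lemma~\ref{lem:potential-bounds} gives $\Phi_{\widehat S}(Y_1)\le t_1$, so $|\widehat\alpha|\le 2^{-\Phi_{\widehat S}(\widehat X)}$. Branch 3 is identical with $Y_2$.
\item Branches 4 and 5: additionally use $|\alpha|\le2^{-\Phi_S(X)}$ and subadditivity, $\Phi_{\widehat S}(Y_1X)\le t_1+\Phi_S(X)$.
\item Branches 6 and 7: the extra factor $\theta(3\Delta)^{t_2}\le2^{-t_2}$ covers the second history, and subadditivity handles the products $Y_1Y_2^\dagger$ and $Y_1XY_2^\dagger$.
\end{itemize}
The delicate points are exactly the constant bookkeeping, namely that $12\cdot 24^{-t}\le 2^{-t}$ for $t\ge1$ and that one power of $\theta$ cancels $1/p_J$, and the matching of the Hermitized terms in (P1). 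I expect these to be the main places where care is needed.
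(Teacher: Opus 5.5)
Your proposal is correct and follows essentially the same route as the paper. Replacing each term of $Z=(I+b_1Y_1)(I+\alpha X)(I+b_2Y_2)^\dagger$ by its Hermitian part is exactly the paper's symmetrized seven-branch identity $\tfrac12(Z+Z^\dagger)$. The branchwise bounds use the same ingredients: $1-2^{-\theta}\ge\theta/2$, $b\le\theta(3\Delta)^t$, $12(3\Delta)^t\le2^{-t}$, and subadditivity of $\Phi$. You leave two details implicit, both routine: zero-amplitude residual samples give $\widehat\alpha=0$, and $\Phi_{\widehat S}(\widehat X)\le\Phi_{\widehat S}(Y)$ follows from $\Supp(\widehat X)\subseteq\Supp(Y)$.
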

\begin{proof}
The main issue is the coefficient invariant
\begin{equation}\label{eq:pinning-coeff-invariant-v4}
|\alpha_j|\le2^{-\Phi_S(X_j)}.
\end{equation}
A history of length $t$ can contribute at most $t$ units of new overlap
potential, so after decomposition-branch reweighting its coefficient must be no larger than
$2^{-t}$. The condition $\Delta\le1/72$ is a convenient sufficient condition
ensuring precisely this estimate.

\paragraph{1) algebraic form.}
For a history operator $Y=\zeta \gamma_K$, $|\zeta|=1$,
\[
\Herm(Y)=\frac{Y+Y^\dagger}{2}=\operatorname{Re}(\zeta)\gamma_K.
\]
Thus every nontrivial decomposition branch is a real scalar times a normalized Hermitian even Majorana string. By definition this real scalar is absorbed into $\widehat\alpha$, and $\widehat X$ remains normalized. This proves (P2), with no restriction $\zeta\in\{\pm1,\pm i\}$.

\paragraph{2) Gibbs identity.}
Writing $W_J=c_J\gamma_{K_J}$ does not change the selected operator product $\widehat\alpha\widehat X=W_J/p_J$. Hence averaging over the seven decomposition branches gives
\begin{align}\label{eq:seven-branch-average-v4}
\mathbb E_J[I+\widehat\alpha\widehat X]
=\frac12\Big[(I+b_1Y_1)(I+\alpha X)(I+b_2Y_2)^\dagger
+(I+b_2Y_2)(I+\alpha X)(I+b_1Y_1)^\dagger\Big].
\end{align}
Averaging over the two independent propagator samples gives, conditionally on the current record,
\[
\mathbb E[I+\widehat\alpha\widehat X]
=\mathcal R_{\lambda_*,S}(I+\alpha X)\mathcal R_{\lambda_*,S}^\dagger.
\]
Every previously decoupled factor has support disjoint from every label in $S$, hence commutes with $V_S(t)$, the corresponding propagators, $\mathcal R_{\lambda_*,S}$, and every sampled $Y_i$. Together with
\[
U_S(\tau_\beta,0)=U_{\widehat S}(\tau_\beta,0)\mathcal R_{\lambda_*,S},\qquad
U_S(0,-\tau_\beta)=\mathcal R_{\lambda_*,S}^\dagger U_{\widehat S}(0,-\tau_\beta),
\]
the tower property of expectation preserves (P1).

\paragraph{3) disjointness and decoupled factors.}
If $\Herm(Y)\ne0$, the support of the normalized Majorana string extracted from it is contained in $\Supp(Y)$. Each $Y_i$ is a product of virtual labels in $S$. A previously decoupled factor has support disjoint from every such label, so it is disjoint from each $Y_i$ and from every Majorana string appearing in the updated decomposition branch. Thus (P3) is preserved. Replacing $S$ by the smaller set $\widehat S$ preserves (P4) for all earlier decoupled factors.

\paragraph{4) origin of the constant $1/72$.}
The decomposition-branch probabilities are
\[
p_1=2^{-\theta},\qquad p_2=\cdots=p_7=\frac{1-2^{-\theta}}6,
\qquad 0<\theta\le1.
\]
Since $x\mapsto2^{-x}$ is convex on $[0,1]$,
\[
2^{-\theta}\le(1-\theta)2^0+\theta2^{-1}=1-\frac\theta2.
\]
Therefore
\begin{equation}\label{eq:branch-normalization-v4}
1-2^{-\theta}\ge\frac\theta2,
\qquad
\frac{6\theta}{1-2^{-\theta}}\le12.
\end{equation}
If one sampled propagator history has length $t\ge1$, Theorem~\ref{thm:connected-propagator} gives
\[
b\le\theta(3\Delta)^t.
\]
Thus on any nontrivial decomposition branch containing one propagator factor,
\begin{equation}\label{eq:single-history-reweight-v4}
\frac{b}{p_j}
\le\frac{6\theta}{1-2^{-\theta}}(3\Delta)^t
\le12(3\Delta)^t.
\end{equation}
The overlap-potential bound $\Phi_{\widehat S}(Y)\le t$ requires the right-hand side to be at most $2^{-t}$. It is sufficient that
\begin{equation}\label{eq:delta-condition-all-t-v4}
12(3\Delta)^t\le2^{-t}\qquad\text{for every }t\ge1.
\end{equation}
The most restrictive case is $t=1$, which gives
$36\Delta\le\frac12,$
i.e.
\begin{equation}\label{eq:delta-1-72-v4}
\Delta\le\frac1{72}.
\end{equation}
Conversely, under \eqref{eq:delta-1-72-v4}, $3\Delta\le1/24$, and hence for every $t\ge1$,
\begin{equation}\label{eq:two-to-minus-t-v4}
12(3\Delta)^t\le12\left(\frac1{24}\right)^t
=2^{-t}12^{1-t}\le2^{-t}.
\end{equation}
Thus the single-history reweighting bound is exactly small enough to preserve the coefficient bound in the support-decoupling invariant.

For decomposition branches containing both independent propagator samples, if their lengths are $t_1,t_2$ and $t_{\mathrm{tot}}=t_1+t_2$, then
\begin{align}\label{eq:double-history-reweight-v4}
\frac{b_1b_2}{p_j}
\le\frac{6\theta^2}{1-2^{-\theta}}(3\Delta)^{t_{\mathrm{tot}}}\le12\theta(3\Delta)^{t_{\mathrm{tot}}}
\le12(3\Delta)^{t_{\mathrm{tot}}}
\le2^{-t_{\mathrm{tot}}}.
\end{align}
Hence the same condition controls all six nontrivial decomposition branches.

\paragraph{5) decomposition branch 1.}
Suppose the last factor is active. By construction $\lambda_*$ overlaps $X$,
so deleting this one label removes exactly the activity
$a_{\lambda_*}/\Delta=\theta$ from its overlap potential:
\[
\Phi_{\widehat S}(X)=\Phi_S(X)-\theta.
\]
Decomposition branch 1 is selected with probability $2^{-\theta}$, so
\[
|\widehat\alpha|=2^\theta|\alpha|
\le2^{-\Phi_S(X)+\theta}
=2^{-\Phi_{\widehat S}(X)}.
\]
If the factor is inactive, its coefficient remains zero. Thus (P5) holds on decomposition branch 1. This explains the special choice $p_1=2^{-\theta}$: it exactly compensates for the decrease of the overlap potential by $\theta$.

\paragraph{6) history-only branches 2 and 3.}
Write
\[
\Herm(Y_i)=r_iX_i',\qquad |r_i|\le1.
\]
If the sampled history length is $t_i$, then by \eqref{eq:two-to-minus-t-v4},
\[
|\widehat\alpha|
=\frac{b_i|r_i|}{p_i}
\le2^{-t_i}.
\]
Because $\Supp(X_i')\subseteq\Supp(Y_i)$ and Lemma~\ref{lem:potential-bounds} gives $\Phi_{\widehat S}(Y_i)\le t_i$,
\[
\Phi_{\widehat S}(X_i')\le t_i.
\]
Therefore
\[
|\widehat\alpha|\le2^{-t_i}\le2^{-\Phi_{\widehat S}(X_i')},
\]
so (P5) holds.

\paragraph{7) history-and-factor branches 4 and 5.}
Write
\[
\Herm(Y_iX)=r_iX_i',\qquad |r_i|\le1.
\]
Then
\[
|\widehat\alpha|\le2^{-t_i}|\alpha|
\le2^{-t_i-\Phi_S(X)}.
\]
By subadditivity of the overlap potential,
\begin{align*}
\Phi_{\widehat S}(X_i')
\le\Phi_{\widehat S}(Y_iX)
\le\Phi_{\widehat S}(Y_i)+\Phi_{\widehat S}(X)
\le t_i+\Phi_S(X).
\end{align*}
Hence $|\widehat\alpha|\le2^{-\Phi_{\widehat S}(X_i')}$.

\paragraph{8) decomposition branch 6.}
Let $t_{\mathrm{tot}}=t_1+t_2$ and write
\[
\Herm(Y_1Y_2^\dagger)=rX',\qquad |r|\le1.
\]
Equation \eqref{eq:double-history-reweight-v4} gives
\[
|\widehat\alpha|\le2^{-t_{\mathrm{tot}}}.
\]
Moreover,
\[
\Phi_{\widehat S}(X')
\le\Phi_{\widehat S}(Y_1)+\Phi_{\widehat S}(Y_2)
\le t_1+t_2=t_{\mathrm{tot}}.
\]
Thus (P5) holds.

\paragraph{9) decomposition branch 7.}
Write
\[
\Herm(Y_1XY_2^\dagger)=rX',\qquad |r|\le1.
\]
Then
\[
|\widehat\alpha|
\le2^{-(t_1+t_2)}|\alpha|
\le2^{-t_1-t_2-\Phi_S(X)}.
\]
Also
\[
\Phi_{\widehat S}(X')
\le\Phi_{\widehat S}(Y_1)+\Phi_{\widehat S}(X)+\Phi_{\widehat S}(Y_2)
\le t_1+t_2+\Phi_S(X).
\]
Again (P5) follows. These real-part factors have absolute value at most one and cannot increase coefficient magnitudes, so all five invariants are preserved.
\end{proof}

\subsubsection{Support-decoupling construction with mapped Majorana supports}
The Hubbard application needs the preceding construction with overlaps measured
after several Majorana indices are grouped into one physical site.  We record
that variant here.

\begin{thm}[Support-decoupling construction with mapped Majorana supports]\label{prop:mapped-pinning}
Let $\pi:[2n]\to \mathcal X$ map Majorana indices to a finite target set.  For a
virtual label define
\[
 \Supp_\pi(\lambda):=\pi(\Supp(\lambda)),
\]
and put
\[
 \Delta_\pi:=
 \sup_\lambda\sum_{\substack{\nu:\,
 \Supp_\pi(\nu)\cap\Supp_\pi(\lambda)\ne\varnothing}}a_\nu.
\]
Assume time reflection and $\Delta_\pi\le1/72$.  Then
\[
 e^{-\beta H}
 =\sum_\omega p_\omega e^{-\beta H_0/2}
   \prod_j(I+\alpha_{\omega j}\gamma_{J_{\omega j}})
   e^{-\beta H_0/2},
\]
where $p_\omega\ge0$, the operator-valued sum is norm convergent,
$|\alpha_{\omega j}|\le1$, and the mapped supports
$\pi(J_{\omega j})$ are pairwise disjoint.
\end{thm}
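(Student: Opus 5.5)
The plan is to rerun the support-decoupling construction of the preceding subsections verbatim, with every occurrence of ``support'' replaced by ``mapped support'' $\Supp_\pi$. Concretely, define the mapped overlap relation $\lambda\sim_\pi\nu$ iff $\Supp_\pi(\lambda)\cap\Supp_\pi(\nu)\ne\varnothing$, and the mapped overlap potential
\[
\Phi^\pi_S(X):=\frac1{\Delta_\pi}\sum_{\lambda\in S:\,\pi(\Supp(\lambda))\cap\pi(\Supp(X))\ne\varnothing}a_\lambda .
\]
Since $\Supp(\lambda)\cap\Supp(\nu)\ne\varnothing$ implies $\pi$-overlap, $\Delta\le\Delta_\pi$. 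Thus $\Delta_\pi\le1/72$ gives $\Delta\le 1/72$, and every unmapped rooted overlap history is in particular a $\pi$-rooted history.

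The first step is to re-derive Theorem~\ref{thm:connected-propagator} with $\Delta_\pi$ in place of $\Delta$. The connectivity argument, that a nonzero nested commutator forces overlap with an earlier label, yields ordinary overlap and hence $\pi$-overlap. The tree-domination bound in step 5 therefore holds with indicators $\mathbf 1_{\{\nu_p\sim_\pi\nu_q\}}$, and summing a child over $\pi$-neighbours of its parent gives at most $\Delta_\pi$. With $\theta:=a_{\lambda_*}/\Delta_\pi\in(0,1]$ (using self-overlap), this produces $\mathbb E[I+bY]=\mathcal R_{\lambda_*,S}$ with $b\le\theta(3\Delta_\pi)^t$.

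The second step is to rerun Lemma~\ref{lem:potential-bounds} for $\Phi^\pi_S$. Subadditivity holds because $\pi(\Supp(XY))\subseteq\pi(\Supp X)\cup\pi(\Supp Y)$. The bound $\Phi^\pi_S(Y)\le t$ follows from the definition of $\Delta_\pi$.

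The third step is to check the invariants (P1)--(P5) with $\Phi$ replaced by $\Phi^\pi$ and (P3) strengthened to disjoint mapped supports. The branch-probability arithmetic of the one-step lemma depends only on $\theta\le1$ and $b\le\theta(3\Delta_\pi)^t$, so the constant $1/72$ carries over unchanged. The decoupling consequence still holds: $\Phi^\pi_S(X)=0$ forces mapped, and hence ordinary, support disjointness from all of $S$, so $X$ commutes with the propagators and (P1) is preserved. New branch strings have supports inside $\Supp(Y_i)\cup\Supp(X)$. A decoupled earlier factor $X_j$ has $\pi(\Supp X_j)$ disjoint from $\pi(\Supp\lambda)$ for every $\lambda\in S$, and each $Y_i$ is a product of labels in $S$, so mapped disjointness persists. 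For branch 1, the selected $\lambda_*$ is chosen to $\pi$-overlap $X_m$, so deleting it lowers $\Phi^\pi$ by exactly $\theta$.

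The final step is termination and assembly. Each step deletes one label, so after $|\mathcal L|$ steps $S=\varnothing$ on every record, and $U_\varnothing=I$. The Gibbs identity (P1) then reads
\[
e^{-\beta H}=\mathbb E\Bigl[e^{-\beta H_0/2}\prod_j(I+\alpha_jX_j)e^{-\beta H_0/2}\Bigr],
\]
with $|\alpha_j|\le 2^{-\Phi^\pi_\varnothing}=1$ and pairwise disjoint mapped supports. The last factor also has $\Phi^\pi=0$, so it is covered too. The distribution is countably supported by the discrete version in Theorem~\ref{thm:connected-propagator}, and norm convergence of the sum follows from the absolute convergence of the Dyson series.

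The main obstacle I expect is in this last assembly. A zero-potential factor that is never again overlapped must still be compatible with the disjointness requirement for branches produced later. This is exactly where the choice of $\lambda_*$ overlapping the current active factor, and the ``otherwise append $(0,I)$'' rule, must be checked carefully in the mapped setting.
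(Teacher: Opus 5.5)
Your proposal is correct and is essentially the paper's proof. You bound the unmapped overlap strength by $\Delta_\pi$ and rerun the deletion-propagator estimate with $\Delta_\pi$. You replace $\Phi$ by $\Phi^\pi$, with subadditivity coming from $\pi(\Supp(XY))\subseteq\pi(\Supp X)\cup\pi(\Supp Y)$. You observe that $\Phi^\pi_S(X)=0$ forces genuine support disjointness, so decoupled factors still commute with the remaining propagators, and then (P1)--(P5), with mapped disjointness in (P3), and termination go through unchanged.

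The only additions in the paper are the degenerate cases $\beta=0$ and $\Delta_\pi=0$, which you omit. There your $\theta$ and $\Phi^\pi$ would divide by zero, but the initial record is already terminal.

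Also, the final norm convergence follows more directly from $|\alpha_{\omega j}|\le1$ together with $\sum_\omega p_\omega=1$ than from the Dyson series.
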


\begin{proof}
Overlap of the original Majorana supports implies overlap of the mapped
supports, so the integrated overlap strength is at most $\Delta_\pi$. The deletion-propagator
expansion and its amplitude bound therefore remain valid with
$\Delta_\pi$.  For $\Delta_\pi>0$, replace the overlap potential by
\[
 \Phi^\pi_S(X)=\frac1{\Delta_\pi}
 \sum_{\lambda\in S:\,
   \Supp_\pi(\lambda)\cap\pi(\Supp(X))\ne\varnothing}a_\lambda.
\]
The inclusion
\[
 \pi(\Supp(XY))\subseteq
 \pi(\Supp(X))\cup\pi(\Supp(Y))
\]
gives the same subadditivity and history-length bounds as before; no
injectivity of $\pi$ is used.  If $\Phi^\pi_S(X)=0$, the mapped support of
$X$ is disjoint from every remaining label.  Its actual Majorana support is
therefore also disjoint from every remaining label, so $X$ commutes with the
remaining interaction and its propagators.  Thus (P1)--(P5), the exact
one-label deletion drop, and all seven decomposition-branch estimates hold with
$\Phi^\pi$ in place of $\Phi$.  The selected label is chosen separately for each
current record, and every step deletes that record's selected remaining
label.  Iteration therefore terminates after finitely many steps and gives
the displayed exact positive representation.

If the time radius is zero, the deterministic zero-time distribution is already
supported on a terminal record.  If the time radius is positive and $\Delta_\pi=0$, time reflection and membership in $\mathcal L$ imply that every virtual label has positive activity. Since each activity is at most $\Delta_\pi$, no virtual label exists, and the initial deterministic distribution is supported on a terminal record. These two cases never divide by
$\Delta_\pi$.
\end{proof}

\subsection{Completion of the sparse weak-coupling proof}
\begin{thm}[Integrated-overlap criterion]
Let $H=H_0+V$ be as above and let $\Delta$ be the integrated overlap strength
of the interaction-picture Majorana-basis expansion. If
\begin{equation}
\Delta\le\frac1{72},
\end{equation}
then $\rho_\beta(H)$ is a convex-Gaussian state.
\end{thm}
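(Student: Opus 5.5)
The plan is to obtain the criterion as the identity-map case of Theorem~\ref{prop:mapped-pinning}, followed by the cone-closure lemmas. Take $\pi=\mathrm{id}:[2n]\to[2n]$. Then $\Supp_\pi(\lambda)=\Supp(\lambda)$, so $\Delta_\pi=\Delta\le 1/72$. Time reflection is supplied by Lemma~\ref{lem:tim_ref}.

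First I dispose of the degenerate cases. If $\beta=0$, the Gibbs state is the normalized identity, i.e.\ the infinite-temperature Gaussian state. If $\Delta=0$, no virtual label exists, so $V=0$ after removing scalars and $\rho_\beta(H)=\rho_\beta(H_0)$ is Gaussian. Theorem~\ref{prop:mapped-pinning} already covers both situations.

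In all remaining cases Theorem~\ref{prop:mapped-pinning} gives the norm-convergent representation
\[
e^{-\beta H}=\sum_\omega p_\omega\, e^{-\beta H_0/2}\prod_j\bigl(I+\alpha_{\omega j}\gamma_{J_{\omega j}}\bigr)e^{-\beta H_0/2},
\]
with $p_\omega\ge0$, $|\alpha_{\omega j}|\le1$, and pairwise disjoint supports $J_{\omega j}$. Inactive factors equal $I$, which lies in $\FGC$.

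Next I show each summand lies in $\FGC$. By Lemma~\ref{lem:single-string}, each $I+\alpha_{\omega j}\gamma_{J_{\omega j}}\in\FGC$ on its support. By Lemma~\ref{lem:cone-closure}(ii), the ordered graded-CAR product over pairwise disjoint even supports stays in $\FGC$. By Lemma~\ref{lem:cone-closure}(iii), congruence by $G=e^{-\beta H_0/2}$ stays in $\FGC$, since $H_0$ is Hermitian quadratic. By Lemma~\ref{lem:cone-closure}(i), the countable norm-convergent positive sum stays in $\FGC$. Hence $e^{-\beta H}\in\FGC$.

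Finally, $\Tr{e^{-\beta H}}>0$, so dividing by the trace puts $\rho_\beta(H)$ in $\operatorname{conv}(\FG)_n$.

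The main obstacle is the product step. One must check that the commuting disjoint factors, which are written as operators on all $n$ modes, really are the ordered graded-CAR embeddings of local cone elements. This holds because each factor is an even Majorana string on its own support, so it is the embedding of the corresponding local affine string. With that identification, Lemma~\ref{lem:cone-closure}(ii) applies directly.
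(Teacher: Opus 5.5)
Your proposal is correct and follows essentially the paper's argument. The paper runs the (P1)--(P5) support-decoupling iteration to termination and then applies Lemma~\ref{lem:single-string}, disjoint graded-CAR product closure, Gaussian congruence by $e^{-\beta H_0/2}$, and closedness of $\FGC$ under norm-convergent positive sums, exactly as you do; invoking Theorem~\ref{prop:mapped-pinning} with $\pi=\mathrm{id}$ just repackages that same terminal representation (its proof reduces to (P1)--(P5)), so the argument is not circular.
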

\begin{proof}
The case $\beta=0$ is immediate. For $\beta>0$, after $|\mathcal L|$ stages every terminal record $\omega$ satisfies $S(\omega)=\varnothing$, since each successor record deletes the selected remaining label of its current record. At termination, (P5) gives $|\alpha_j|\le1$ for every factor. By (P2), every active factor is exactly
\[
I+\alpha_j\gamma_{K_j},\qquad \alpha_j\in[-1,1],
\]
and by (P3) their supports are pairwise disjoint. Therefore the single-string criterion and Gaussian-cone closure give
\[
\prod_{j:\,\alpha_j\ne0}(I+\alpha_j\gamma_{K_j})\in\FGC.
\]
The final Gibbs invariant becomes
\[
e^{-\beta H}=\mathbb E\!\left[e^{-\beta H_0/2}\prod_{j:\,\alpha_j\ne0}(I+\alpha_j\gamma_{K_j})e^{-\beta H_0/2}\right].
\]
Gaussian congruence preserves $\FGC$. The terminal expectation is a norm-convergent positive sum or integral of cone elements, so the closedness established above gives $e^{-\beta H}\in\FGC$. Dividing by its positive trace gives the normalized claim.
\end{proof}

\begin{cor}[Local-strength criterion]
A sufficient condition for convex-Gaussianity is
\begin{equation}
2Rs_0\int_0^{\beta/2}e^{R\kappa_0t}\,dt\le\frac1{72}.
\end{equation}
For $\kappa_0>0$, equivalently,
\begin{equation}
\frac{2s_0}{\kappa_0}\left(e^{R\kappa_0\beta/2}-1\right)\le\frac1{72}.
\end{equation}
\end{cor}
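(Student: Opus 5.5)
The plan is to chain the two preceding results. The Integrated-overlap criterion already gives convex-Gaussianity once $\Delta\le 1/72$, and the system-size-independent bound on the integrated overlap strength controls $\Delta$ by the local strengths $s_0$ and $\kappa_0$. So it is enough to show that either displayed inequality of the corollary forces $\Delta\le 1/72$, and then to check that the two displayed forms agree when $\kappa_0>0$.

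\emph{First inequality.} With $\tau_\beta=\beta/2$, the overlap-strength lemma gives
\[
\Delta\le 2R\int_0^{\tau_\beta}s(t)\,dt\le 2Rs_0\int_0^{\beta/2}e^{R\kappa_0t}\,dt .
\]
The second step uses Theorem~\ref{thm:local_str_prop}, namely $s(t)\le s_0e^{R\kappa_0|t|}$ for $t\in[0,\tau_\beta]$. That theorem needs every perturbation Majorana string to have support at most $R$; this holds by hypothesis, and removing scalar parts does not change it. Hence the first displayed condition gives $\Delta\le1/72$, and the Integrated-overlap criterion yields $\rho_\beta(H)\in\operatorname{conv}(\FG)_n$.

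\emph{Second inequality.} For $\kappa_0>0$ we integrate explicitly:
\[
2Rs_0\int_0^{\beta/2}e^{R\kappa_0t}\,dt=\frac{2Rs_0}{R\kappa_0}\bigl(e^{R\kappa_0\beta/2}-1\bigr)=\frac{2s_0}{\kappa_0}\bigl(e^{R\kappa_0\beta/2}-1\bigr).
\]
The two conditions are therefore literally the same inequality, which gives the claimed equivalence.

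\emph{Edge cases.} If $\kappa_0=0$, the integral equals $R s_0\beta$, which is the lemma's bound $\Delta\le 2R\tau_\beta s_0$; no division occurs. If $\beta=0$, the Gibbs state is the normalized identity and the claim is immediate. If $\Delta=0$, the integrated-overlap theorem, via its zero-activity case in the mapped construction, already covers it.

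I do not expect a genuine obstacle; everything is bookkeeping. The one point needing care is confirming that the propagation bound is used only for $t\in[0,\beta/2]$, where $|t|=t$, and that the lemma's $\Delta$ bound was stated with exactly this $\tau_\beta$, so that no factor of two is lost between $\beta$ and $\tau_\beta$.
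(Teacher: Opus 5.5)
Your proposal is correct and is the paper's own argument: bound $\Delta\le 2Rs_0\int_0^{\beta/2}e^{R\kappa_0 t}\,dt$ using the integrated-overlap lemma, then invoke the integrated-overlap criterion $\Delta\le 1/72$, and evaluate the integral explicitly to get the closed form when $\kappa_0>0$. Your edge-case remarks are harmless extras, since the criterion as stated already covers $\Delta=0$, and the case $\beta=0$ is handled directly in its proof.
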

\begin{proof}
Combine the bound on the integrated overlap strength with the integrated-overlap criterion.
\end{proof}

\subsubsection{Sparse parameter bound}
\begin{thm}[Sparse close-to-quadratic convex-Gaussian theorem]\label{thm:sparse-weak-sm}
Let $R,d\in\mathbb N$ and $\mathcal E,\epsilon\in\mathbb R$ satisfy $R\ge 2$, $\mathcal E>0$, and $\epsilon>0$. Let
\[
H=H_0+V=\sum_{b\in\mathcal A_0}q_b+\sum_{a\in\mathcal A_1}h_a
\]
be an even fermionic Hamiltonian on finitely many fermionic modes. Assume:
\begin{enumerate}[label=(\roman*)]
\item each $q_b$ is quadratic, Hermitian, and $\|q_b\|_{\gamma,1}\le \mathcal E$;
\item each $h_a$ is even, Hermitian, has Majorana support size at most $R$, and $\|h_a\|_{\gamma,1}\le \epsilon\mathcal E$;
\item the interaction graph on all local terms, with an edge whenever their Majorana supports intersect, has maximum degree at most $d$.
\end{enumerate}
If
\begin{equation}\label{eq:main-window-sm}
0\le\beta\le \frac{1}{R(d+1)\mathcal E}\log\!\left(1+\frac{1}{72\epsilon}\right),
\end{equation}
then
\[
\rho_\beta(H)=\frac{e^{-\beta H}}{\Tr{e^{-\beta H}}}
\]
is a convex-Gaussian state.
\end{thm}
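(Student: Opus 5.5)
The plan is to reduce Theorem~\ref{thm:sparse-weak-sm} to the local-strength criterion corollary. It states that convex-Gaussianity holds whenever
\[
\frac{2s_0}{\kappa_0}\left(e^{R\kappa_0\beta/2}-1\right)\le\frac1{72}
\]
for $\kappa_0>0$, and whenever $R\beta s_0\le 1/72$ for $\kappa_0=0$. It therefore suffices to bound $s_0$ and $\kappa_0$ in terms of $\mathcal E,\epsilon,d$, and then check that the window \eqref{eq:main-window-sm} implies this inequality. The case $\beta=0$ is trivial. We also first drop scalar parts of all terms, as the paper already does, since this does not affect the hypotheses.

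First, apply the sparse-bounds corollary. Each perturbation term has $\|h_a\|_{\gamma,1}\le\epsilon\mathcal E$, and at most $d+1$ local terms contain a fixed Majorana index, so $s_0\le(d+1)\epsilon\mathcal E$. Likewise, each quadratic term contributes at most $2\|q_b\|_{\gamma,1}$ to a row or column of $\mathcal K$, which gives $\kappa_0\le 2(d+1)\mathcal E$. The column-sum bound follows the same way, using antisymmetry of $\mathcal K$ for Hermitian quadratic $H_0$.

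Second, use the monotonicity of $f(\kappa)=\frac{e^{c\kappa}-1}{\kappa}$ in $\kappa\ge0$ (with $f(0)=c$), which holds because $e^{x}-1$ is convex. This lets us replace $\kappa_0$ by its upper bound $\bar\kappa:=2(d+1)\mathcal E$ and treat the cases $\kappa_0>0$ and $\kappa_0=0$ uniformly. Concretely,
\[
2Rs_0\int_0^{\beta/2}e^{R\kappa_0 t}\,dt\le 2Rs_0\int_0^{\beta/2}e^{R\bar\kappa t}\,dt=\frac{2s_0}{\bar\kappa}\left(e^{R(d+1)\mathcal E\beta}-1\right)\le \epsilon\left(e^{R(d+1)\mathcal E\beta}-1\right),
\]
where the last step substitutes $s_0\le(d+1)\epsilon\mathcal E$. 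The window \eqref{eq:main-window-sm} gives $e^{R(d+1)\mathcal E\beta}\le 1+\frac{1}{72\epsilon}$, so the right-hand side is at most $1/72$. The local-strength criterion then yields $\rho_\beta(H)\in\operatorname{conv}(\FG)_n$.

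The main point of care is the constant bookkeeping in the sparse-bounds step. We must confirm that $\kappa_0$ is controlled by $2(d+1)\mathcal E$ in both the row and the column norm $\|\cdot\|_0$, and that the factor $2$ in $\kappa_0$ cancels exactly against the factor $2$ in $\Delta\le 2s_0(e^{R\kappa_0\tau_\beta}-1)/\kappa_0$, leaving exponent $R(d+1)\mathcal E\beta$. We must also check the degenerate cases: $V=0$, or no virtual labels, where $\Delta=0$ and the state is Gaussian. The integrated-overlap criterion handles these because its hypothesis $\Delta\le 1/72$ then holds trivially.
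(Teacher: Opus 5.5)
Your proposal is correct and follows essentially the same route as the paper: substitute the sparse bounds $s_0\le(d+1)\epsilon\mathcal E$ and $\kappa_0\le 2(d+1)\mathcal E$ into the local-strength bound on $\Delta$ to get $\Delta\le\epsilon\bigl(e^{R(d+1)\mathcal E\beta}-1\bigr)\le 1/72$, and then invoke the integrated-overlap criterion. Your additional checks make explicit what the paper's one-line substitution uses implicitly: monotonicity in $\kappa_0$ (which also covers $\kappa_0=0$), the column-sum bound via antisymmetry of $\mathcal K$, and the degenerate cases.
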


\begin{proof}[Proof of Theorem~\ref{thm:sparse-weak-sm}]
By the sparse bounds,
\[
s_0\le(d+1)\epsilon\mathcal E,\qquad \kappa_0\le2(d+1)\mathcal E.
\]
Hence
\[
\Delta\le2R(d+1)\epsilon\mathcal E\int_0^{\beta/2}e^{2R(d+1)\mathcal Et}\,dt
=\epsilon\left(e^{R(d+1)\mathcal E\beta}-1\right).
\]
The assumed inverse-temperature window makes the right side at most $1/72$. Apply the integrated-overlap criterion.
\end{proof}

If $R=0$, $\mathcal E=0$, or $\epsilon=0$, the support and strength hypotheses force the perturbation to be scalar or zero. After scalar removal the Gibbs state is quadratic, and hence Gaussian, for every $\beta\ge0$.

\subsubsection{Weak-coupling lattice fermions}
Suppose $H_0$ contains local hopping and pairing terms with quadratic local scale $\mathcal E$, while $V$ contains local quartic terms with perturbation local scale $\epsilon\mathcal E$. If the interaction graph has maximum degree at most $d$ and every quartic term has Majorana support size at most $R=4$, then, for fixed interaction-graph degree bound $d$ and $\mathcal E$, the theorem certifies a convex-Gaussian inverse-temperature window whose upper endpoint scales as $\frac{\log(1/\epsilon)}{4(d+1)\mathcal E}$.
Unlike a global Dyson-norm criterion, this remains nontrivial for an extensive number of quartic terms.

\subsection{Logarithmic tightness of the weak-coupling inverse-temperature window}
\label{app:tightness}
\begin{proposition}[Asymptotic tightness]\label{prop:tight-sm}
There exists a family of $(R,d)$-sparse fermionic Hamiltonians with fixed Majorana support size bound $R=4$, interaction-graph degree bound $d=4$, quadratic local scale $\mathcal E=1$, and perturbation local scale $\epsilon$, such that for sufficiently small $\epsilon>0$ the Gibbs state is not convex-Gaussian whenever
\begin{align}
  \beta>b(\epsilon):=
\frac{1}{\sqrt{16+\epsilon^2}-\sqrt{4+\epsilon^2}}
\log\!\left(\frac{30}{1-4/\sqrt{16+\epsilon^2}}\right).  
\end{align}

Moreover,
\[
\lim_{\epsilon\to0^+}\left(b(\epsilon)-\log\frac1\epsilon\right)
=\frac12\log960.
\]
Consequently, for fixed Majorana support size and interaction-graph degree bound, the $\log(1/\epsilon)$ dependence in Theorem~\ref{thm:sparse-weak-sm} is asymptotically optimal up to universal constant factors.
\end{proposition}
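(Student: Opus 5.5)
The plan is to take the explicit Hamiltonian of Proposition~\ref{prop:tight}, namely $H_\epsilon=\sum_{j=1}^4 i\gamma_{2j-1}\gamma_{2j}-\epsilon\gamma_1\gamma_3\gamma_5\gamma_7$ on four modes. This is $(4,4)$-sparse with $\mathcal E=1$: there are five terms and each pair intersects at most once. The first step is to diagonalize $H_\epsilon$ exactly. The quadratic part is, up to sign conventions, $\sum_j(2n_j-1)$. The quartic string $\gamma_1\gamma_3\gamma_5\gamma_7$ preserves parity and flips all four occupations, up to phases, so it couples $|0000\rangle$ with $|1111\rangle$, and more generally $|\mathbf n\rangle$ with $|\bar{\mathbf n}\rangle$. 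Hence $H_\epsilon$ is block diagonal with $2\times2$ blocks $\begin{pmatrix}E&\epsilon\\ \epsilon&-E\end{pmatrix}$, up to phases, where $E\in\{4,2,0\}$. The spectrum is $\pm\sqrt{16+\epsilon^2}$, $\pm\sqrt{4+\epsilon^2}$ (each with multiplicity $4$ from the four blocks with $E=2$), and $\pm\epsilon$. The ground state is unique, with energy $-\sqrt{16+\epsilon^2}$:
\[
|\psi\rangle=\cos a\,|0000\rangle+e^{i\phi}\sin a\,|1111\rangle,\qquad \cos 2a=\frac{4}{\sqrt{16+\epsilon^2}}.
\]
The spectral gap above it is $g(\epsilon)=\sqrt{16+\epsilon^2}-\sqrt{4+\epsilon^2}$.

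The second step is a fidelity witness. Let $F_G(\psi):=\max_{\varphi\ \mathrm{Gaussian\ pure}}|\langle\varphi|\psi\rangle|^2$. Every mixed Gaussian state is a convex combination of pure Gaussian states. Therefore every $\sigma\in\operatorname{conv}(\FG)_4$ satisfies $\langle\psi|\sigma|\psi\rangle\le F_G(\psi)$, since the left side is linear in $\sigma$. On the other hand, the Gibbs state has at most $15$ excited levels, each with relative Boltzmann weight at most $e^{-\beta g}$. This gives
\[
\langle\psi|\rho_\beta|\psi\rangle\ \ge\ \frac{1}{1+15e^{-\beta g}}\ \ge\ 1-15e^{-\beta g}.
\]
Hence $\rho_\beta$ is not convex-Gaussian as soon as $15e^{-\beta g}<1-F_G(\psi)$.

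The key computation, and the step I expect to be the main obstacle, is to show that $F_G(\psi)=\cos^2 a$, so that $1-F_G=\sin^2a=\tfrac12\bigl(1-4/\sqrt{16+\epsilon^2}\bigr)$. With this value, the condition above is exactly $\beta>b(\epsilon)$. To prove it, I would parametrize an even pure state $\varphi$ by amplitudes $c_0$, $c_{ij}$ (for $i<j$) and $c_{1234}$. I would then use the standard characterization that a four-mode even pure state is Gaussian if and only if
\[
c_0c_{1234}=c_{12}c_{34}-c_{13}c_{24}+c_{14}c_{23}.
\]
This is Wick's theorem, or equivalently the Pfaffian form of the amplitudes for $c_0\ne0$; the case $c_0=0$ is handled by continuity or by exchanging the roles of particles and holes. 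By AM--GM, $|c_{12}c_{34}|+|c_{13}c_{24}|+|c_{14}c_{23}|\le\frac12\sum_{i<j}|c_{ij}|^2$, so the constraint gives
\[
2|c_0c_{1234}|\le 1-|c_0|^2-|c_{1234}|^2 .
\]
Writing $x=|c_0|$ and $y=|c_{1234}|$, this says $(x+y)^2\le1$. We then maximize $(x\cos a+y\sin a)^2$ over $x^2+y^2\le1$ with $x+y\le1$. Because $a<\pi/4$, the maximum sits at the vertex $(x,y)=(1,0)$ and equals $\cos^2a$. This value is attained by $|0000\rangle$.

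Finally comes the asymptotics. As $\epsilon\to0$, $g(\epsilon)\to 2$ with $g=2+O(\epsilon^2)$, and $1-4/\sqrt{16+\epsilon^2}=\epsilon^2/32+O(\epsilon^4)$. Therefore
\[
b(\epsilon)=\tfrac12\log\frac{960}{\epsilon^2}+o(1)=\log\frac1\epsilon+\tfrac12\log960+o(1).
\]
Comparing with the window $\beta\lesssim\frac{1}{20}\log(1/\epsilon)$ of Theorem~\ref{thm:sparse-weak-sm}, taken with $R=4$, $d=4$, $\mathcal E=1$, shows that the $\log(1/\epsilon)$ scaling is optimal up to universal constants.
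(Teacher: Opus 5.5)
Your proposal is correct, and its skeleton matches the paper's proof. Both use the same four-mode star Hamiltonian, the same two-level ground state $\cos a|0000\rangle+e^{i\phi}\sin a|1111\rangle$, the same gap $\sqrt{16+\epsilon^2}-\sqrt{4+\epsilon^2}$, the same population bound $p_0\ge1-15e^{-\beta g}$ set against a fidelity witness, and the same asymptotics.

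The genuine difference is in the key fidelity value.

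\begin{itemize}
\item The paper imports it from the Oszmaniec--Gutt--Ku\'s four-mode criterion, $F_{\rm Gauss}=\frac12\bigl(1+\sqrt{1-C_+^2}\bigr)$ with $C_+=\sin 2a$, which gives $\cos^2a$.
\item You derive it from the quadric relation $c_0c_{1234}=c_{12}c_{34}-c_{13}c_{24}+c_{14}c_{23}$ satisfied by even four-mode Gaussian vectors. Then AM--GM reduces it to a linear program on $x+y\le1$, whose maximum is at $(1,0)$ because $a<\pi/4$.
\end{itemize}

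Your route is self-contained, and it needs only the easy direction (Gaussian $\Rightarrow$ quadric relation). Only the upper bound $F_G\le\cos^2a$ is needed to obtain exactly the threshold $b(\epsilon)$; attainment by $|0000\rangle$ is a bonus. The paper's route is shorter, but it rests on an external theorem that gives the exact fidelity for every positive-parity four-mode pure state, which is more than required here.

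Two points should be made explicit in your write-up:
\begin{itemize}
\item Pure Gaussian states have definite parity. So odd ones have zero overlap with $|\psi\rangle$, which justifies restricting to even $\varphi$. The paper states this explicitly.
\item The $c_0=0$ case needs a precise argument. For example, $\langle0000|W\varphi\rangle$ is real-analytic in the Gaussian unitary $W$ and not identically zero. Hence Gaussian vectors with $c_0\ne0$ are dense among even Gaussian vectors, and the relation extends by continuity.
\end{itemize}

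Your diagonalization over the pairs $\{\mathbf n,\bar{\mathbf n}\}$ is equivalent to the paper's identity $H_\epsilon^2=H_0^2+\epsilon^2$, and it makes the uniqueness of the ground state more transparent.

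The paper also embeds $N$ copies via ordered CAR products and uses restriction (Lemma~\ref{lem:cone-closure}(iv)) to show the obstruction is uniform in system size. The statement as written does not require this, but it strengthens the comparison with the size-independent window of Theorem~\ref{thm:sparse-weak-sm}.
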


\subsubsection{A sparse four-mode family}
\label{sec:matching-family}
We next give a family with fixed Majorana support size and interaction-graph degree bound showing that the logarithmic dependence on the perturbation Majorana coefficient norm bound is necessary. The only external ingredient is the complete four-mode convex-Gaussian criterion of Oszmaniec, Gutt, and Ku\'s~\cite{PhysRevA.90.020302}. For a pure state $|\psi\rangle$ in the positive-parity sector of four fermionic modes, their generalized concurrence is
\begin{equation}
C_+(|\psi\rangle)=|\langle\psi|\Theta_+|\psi\rangle|,
\end{equation}
and $C_+=0$ if and only if $|\psi\rangle$ is Gaussian. Moreover the maximal fidelity with the convex-Gaussian set is
\begin{equation}
F_{\rm Gauss}(|\psi\rangle)=\frac12\left(1+\sqrt{1-C_+(|\psi\rangle)^2}\right).
\label{eq:FG-fourmode}
\end{equation}
The antiunitary $\Theta_+$ implements excitation-hole duality; in particular its phase may be chosen so that
\begin{equation}
\Theta_+|0000\rangle=|1111\rangle,
\qquad
\Theta_+|1111\rangle=|0000\rangle.
\label{eq:theta-vac-full}
\end{equation}

\paragraph{The four-mode Hamiltonian.}
Take four complex fermionic modes with eight Majorana operators $\gamma_1,\ldots,\gamma_8$. We choose the convention
\[
\gamma_{2j-1}=c_j+c_j^\dagger,
\qquad
\gamma_{2j}=i(c_j^\dagger-c_j),
\]
so that
\[
Z_j:=i\gamma_{2j-1}\gamma_{2j}=2n_j-I,
\qquad j=1,\ldots,4.
\]
The four operators $Z_j$ are quadratic Majorana terms with pairwise disjoint supports. Define
\begin{equation}
H_0:=\sum_{j=1}^4 Z_j,
\qquad
Q:=\gamma_1\gamma_3\gamma_5\gamma_7,
\qquad
H_\epsilon:=H_0-\epsilon Q.
\label{eq:matching-gadget}
\end{equation}
The quartic operator is Hermitian, $Q^2=I$, and shares exactly one Majorana with every $Z_j$. Hence
\begin{equation}
\{Q,Z_j\}=0\qquad(j=1,\ldots,4),
\label{eq:Q-anticommutes}
\end{equation}
while the four $Z_j$ have mutually disjoint supports. The interaction graph is therefore a star graph with four leaves: $Q$ has degree $4$ and each quadratic dimer has degree $1$. Consequently the four-mode Hamiltonian is $(R,d)$-sparse with
\begin{equation}
R=4,\qquad d=4,\qquad \mathcal E=1,
\end{equation}
and the nonquadratic local scale is exactly $\epsilon$.

\paragraph{Ground state and non-Gaussianity.}
The states $|0000\rangle$ and $|1111\rangle$ have $H_0$-energies $-4$ and $+4$, respectively, and $Q$ flips all four occupations. Choosing the phase of $|1111\rangle$ so that $Q|0000\rangle=|1111\rangle$, the restriction of $H_\epsilon$ to their span is
\begin{equation}
\begin{pmatrix}
-4&-\epsilon\\
-\epsilon&4
\end{pmatrix}.
\label{eq:twolevel-gadget}
\end{equation}
Put
\[
E_0:=-\sqrt{16+\epsilon^2},
\qquad
E_1:=-\sqrt{4+\epsilon^2}.
\]
The unique ground state is
\begin{equation}
|g_\epsilon\rangle
=\cos\theta_\epsilon|0000\rangle
+\sin\theta_\epsilon|1111\rangle,
\label{eq:ground-gadget}
\end{equation}
where
\begin{equation}
\cos(2\theta_\epsilon)=\frac4{|E_0|},
\qquad
\sin(2\theta_\epsilon)=\frac{\epsilon}{|E_0|}.
\label{eq:mix-angle}
\end{equation}
Using~\eqref{eq:theta-vac-full},
\begin{equation}
C_+(g_\epsilon)
=2|\cos\theta_\epsilon\sin\theta_\epsilon|
=\frac{\epsilon}{\sqrt{16+\epsilon^2}}.
\label{eq:gadget-concurrence}
\end{equation}
Thus $|g_\epsilon\rangle$ is non-Gaussian for every $\epsilon>0$. Equation~\eqref{eq:FG-fourmode} gives the sharper bound
\begin{equation}
F_{\rm Gauss}(g_\epsilon)
=\frac12\left(1+\frac4{\sqrt{16+\epsilon^2}}\right),
\label{eq:gadget-fidelity}
\end{equation}
and hence
\begin{equation}
1-F_{\rm Gauss}(g_\epsilon)
=\frac12\left(1-\frac4{\sqrt{16+\epsilon^2}}\right)
=\frac{\epsilon^2}{64}+O(\epsilon^4).
\label{eq:gadget-deficit}
\end{equation}
For any convex-Gaussian state $\sigma$ on the full four-mode Fock space,
\begin{equation}
\langle g_\epsilon|\sigma|g_\epsilon\rangle
\le F_{\rm Gauss}(g_\epsilon).
\label{eq:gadget-witness}
\end{equation}
Indeed, odd-parity Gaussian components have zero overlap with $|g_\epsilon\rangle$, while each even-parity pure Gaussian component obeys~\eqref{eq:gadget-fidelity}; convexity then gives~\eqref{eq:gadget-witness}.

\paragraph{Spectral gap.}
Equation~\eqref{eq:Q-anticommutes} implies $\{H_0,Q\}=0$, and therefore
\[
H_\epsilon^2=H_0^2+\epsilon^2I.
\]
The possible absolute eigenvalues of $H_0$ are $4,2,0$. Hence the lowest two energies of $H_\epsilon$ are $E_0$ and $E_1$, and the spectral gap is
\begin{equation}
\Delta_\epsilon
:=E_1-E_0
=\sqrt{16+\epsilon^2}-\sqrt{4+\epsilon^2}
=2-\frac{\epsilon^2}{8}+O(\epsilon^4).
\label{eq:gadget-gap}
\end{equation}
In particular, the gap remains bounded away from zero as $\epsilon\to0^+$.

\paragraph{Thermal failure of convex-Gaussianity.}
Let
\[
\rho_{\beta,\epsilon}
:=\frac{e^{-\beta H_\epsilon}}{\Tr{e^{-\beta H_\epsilon}}},
\qquad
p_0(\beta,\epsilon)
:=\langle g_\epsilon|\rho_{\beta,\epsilon}|g_\epsilon\rangle.
\]
The four-mode Fock space has dimension $16$, so there are $15$ excited eigenstates, each at energy at least $\Delta_\epsilon$ above the ground state. Therefore
\begin{equation}
\frac{1-p_0}{p_0}
\le15e^{-\beta\Delta_\epsilon},
\qquad
p_0\ge1-15e^{-\beta\Delta_\epsilon}.
\label{eq:gadget-groundpopulation}
\end{equation}
If $\rho_{\beta,\epsilon}$ were convex-Gaussian,~\eqref{eq:gadget-witness} would imply $p_0\le F_{\rm Gauss}(g_\epsilon)$. Hence a sufficient condition for thermal failure of convex-Gaussianity is
\begin{equation}
15e^{-\beta\Delta_\epsilon}<1-F_{\rm Gauss}(g_\epsilon).
\label{eq:gadget-nongaussian-condition}
\end{equation}
Equivalently,
\begin{equation}
\beta>
\frac1{\Delta_\epsilon}
\log\!\left[
\frac{30}{1-4/\sqrt{16+\epsilon^2}}
\right]
\quad\Longrightarrow\quad
\rho_{\beta,\epsilon}\notin\operatorname{conv}(\FG)_4.
\label{eq:gadget-upper-threshold}
\end{equation}
Using~\eqref{eq:gadget-deficit} and~\eqref{eq:gadget-gap}, the right-hand side is
\begin{equation}
\log\frac1\epsilon+\frac12\log960+o(1)
\qquad(\epsilon\to0^+).
\label{eq:gadget-asymptotic}
\end{equation}
Thus a fixed $(R,d)=(4,4)$ sparse four-mode construction is not convex-Gaussian at inverse temperature $O(\log(1/\epsilon))$.

\paragraph{An arbitrarily large sparse family.}
For $N\ge1$, form the Hamiltonian $\sum_{\ell=1}^N H_{\epsilon,\ell}$ recursively using canonical ordered CAR embeddings. The parameters $R=4$, $d=4$, and $\mathcal E=1$ remain independent of $N$, while its Gibbs state is the ordered graded-CAR product of $N$ copies of $\rho_{\beta,\epsilon}$, not an ordinary tensor power. If this state were convex-Gaussian, its restriction to the last four-mode CAR subsystem would also be convex-Gaussian, but that restriction is the original one-block state. This contradicts~\eqref{eq:gadget-upper-threshold}. Therefore the same threshold for failure of convex-Gaussianity holds uniformly for all $N$.

Combining this family with Theorem~\ref{thm:sparse-weak-sm} shows that, for fixed Majorana support size and interaction-graph degree bound, the dependence of the universal convex-Gaussian inverse-temperature window on the perturbation Majorana coefficient norm bound is
\begin{equation}
\Theta(\log(1/\epsilon))
\end{equation}
up to nonoptimized constant factors.

\subsection{Weak-coupling Fermi--Hubbard specialization}
\label{app:weak-hubbard}
For a finite simple graph $G=(\Lambda,E)$ with $|\Lambda|=L$, consider the
centered spinful Fermi--Hubbard Hamiltonian
\[
H_{U,t}
=U\sum_{x\in\Lambda}
  \left(n_{x\uparrow}-\frac12\right)
  \left(n_{x\downarrow}-\frac12\right)
-t\sum_{\{x,y\}\in E}\sum_{\sigma\in\{\uparrow,\downarrow\}}
  \left(c_{x\sigma}^{\dagger}c_{y\sigma}
       +c_{y\sigma}^{\dagger}c_{x\sigma}\right).
\]
Define
\[
V_x:=(2n_{x\uparrow}-I)(2n_{x\downarrow}-I),
\qquad
H_{\rm hop}:=-\sum_{\{x,y\}\in E}\sum_{\sigma\in\{\uparrow,\downarrow\}}
  \left(c_{x\sigma}^{\dagger}c_{y\sigma}
       +c_{y\sigma}^{\dagger}c_{x\sigma}\right).
\]
With $V_{\rm int}:=\frac14\sum_{x\in\Lambda}V_x$, the centered Hamiltonian has the
exact decomposition
\[
H_{U,t}=tH_{\rm hop}+UV_{\rm int}.
\]

\subsubsection{Sparse-parameter verification}
Suppose that $\deg_G(x)\leq D$ for every $x\in\Lambda$.  For each undirected
edge $\{x,y\}\in E$ and spin $\sigma$, the corresponding local hopping term is
even, Hermitian, and quadratic.  Its Majorana expansion is supported on the
two Majoranas of the mode $(x,\sigma)$ and the two of $(y,\sigma)$, and its
Majorana coefficient norm is $|t|$.  The on-site interaction term $(U/4)V_x$ is even and
Hermitian, is supported on the four Majoranas at $x$, and has Majorana
coefficient norm at most $|U|/4$. Since the hopping sum is indexed by
undirected edges and
$V_x=4(n_{x\uparrow}-\tfrac12)(n_{x\downarrow}-\tfrac12)$, these local terms
sum exactly to the displayed $H_{U,t}$.

It remains to bound the interaction-graph degree. An on-site interaction term at $x$
overlaps at most two local hopping terms per edge incident on $x$, hence at most
$2D$ terms. A local hopping term on $\{x,y\}$ with spin $\sigma$ overlaps the two
on-site interaction terms at its endpoints and the other same-spin local hopping terms incident
on either endpoint. Their number is at most
$(\deg_G(x)-1)+(\deg_G(y)-1)+2\leq2D$. Thus $d\leq2D$.
The sparse theorem therefore applies with $R=4$, $\mathcal E=|t|$, $d=2D$,
and $\epsilon=|U|/(4|t|)$, proving convex-Gaussianity throughout the following inverse-temperature window:
\[
0\leq\beta\leq
\frac{1}{4(2D+1)|t|}
\log\!\left(1+\frac{|t|}{18|U|}\right).
\]
For fixed $D$ and $|U|/|t|\to0$, the upper endpoint of this sufficient
window scales as
\[
\Theta\!\left(\frac1{|t|}\log\frac{|t|}{|U|}\right).
\]
The theorem is a sufficient structural bound; its constants are not optimized.

\section{Proof of the strong-coupling Fermi--Hubbard theorem}
\label{app:strong-proof}
\begin{thm}[Strong-coupling Fermi--Hubbard theorem]
\label{thm:hubbard-strong-sm}
Let $D\in\mathbb N$, let $U,t,\beta\in\mathbb R$, and suppose
\[
\beta\geq0,\qquad U\neq0,\qquad \deg_G(x)\leq D
\quad\text{for every }x\in\Lambda.
\]
If
\begin{equation}
\frac{8D|t|}{|U|}
\left(e^{\beta|U|/2}-1\right)\leq\frac1{72},
\label{eq:hubbard-strong-condition-sm}
\end{equation}
then the Gibbs state $\rho_\beta(H_{U,t})$ is convex-Gaussian on
the $2L$ fermionic modes. In particular, if also $t\neq0$ and $D>0$, the
explicit inverse-temperature window
\begin{equation}
\beta\leq\frac{2}{|U|}
\log\!\left(1+\frac{|U|}{576D|t|}\right)
\label{eq:hubbard-strong-window-sm}
\end{equation}
implies the same conclusion.
\end{thm}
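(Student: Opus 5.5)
We will run the support-decoupling machinery of Appendix~\ref{app:sparse-proof} with the roles of the two parts exchanged. The reference Hamiltonian is $H_{\rm ref}:=UV_{\rm int}$ and the perturbation is $tH_{\rm hop}$. The interaction picture is taken with respect to the commuting on-site operators $V_x$:
\[
h_e(s):=e^{sUV_{\rm int}}\,t h_e\,e^{-sUV_{\rm int}},\qquad e=(\{x,y\},\sigma).
\]
Each hopping Majorana bilinear involves one Majorana at site $x$ and one at site $y$. That Majorana anticommutes with $V_x$, so its conjugation by $e^{s U V_x/4}$ produces $\cosh(sU/2)\,\gamma+\sinh(sU/2)\,(\cdots)$, where $(\cdots)$ is a cubic Majorana string supported on site $x$. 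The evolved hopping term therefore stays supported on the two sites $x,y$. Its coefficient mass grows like $|t|e^{|s||U|/2}$ per Majorana bilinear, with a few strings per edge. Time reflection (Lemma~\ref{lem:tim_ref}) holds verbatim because $UV_{\rm int}$ is Hermitian.

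Next we apply the mapped-support version, Theorem~\ref{prop:mapped-pinning}, with $\pi$ sending each Majorana index to its site in $\Lambda$. A label supported on $\{x,y\}$ overlaps, in mapped support, only labels on edges incident to $x$ or $y$, for both spins. Integrating the coefficient mass over $[-\beta/2,\beta/2]$ and counting at most about $2D$ incident edges, each with $2$ spins, should give
\[
\Delta_\pi\le \frac{8D|t|}{|U|}\bigl(e^{\beta|U|/2}-1\bigr).
\]
The factor $8D$ is what bookkeeping of the per-edge string mass and edge incidences has to reproduce. Hypothesis \eqref{eq:hubbard-strong-condition-sm} then gives $\Delta_\pi\le 1/72$, and the construction yields
\[
e^{-\beta H_{U,t}}=\sum_\omega p_\omega\, e^{-\beta UV_{\rm int}/2}\prod_j\bigl(I+\alpha_{\omega j}\gamma_{J_{\omega j}}\bigr)e^{-\beta UV_{\rm int}/2},
\]
with $|\alpha_{\omega j}|\le1$ and pairwise disjoint site-blocks $\pi(J_{\omega j})$. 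The proof of Theorem~\ref{prop:mapped-pinning} uses only the Dyson/deletion calculus and time reflection, never quadraticity of $H_0$, so it transfers unchanged.

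The remaining step, and the main obstacle, is showing that each term lies in $\FGC$. Congruence by $e^{-\beta UV_{\rm int}/2}$ is not Gaussian, so Lemma~\ref{lem:cone-closure}(iii) does not apply. Instead we factor $e^{-\beta UV_{\rm int}/2}=\prod_x e^{-\beta UV_x/8}$. Sites not touched by any $J_{\omega j}$ contribute the atomic factor $e^{-\beta UV_x/4}$, which lies in $\FGC$ by Proposition~\ref{prop:atomic}. Each block $J$ commutes or anticommutes with every $V_x$. We write $e^{-aV_x}\gamma_J e^{-aV_x}$ as either $e^{-2aV_x}\gamma_J$ (commuting case) or $\gamma_J$ (anticommuting case). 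For a block touching sites where $\gamma_J$ has odd degree, this yields $e^{-\sum_x a V_x}+\alpha\gamma_J$ times commuting atomic factors on the sites of even degree.

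This reduces everything to the local lemma stated in the sketch: $e^{-a_xV_x-a_yV_y}+r\gamma_J\in\FGC_4$ for $|r|\le1$ whenever $\gamma_J$ has odd degree on both $x$ and $y$. Since the hopping labels are two-site, the decoupled blocks should be unions of edge blocks. We would first check whether the terminal strings always have odd degree exactly on pairs of sites. If not, we would handle even-degree sites by commuting them out as above.

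We prove the local lemma on $4$ modes. We write $e^{-a_xV_x}=\cosh a_x\,(I-\tanh a_x V_x)$ and expand $(I-uV_x)(I-vV_y)+r'\gamma_J$, with $|r'|\le|r|/(\cosh a_x\cosh a_y)$. Because $\gamma_J$ anticommutes with both $V_x$ and $V_y$, the four operators $V_x,V_y,\gamma_J,V_xV_y\gamma_J$ generate a small Clifford-like algebra. We would decompose the operator as a positive mixture of products of one-site pieces $I\pm\gamma_{A}$ and $I\pm V$, modeled on the proof of Lemma~\ref{lem:single-string}. As a fallback, we would verify membership via the explicit two-site (four-mode) convex-Gaussian characterization. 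Finally, norm-convergent positive sums preserve $\FGC$, and \eqref{eq:hubbard-strong-window-sm} follows by solving \eqref{eq:hubbard-strong-condition-sm} for $\beta$.
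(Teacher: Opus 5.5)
Your first half matches the paper. That covers the atomic interaction picture, the site-mapped support-decoupling construction (which indeed never uses quadraticity of the reference), and the bound $\Delta_{\mathrm{site}}\le\frac{8D|t|}{|U|}(e^{\beta|U|/2}-1)$. There is one slip: each of the two Majoranas in a hopping bilinear picks up $\cosh(sU/2)+|\sinh(sU/2)|=e^{|sU|/2}$, so a bilinear grows like $e^{|U||s|}$, not $e^{|U||s|/2}$. With the correct rate, $4D$ incident bond-spins of mass $|t|$ and the factor $2$ in $a_\nu$ give exactly your bound.

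The genuine gap is in the cone step. A terminal string $\gamma_J$ is built from products of arbitrarily many two-site labels. So $\pi(J)$ can contain four, six, \dots\ sites of odd on-site degree, and also sites of nonzero even degree (two labels meeting at $x$ can leave degree $2$ there). Your proposed check therefore comes out negative, and your fallback of ``commuting the even-degree sites out'' fails. With $\mathcal D_A:=\prod_{x\in A}e^{-\beta UV_x/8}$ and $O,E$ the odd- and even-degree sites of $J$, it gives $\mathcal D_{E}^2\bigl(\mathcal D_{O}^2+\alpha\gamma_J\bigr)$, where $\gamma_J$ is still supported on $E$. This is not a product over disjoint carriers. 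Multiplying by the non-Gaussian $\mathcal D_E^2$ (equivalently, conjugating by $\mathcal D_E$) is precisely the operation you yourself noted does not preserve $\FGC$. Nor does the two-site lemma apply to $\mathcal D_O^2+\alpha\gamma_J$ once $|O|\ge4$.

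The missing idea is a block factorization that distributes the single coefficient $\alpha$ over several local blocks, as in Lemma~\ref{lem:sm-atomic-dressed-string}:
\begin{itemize}
\item Partition $\pi(J)$ into singleton blocks for the even-degree sites and pairs of odd-degree sites, and write $\prod_i\gamma_{J_i}=\eta\gamma_J$ with $\eta=\pm1$.
\item For an even singleton set $B_i=\mathcal D_{\{x\}}^2$ and $C_i=\mathcal D_{\{x\}}\gamma_{J_i}\mathcal D_{\{x\}}$. For an odd pair set $B_i=\mathcal D_{\{x,y\}}^2$ and $C_i=\gamma_{J_i}$. Then $\prod_iB_i=\mathcal D_{S_J}^2$ and $\prod_iC_i=\eta\,\mathcal D_{S_J}\gamma_J\mathcal D_{S_J}$.
\item Both block types satisfy $B_i+rC_i\in\FGC$ for $|r|\le1$. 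The odd pairs follow from the two-site lemma. The even singletons need a separate one-site fact you do not have (Lemmas~\ref{lem:onesite} and~\ref{lem:even-site}): a positive two-mode operator commuting with the site parity is a mixture of BCS and one-particle Slater states.
\item Take $\varrho=|\alpha|^{1/m}$, independent uniform signs $\chi_1,\dots,\chi_{m-1}$, and $\chi_m=\eta\operatorname{sgn}(\alpha)\prod_{i<m}\chi_i$. Every proper sub-product of the $\chi_i$ averages to zero, so $\mathbb E_\chi\prod_i(B_i+\varrho\chi_iC_i)=\mathcal D_{S_J}(I+\alpha\gamma_J)\mathcal D_{S_J}$. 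This is a positive average of disjoint-carrier products of cone elements.
\end{itemize}

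Separately, your primary route to the two-site lemma cannot work as stated. $\gamma_J$ is odd on each site, so it is not produced by products of even one-site pieces, and odd one-site factors $I\pm\gamma_A$ are not cone elements. You need genuinely two-site Gaussian components. The paper's proof of Lemma~\ref{lem:odd-site} does this as follows. It pairs each occupation configuration $f$ with $g(f)$, where $\gamma_J|f\rangle=\zeta(f)|g(f)\rangle$ flips one mode per site, and uses $w(g(f))=w(f)^{-1}$. It then writes the operator as a sum of positive $2\times2$ blocks, where positivity uses $|r|\le1$. Every vector in $\mathrm{span}\{|f\rangle,|g(f)\rangle\}$ is a two-mode BCS state or a one-particle Slater determinant. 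Your four-mode-criterion fallback might work, but it is not carried out.
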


This criterion complements rather than replaces the preceding weak-coupling
estimate: it expands around the atomic-limit Hamiltonian and is useful in the
strong-coupling regime $|t|\ll|U|$.  Exact atomic imaginary-time conjugation bounds
the relevant site-overlap strength by the left side of
\eqref{eq:hubbard-strong-condition-sm}; the rooted overlap history estimate and
support-decoupling construction then reduce the Gibbs operator to terminal
dressed single-string factors with pairwise disjoint site supports.
The proof has four parts. We first reconstruct the Gibbs operator in the
atomic interaction picture and bound its site-overlap strength. We then prove
a site-deletion propagator theorem using rooted overlap histories. The atomic
Gaussian-cone lemmas identify the terminal factors that belong to $\FGC$.
Finally, a recordwise support-decoupling iteration gives an exact, norm-summable
terminal expansion. This last step requires an expectation invariant, not
only termination of label deletion. The atomic cone step is also necessary,
as congruence by a convex-Gaussian operator does not necessarily preserve
the Gaussian cone $\FGC$.

\subsection{Atomic interaction picture and site-overlap strength}
\label{sec:hubbard-strong-proof}
Recall that $G=(\Lambda,E)$ is a finite simple graph with $|\Lambda|=L$,
with two complex fermionic modes, or four Majorana indices, at each site.
With $n_{x\sigma}=c_{x\sigma}^{\dagger}c_{x\sigma}$, write
\[
 V_x=(2n_{x\uparrow}-I)(2n_{x\downarrow}-I),\qquad
 H_{\rm at}:=UV_{\rm int}=\frac U4\sum_xV_x,
 \qquad
 H_{U,t}=UV_{\rm int}+tH_{\rm hop}
 =H_{\rm at}+tH_{\rm hop}.
\]
Here $H_{\rm hop}=-\sum_{\{x,y\}\in E,\sigma}
(c_{x\sigma}^{\dagger}c_{y\sigma}+c_{y\sigma}^{\dagger}c_{x\sigma})$.
Each $V_x$ is the on-site parity operator, so $V_x^2=I$ and distinct $V_x$
commute. We use the normalized Hermitian Majorana basis fixed above;
in particular, $\gamma_K^{\dagger}=\gamma_K$ and $\gamma_K^2=I$.

\medskip
\noindent\textbf{Atomic interaction-picture expansion and virtual labels.}
The hats on $\widehat{h}$, $\widehat{U}$, $\widehat{\mathcal R}$, and
$\widehat{B}_*$ identify the strong-coupling interaction-picture objects,
distinguishing them from their weak-coupling counterparts.
The label set is denoted by $\mathcal L_{\mathrm{str}}$.
Put $\tau_\beta:=\beta/2$. For $x\ne y$ and
$\sigma\in\{\uparrow,\downarrow\}$, define
\[
 T_{xy,\sigma}:=c_{x\sigma}^{\dagger}c_{y\sigma},
 \qquad
 B_{xy,\sigma}:=T_{xy,\sigma}+T_{yx,\sigma}.
\]
The ordered bond-spin labels are
\[
 \mathcal A_{\rm hop}
 :=\bigl\{(x,y,\sigma)\in
 \Lambda\times\Lambda\times\{\uparrow,\downarrow\}:
 \{x,y\}\in E\bigr\}.
\]
For $a=(x,y,\sigma)\in\mathcal A_{\rm hop}$, define
\begin{equation}\label{eq:sm-atomic-local-term}
 \widehat{h}_a(s)
 :=-\frac t2 e^{sUV_{\rm int}}B_{xy,\sigma}e^{-sUV_{\rm int}}
 =\sum_{\substack{K\subseteq[4L]\\ |K|\text{ even}}}
 z_{a,K}(s)\gamma_K,
\end{equation}
where the last equality is the unique Majorana expansion. Define the active
virtual-label set by
\[
 \mathcal L_{\mathrm{str}}:=\bigl\{(a,K):a\in\mathcal A_{\rm hop},\ K\subseteq[4L],\
 |K|\text{ even},\ \exists s\in[-\tau_\beta,\tau_\beta]\text{ such that }
 z_{a,K}(s)\ne0\bigr\}.
\]
For $\lambda=(a,K)\in\mathcal L_{\mathrm{str}}$, put
\[
 P_\lambda:=\gamma_K,
 \qquad
 z_\lambda(s):=z_{a,K}(s),
 \qquad
 \Supp(\lambda):=K.
\]
The set $\mathcal A_{\rm hop}$ contains both orientations $(x,y)$ and
$(y,x)$ of every undirected bond, while
$B_{xy,\sigma}=B_{yx,\sigma}$. Thus the factor $1/2$ in
\eqref{eq:sm-atomic-local-term} exactly compensates for this double counting,
and, for $s\in[-\tau_\beta,\tau_\beta]$,
\begin{equation}\label{eq:sm-atomic-generator}
 V_{\mathcal L_{\mathrm{str}}}(s)
 :=\sum_{\lambda\in\mathcal L_{\mathrm{str}}}z_\lambda(s)P_\lambda
 =\sum_{a\in\mathcal A_{\rm hop}}\widehat{h}_a(s)
 =e^{sUV_{\rm int}}(tH_{\rm hop})e^{-sUV_{\rm int}}.
\end{equation}
For $s_1,s_2\in[-\tau_\beta,\tau_\beta]$, define its propagator by
\[
 \partial_{s_2}\widehat{U}_{\mathcal L_{\mathrm{str}}}(s_2,s_1)
 =-V_{\mathcal L_{\mathrm{str}}}(s_2)\widehat{U}_{\mathcal L_{\mathrm{str}}}(s_2,s_1),
 \qquad
 \widehat{U}_{\mathcal L_{\mathrm{str}}}(s_1,s_1)=I.
\]
Put $A_\beta:=e^{-\beta H_{\rm at}/2}$.
The following identity fixes the operator that the later iteration must
preserve.

\begin{proposition}[Atomic interaction-picture reconstruction]
\label{prop:sm-atomic-reconstruction}
For $s_1,s_2\in[-\tau_\beta,\tau_\beta]$,
\begin{equation}\label{eq:sm-atomic-propagator-explicit}
 \widehat{U}_{\mathcal L_{\mathrm{str}}}(s_2,s_1)
 =e^{s_2H_{\rm at}}e^{-(s_2-s_1)H_{U,t}}e^{-s_1H_{\rm at}}.
\end{equation}
Consequently,
\begin{equation}\label{eq:sm-atomic-symmetric-gibbs}
 e^{-\beta H_{U,t}}
 =e^{-\beta UV_{\rm int}/2}\widehat{U}_{\mathcal L_{\mathrm{str}}}(\tau_\beta,0)
  \widehat{U}_{\mathcal L_{\mathrm{str}}}(0,-\tau_\beta)e^{-\beta UV_{\rm int}/2}.
\end{equation}
The generator and its propagator also satisfy
\begin{equation}\label{eq:sm-atomic-time-reflection}
 V_{\mathcal L_{\mathrm{str}}}(s)^\dagger=V_{\mathcal L_{\mathrm{str}}}(-s),
 \qquad
 \widehat{U}_{\mathcal L_{\mathrm{str}}}(s_2,s_1)^\dagger
 =\widehat{U}_{\mathcal L_{\mathrm{str}}}(-s_1,-s_2).
\end{equation}
\end{proposition}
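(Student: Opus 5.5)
The plan is to verify the explicit formula \eqref{eq:sm-atomic-propagator-explicit} by uniqueness of solutions to the linear ODE that defines $\widehat U_{\mathcal L_{\mathrm{str}}}$. Set
\[
W(s_2,s_1):=e^{s_2H_{\rm at}}e^{-(s_2-s_1)H_{U,t}}e^{-s_1H_{\rm at}} .
\]
Clearly $W(s_1,s_1)=I$. Differentiating in $s_2$ gives
\[
\partial_{s_2}W=e^{s_2H_{\rm at}}\bigl(H_{\rm at}-H_{U,t}\bigr)e^{-(s_2-s_1)H_{U,t}}e^{-s_1H_{\rm at}}
=-e^{s_2H_{\rm at}}(tH_{\rm hop})e^{-s_2H_{\rm at}}\,W(s_2,s_1),
\]
where we inserted $e^{-s_2H_{\rm at}}e^{s_2H_{\rm at}}$. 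Since $H_{\rm at}=UV_{\rm int}$, the conjugated hopping term equals $V_{\mathcal L_{\mathrm{str}}}(s_2)$ by \eqref{eq:sm-atomic-generator}. That identity in turn follows from the definition \eqref{eq:sm-atomic-local-term}: summing over both orientations in $\mathcal A_{\rm hop}$ with the factor $1/2$ reproduces $tH_{\rm hop}$ conjugated by $e^{sUV_{\rm int}}$. Moreover, only labels with $z_{a,K}\not\equiv0$ on $[-\tau_\beta,\tau_\beta]$ are dropped, so the sum over $\mathcal L_{\mathrm{str}}$ equals the full Majorana expansion on that interval. We are solving a finite-dimensional linear ODE with continuous coefficients, so its solution is unique and $W=\widehat U_{\mathcal L_{\mathrm{str}}}$.

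For the Gibbs identity \eqref{eq:sm-atomic-symmetric-gibbs}, substitute the explicit formula at $(\tau_\beta,0)$ and $(0,-\tau_\beta)$:
\[
\widehat U(\tau_\beta,0)\widehat U(0,-\tau_\beta)=e^{\tau_\beta H_{\rm at}}e^{-\tau_\beta H_{U,t}}\,e^{-\tau_\beta H_{U,t}}e^{\tau_\beta H_{\rm at}}
=e^{\tau_\beta H_{\rm at}}e^{-\beta H_{U,t}}e^{\tau_\beta H_{\rm at}} .
\]
Conjugating on both sides by $e^{-\tau_\beta H_{\rm at}}=e^{-\beta UV_{\rm int}/2}$ then gives the claim.

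For time reflection, Hermiticity of $H_{\rm at}$ and $H_{\rm hop}$ gives
\[
\bigl(e^{sH_{\rm at}}tH_{\rm hop}e^{-sH_{\rm at}}\bigr)^\dagger=e^{-sH_{\rm at}}tH_{\rm hop}e^{sH_{\rm at}},
\]
which is $V_{\mathcal L_{\mathrm{str}}}(-s)$ for $s\in[-\tau_\beta,\tau_\beta]$. For the propagator, take the adjoint of the explicit formula:
\[
W(s_2,s_1)^\dagger=e^{-s_1H_{\rm at}}e^{-(s_2-s_1)H_{U,t}}e^{s_2H_{\rm at}}=W(-s_1,-s_2).
\]

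No step is a genuine obstacle. The only care point is checking that restricting to the active label set and counting both bond orientations reproduces $tH_{\rm hop}$ exactly on the whole interval $[-\tau_\beta,\tau_\beta]$, so that the ODE generator is correct.
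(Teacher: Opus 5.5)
Your proposal is correct and follows the paper's proof: it verifies the explicit formula by uniqueness for the linear ODE using the generator identity, substitutes the endpoints $(\tau_\beta,0)$ and $(0,-\tau_\beta)$ to obtain the symmetric Gibbs identity, and derives generator reflection from Hermiticity of $H_{\rm at}$ and the hopping term. The one cosmetic difference is that you read off the propagator reflection directly from the adjoint of the explicit formula, whereas the paper obtains it by taking adjoints and reversing time in the propagator equation.
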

\begin{proof}
Differentiating the right side of
\eqref{eq:sm-atomic-propagator-explicit} cancels the $H_{\rm at}$ terms
and leaves $-V_{\mathcal L_{\mathrm{str}}}(s_2)$ times that expression, by
\eqref{eq:sm-atomic-generator}. At $s_2=s_1$ it is $I$, so uniqueness
of the finite-dimensional linear differential equation proves the formula.
Substituting $(s_2,s_1)=(\tau_\beta,0)$ and $(0,-\tau_\beta)$ gives
\[
 A_\beta \widehat{U}_{\mathcal L_{\mathrm{str}}}(\tau_\beta,0)=e^{-\tau_\beta H_{U,t}},
 \qquad
 \widehat{U}_{\mathcal L_{\mathrm{str}}}(0,-\tau_\beta)A_\beta=e^{-\tau_\beta H_{U,t}},
\]
which proves \eqref{eq:sm-atomic-symmetric-gibbs}.
Hermiticity of $H_{\rm at}$ and $B_{xy,\sigma}$ gives
$\widehat{h}_a(s)^\dagger=\widehat{h}_a(-s)$. Summing proves the generator reflection;
taking adjoints and reversing time in the propagator equation proves
the propagator reflection. In particular,
\[
 \widehat{U}_{\mathcal L_{\mathrm{str}}}(0,-\tau_\beta)
 =\widehat{U}_{\mathcal L_{\mathrm{str}}}(\tau_\beta,0)^\dagger.
\]
None of these identities requires the atomic Hamiltonian to be quadratic.
\end{proof}

\begin{lemma}[Atomic conjugation and local coefficient mass]
\label{lem:sm-atomic-local-mass}
For $a=(x,y,\sigma)$ and real $s$, the conjugated bond is supported on
$x,y$, its nonzero Majorana coefficients have degrees $2$, $4$, or $6$,
and
\[
 \|\widehat{h}_a(s)\|_{\gamma,1}\le\frac{|t|}{2}e^{|U||s|}.
\]
Its coefficient functions are continuous and satisfy
$z_{a,K}(-s)=\overline{z_{a,K}(s)}$.
\end{lemma}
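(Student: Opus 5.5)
The plan is to compute the conjugation $e^{sUV_{\rm int}}B_{xy,\sigma}e^{-sUV_{\rm int}}$ exactly. Since the $V_z$ commute, factor $e^{sUV_{\rm int}}=\prod_z e^{sUV_z/4}$. Every $V_z$ with $z\notin\{x,y\}$ has Majorana support disjoint from the even operator $B_{xy,\sigma}$, so it commutes with it and drops out. This leaves conjugation by $e^{sU(V_x+V_y)/4}$, which already gives support on the eight Majoranas of $x,y$.

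Next, split $B_{xy,\sigma}$ into its two quadratic Majorana strings. For spin $\uparrow$ these are $\frac{i}{2}(\gamma_{x\uparrow,1}\gamma_{y\uparrow,2}-\gamma_{x\uparrow,2}\gamma_{y\uparrow,1})$, up to sign conventions, and each is a normalized string $P$ of the form $\gamma_{x,\cdot}\gamma_{y,\cdot}$ with coefficient of modulus $\tfrac12$. Since $V_x$ is, up to sign, the four-Majorana string on site $x$, and $P$ contains exactly one Majorana at $x$, $P$ anticommutes with $V_x$; likewise with $V_y$.

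For each such string, using $V_x^2=I$ and $V_xP=-PV_x$,
\[
e^{aV_x}Pe^{-aV_x}=e^{2aV_x}P=(\cosh 2a+\sinh 2a\,V_x)P,\qquad a=sU/4.
\]
Doing the same at $y$ gives
\[
(\cosh 2a+\sinh 2a\,V_x)(\cosh 2a+\sinh 2a\,V_y)P .
\]
Expanding this produces $P$ (degree $2$), $V_xP$ and $V_yP$ (degree $4$, because the one Majorana of $P$ at $x$ cancels against $V_x$, leaving three at $x$ and one at $y$), and $V_xV_yP$ (degree $6$). Each of these is a phase times a normalized string, which gives the degree claim. Its coefficient mass is at most
\[
(|\cosh 2a|+|\sinh 2a|)^2=e^{4|a|}=e^{|U||s|}.
\]
Multiplying by the two strings of weight $\tfrac12$ and the prefactor $|t|/2$ gives $\|\widehat h_a(s)\|_{\gamma,1}\le\frac{|t|}{2}e^{|U||s|}$.

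Continuity is immediate, since the coefficients are combinations of $\cosh$ and $\sinh$ of $sU/2$. The reflection identity $z_{a,K}(-s)=\overline{z_{a,K}(s)}$ follows as in Lemma~\ref{lem:tim_ref}. Hermiticity of $V_{\rm int}$ and $B_{xy,\sigma}$ gives $\widehat h_a(s)^\dagger=\widehat h_a(-s)$, and comparing coefficients in the Hermitian, linearly independent basis $\gamma_K$ gives the identity. Alternatively, one can check it directly from the explicit formula: the products $V_xP$ are anti-Hermitian times real hyperbolic functions that are odd in $s$.

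The only delicate point is the bookkeeping of signs and phases in the string products, but the norm bound only needs moduli, so this is not a real obstacle. The key facts are the anticommutation $\{V_x,P\}=0$ and the product-form bound $(\cosh+|\sinh|)^2$.
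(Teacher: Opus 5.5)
Your proof is correct. It reaches the same explicit formula as the paper by a different derivation.

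\textbf{The paper's route.} The paper works with the directed hops $T_{xy,\sigma}$. From $[UV_{\rm int},T_{xy,\sigma}]=UN_{xy,\bar\sigma}T_{xy,\sigma}$ and $[N_{xy,\bar\sigma},T_{xy,\sigma}]=0$ it derives the occupation-number identity $e^{sUV_{\rm int}}T_{xy,\sigma}e^{-sUV_{\rm int}}=e^{sUN_{xy,\bar\sigma}}T_{xy,\sigma}$. It then recombines the two orientations into $\cosh^2(\xi)B_{xy,\sigma}-\sinh^2(\xi)Z_{x\bar\sigma}Z_{y\bar\sigma}B_{xy,\sigma}-i\cosh(\xi)\sinh(\xi)(Z_{x\bar\sigma}-Z_{y\bar\sigma})C_{xy,\sigma}$, with $\xi=sU/2$. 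Finally it bounds the coefficient mass by submultiplicativity, using that $B$, $C$ and the $Z$'s all have unit coefficient norm.

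\textbf{Your route.} You stay in the Majorana basis throughout. You use only that each quadratic string of the bond has exactly one Majorana on each site. It therefore anticommutes with the on-site parities $V_x$ and $V_y$, each of which is, up to sign, the full four-Majorana string on its site. This gives $e^{\xi V_x}e^{\xi V_y}P$ in one line.

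\textbf{Agreement.} Your expansion, summed over the two strings of the bond, is exactly the paper's formula. This follows from $V_xV_yB_{xy,\sigma}=-Z_{x\bar\sigma}Z_{y\bar\sigma}B_{xy,\sigma}$ and $(V_x+V_y)B_{xy,\sigma}=-i(Z_{x\bar\sigma}-Z_{y\bar\sigma})C_{xy,\sigma}$. Your argument in fact shows that $B_{xy,\sigma}$ itself anticommutes with both parities.

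\textbf{Trade-offs.} Your route makes the degree count ($2,4,6$ via symmetric differences) and the per-string mass $(\cosh\xi+|\sinh\xi|)^2=e^{|U||s|}$ immediate, with no submultiplicativity step. It also exposes the odd-site anticommutation mechanism that the paper later uses in the atomic cone lemmas, where $e^{-Q}\gamma_{J_i}=\gamma_{J_i}e^{Q}$. The paper's route gives a physically transparent form: the imaginary-time weight of a hop depends only on the opposite-spin occupation difference $N_{xy,\bar\sigma}$.

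Your continuity and time-reflection arguments (Hermiticity plus uniqueness of the Majorana expansion) are the same as the paper's.
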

\begin{proof}

For a directed hop as above, set
$N_{xy,\bar\sigma}=n_{x\bar\sigma}-n_{y\bar\sigma}$.
Here $\bar \sigma$ denotes the opposite of the spin $\sigma$.
The exact atomic
interaction-picture identity is
\begin{equation}\label{eq:sm-atomic-conjugation}
 e^{sUV_{\rm int}}T_{xy,\sigma}e^{-sUV_{\rm int}}
 =e^{sU N_{xy,\bar\sigma}}T_{xy,\sigma}.
\end{equation}
Indeed, the occupation basis diagonalizes both $UV_{\rm int}$ and
$N_{xy,\bar\sigma}$, so these two operators commute.  The canonical
anticommutation relations give
\begin{align}
     [UV_{\rm int},T_{xy,\sigma}]
   =U N_{xy,\bar\sigma}T_{xy,\sigma},
 \qquad
 [N_{xy,\bar\sigma},T_{xy,\sigma}]=0.
\end{align}
For commuting $A$ and $C$ satisfying $[A,B]=CB$ and $[C,B]=0$, the exponential
series gives $e^ABe^{-A}=e^CB$. Taking $A=sUV_{\rm int}$, $B=T_{xy,\sigma}$, and
$C=sU N_{xy,\bar\sigma}$ proves \eqref{eq:sm-atomic-conjugation}.

This identity supplies the only model-specific growth estimate needed by the
support-decoupling construction.  For completeness, let
\begin{align}
     \xi:=\frac{sU}{2},\qquad
 Z_{x\bar\sigma}:=2n_{x\bar\sigma}-I,\qquad
 Z_{y\bar\sigma}:=2n_{y\bar\sigma}-I,
\end{align}
and 
\begin{align}
     C_{xy,\sigma}:=i(T_{xy,\sigma}-T_{yx,\sigma}).
\end{align}
Combining the two directed conjugation formulas gives
\[
 e^{sUV_{\rm int}}B_{xy,\sigma}e^{-sUV_{\rm int}}
 =\cosh^2(\xi)B_{xy,\sigma}
  -\sinh^2(\xi)Z_{x\bar\sigma}Z_{y\bar\sigma}B_{xy,\sigma}
  -i\cosh(\xi)\sinh(\xi)
   (Z_{x\bar\sigma}-Z_{y\bar\sigma})C_{xy,\sigma}.
\]
In the normalized Majorana basis,
\[
 \|B_{xy,\sigma}\|_{\gamma,1}
 =\|C_{xy,\sigma}\|_{\gamma,1}
 =\|Z_{x\bar\sigma}\|_{\gamma,1}
 =\|Z_{y\bar\sigma}\|_{\gamma,1}=1.
\]
Submultiplicativity and the triangle inequality therefore give
\begin{align}
    \left\|e^{sUV_{\rm int}}B_{xy,\sigma}
 e^{-sUV_{\rm int}}\right\|_{\gamma,1}
 \le \cosh^2(\xi)+\sinh^2(\xi)
      +2|\cosh(\xi)\sinh(\xi)|
 =e^{2|\xi|}=e^{|U||s|}. 
\end{align}
It follows from the exact formula and the Majorana expansion in
\eqref{eq:sm-atomic-local-term} that, for every
$\lambda=((x,y,\sigma),K)\in\mathcal L_{\mathrm{str}}$, the operator $P_\lambda$ is even, is supported
on the two sites $x,y$, and has Majorana
degree $2$, $4$, or $6$. Since every $\gamma_K$ is Hermitian and its
Majorana expansion is unique, the termwise reflection
$\widehat{h}_a(s)^\dagger=\widehat{h}_a(-s)$ gives
\[
 z_\lambda(-s)=\overline{z_\lambda(s)},
 \qquad
 a_\lambda
 =\int_{-\tau_\beta}^{\tau_\beta}|z_\lambda(s)|\,ds
 =2\int_0^{\tau_\beta}|z_\lambda(s)|\,ds.
\]
Continuity follows either from the exponential formula or from the unique
finite Majorana expansion of the continuous matrix-valued function $\widehat{h}_a$.
If $\beta>0$, every active label has $a_\lambda>0$: a nonzero coefficient
value can be reflected to $[0,\tau_\beta]$, and continuity gives a
positive integral on a subinterval of positive length.
\end{proof}

Define the site map by
\[
 \pi:[4L]\longrightarrow\Lambda,
 \qquad
 \pi(j):=x\quad\text{when the Majorana index $j$ belongs to site $x$},
\]
and, as in Theorem~\ref{prop:mapped-pinning}, put
\[
 \Supp_\pi(\lambda):=\pi(\Supp(\lambda)).
\]
For a Majorana string $\gamma_J$, define its on-site Majorana degree at $x$ by
\[
 d_x(J):=\bigl|\{j\in J:\pi(j)=x\}\bigr|.
\]
Thus $|\Supp_\pi(\lambda)|\le2$. Define
\begin{equation}\label{eq:sm-site-degree-definition}
 \Delta_{\mathrm{site}}
 :=\max_{\lambda\in\mathcal L_{\mathrm{str}}}
 \sum_{\substack{\nu\in\mathcal L_{\mathrm{str}}:\\
 \Supp_\pi(\nu)\cap\Supp_\pi(\lambda)\ne\varnothing}}a_\nu,
\end{equation}
with value zero when $\mathcal L_{\mathrm{str}}$ is empty. Self-overlap is included.

\begin{proposition}[Site-overlap bound]
\label{prop:sm-site-overlap-bound}
If $\deg_G(x)\le D$ and $\beta\ge0$, then
\[
 \Delta_{\mathrm{site}}\le8D|t|\int_0^{\beta/2}e^{|U|s}\,ds.
\]
For $U\ne0$, this is at most
$\frac{8D|t|}{|U|}(e^{\beta|U|/2}-1)$.
\end{proposition}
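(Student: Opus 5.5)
Fix a label $\lambda\in\mathcal L_{\mathrm{str}}$ with $\Supp_\pi(\lambda)\subseteq\{x,y\}$ (by Lemma~\ref{lem:sm-atomic-local-mass}). Any $\nu$ site-overlapping $\lambda$ has a bond label $a(\nu)=(u,v,\sigma)$ with $\{u,v\}\cap\{x,y\}\ne\varnothing$, since $\Supp_\pi(\nu)\subseteq\{u,v\}$. So I will first group the sum over $\nu$ by bond-spin label $a$, then by site. For a fixed site $w\in\{x,y\}$, I will count the ordered bond-spin labels $(u,v,\sigma)\in\mathcal A_{\rm hop}$ with $w\in\{u,v\}$. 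There are at most $\deg_G(w)\le D$ incident edges, $2$ orientations and $2$ spins, giving at most $4D$ labels. Summing over the two sites $w\in\{x,y\}$ gives at most $8D$ bond-spin labels $a$, counted with multiplicity, which can only overcount.

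For each such $a$, the sum of activities over its virtual labels is
\[
\sum_{K:(a,K)\in\mathcal L_{\mathrm{str}}}a_{(a,K)}
=2\int_0^{\tau_\beta}\sum_K|z_{a,K}(s)|\,ds
=2\int_0^{\tau_\beta}\|\widehat h_a(s)\|_{\gamma,1}\,ds
\le 2\int_0^{\beta/2}\frac{|t|}{2}e^{|U|s}\,ds,
\]
using the reflection identity $a_\lambda=2\int_0^{\tau_\beta}|z_\lambda|$ and the coefficient-mass bound from Lemma~\ref{lem:sm-atomic-local-mass}. The right-hand side is $|t|\int_0^{\beta/2}e^{|U|s}ds$.

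Multiplying by the $8D$ bond-spin labels gives $8D|t|\int_0^{\beta/2}e^{|U|s}ds$. This is already the stated bound. The per-label integral is $|t|\int_0^{\beta/2}e^{|U|s}\,ds$ (not twice that), because the factor $2$ in $a_\lambda$ cancels the $1/2$ in $\widehat h_a$. For $U\ne0$, I then evaluate the integral as $(e^{\beta|U|/2}-1)/|U|$.

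The degenerate cases are direct. If $\mathcal L_{\mathrm{str}}$ is empty, the value is $0$. If $\beta=0$, both sides vanish, since each activity is an integral over an interval of length zero.

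The main point needing care is the counting constant. I must use the directed labels, each carrying the $1/2$ prefactor, together with a union bound over the two sites of $\lambda$. Self-overlap is included automatically, because $\lambda$'s own bond is incident to $x$. Since the sum only over-counts, no injectivity is needed.
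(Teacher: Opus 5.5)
Your proposal is correct and follows essentially the paper's argument: you count at most $4D$ ordered bond-spin labels incident to each of the (at most two) sites of $\lambda$, bound each by $\|\widehat h_a(s)\|_{\gamma,1}\le \frac{|t|}{2}e^{|U||s|}$ from Lemma~\ref{lem:sm-atomic-local-mass}, and use the factor $2$ in the activity, which cancels the $1/2$. The only cosmetic difference is the order of operations: you integrate per bond-spin label before counting, whereas the paper bounds the incident coefficient mass at each site pointwise in $s$ and integrates afterwards.
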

\begin{proof}
Exactly $4\deg_G(x)$ ordered bond-spin
indices are incident on a site $x$: two orientations and two spins for every
neighboring vertex.  Each ordered term has prefactor $-t/2$. Grouping
Majorana coefficients by their ordered index can only enlarge their absolute
sum, and hence
\[
 \sum_{\substack{\lambda\in\mathcal L_{\mathrm{str}}:\,
                  \Supp(\lambda)\text{ contains a Majorana}\\
                  \text{index at site }x}}|z_\lambda(s)|
 \le4\deg_G(x)\frac{|t|}{2}e^{|U||s|}
 \le2D|t|e^{|U||s|}.
\]
For a fixed $\lambda$, summing over its at most two sites and then using the
incident coefficient mass bound gives
\begin{align*}
 \sum_{\substack{\nu\in\mathcal L_{\mathrm{str}}:\\
 \Supp_\pi(\nu)\cap\Supp_\pi(\lambda)\ne\varnothing}}a_\nu
 &\le
 \sum_{x\in\Supp_\pi(\lambda)}
 2\int_0^{\tau_\beta}
 \sum_{\nu:\,x\in\Supp_\pi(\nu)}|z_\nu(s)|\,ds\\
 &\le8D|t|\int_0^{\beta/2}e^{|U|s}\,ds.
\end{align*}
Taking the supremum over $\lambda$ gives, for $\beta\ge0$,
\begin{equation}\label{eq:sm-site-conflict-integral}
 \Delta_{\mathrm{site}}
 \le 8D|t|\int_0^{\beta/2}e^{|U|s}\,ds.
\end{equation}
If $U\ne0$, evaluation of the integral gives the closed form
\begin{equation}\label{eq:sm-site-conflict-closed}
 \Delta_{\mathrm{site}}
 \le \frac{8D|t|}{|U|}
     \left(e^{\beta|U|/2}-1\right).
\end{equation}
\end{proof}
At $\beta=0$ the site-overlap strength is exactly zero.  When $U=0$, formula
\eqref{eq:sm-site-conflict-closed} must not be used because of its denominator; the directly evaluated
version of \eqref{eq:sm-site-conflict-integral} is instead
$\Delta_{\mathrm{site}}\le4\beta D|t|$. Theorem~\ref{thm:hubbard-strong-sm}
assumes $U\ne0$; the zero-interaction model is covered separately
by the free-fermion case.

\subsection{Site-mapped deletion propagator and explicit random histories}
\label{sec:sm-site-propagator}
In this subsection assume $\beta>0$ and $\Delta_{\mathrm{site}}>0$,
where $\Delta_{\mathrm{site}}$ is the integrated site-conflict strength,
including self-conflicts.
The same proof allows any positive upper bound on that strength.
For every $S\subseteq\mathcal L_{\mathrm{str}}$, define
\[
 V_S(s):=\sum_{\lambda\in S}z_\lambda(s)P_\lambda,\qquad
 \partial_s \widehat{U}_S(s,u)=-V_S(s)\widehat{U}_S(s,u),\quad \widehat{U}_S(u,u)=I.
\]
Termwise time reflection gives
$V_S(s)^\dagger=V_S(-s)$ and
$\widehat{U}_S(s,u)^\dagger=\widehat{U}_S(-u,-s)$ for every $S$.
Fix $T:=\tau_\beta>0$, $\lambda_*\in S$, and
$\widehat S:=S\setminus\{\lambda_*\}$. Put
\[
 \widehat{\mathcal R}_{\lambda_*,S}:=\widehat{U}_{\widehat S}(0,T)\widehat{U}_S(T,0).
\]
This ordered deletion bridge is not assumed unitary. Composition and
reflection give
\begin{equation}\label{eq:sm-site-deletion-bridges}
 \widehat{U}_S(T,0)=\widehat{U}_{\widehat S}(T,0)\widehat{\mathcal R}_{\lambda_*,S},\qquad
 \widehat{U}_S(0,-T)=(\widehat{\mathcal R}_{\lambda_*,S})^{\dagger}\widehat{U}_{\widehat S}(0,-T).
\end{equation}

A block is a list $\boldsymbol\nu=(\nu_1,\ldots,\nu_r)$ in
$\widehat S$, preceded by one occurrence of $\lambda_*$. It is
site-rooted if each $\Supp_\pi(\nu_j)$ meets the root site support
or some earlier $\Supp_\pi(\nu_i)$, $i<j$.
A full discrete history is a nonempty ordered list of blocks,
\[
 h=(m;\boldsymbol\nu_1,\ldots,\boldsymbol\nu_m),\quad m\ge1,
 \qquad \ell(h):=\sum_i(1+r_i).
\]
Its flattened label list is
\[
 D(h)=(\lambda_*,\nu_{1,1},\ldots,\nu_{1,r_1},
            \lambda_*,\nu_{2,1},\ldots,\nu_{2,r_2},\ldots).
\]

\begin{thm}[Hubbard site-mapped deletion propagator]
\label{thm:sm-site-connected-propagator}
Assume $\beta>0$ and $\Delta_{\mathrm{site}}>0$.
For every $S\subseteq\mathcal L_{\mathrm{str}}$ and $\lambda_*\in S$,
$\theta:=a_{\lambda_*}/\Delta_{\mathrm{site}}$ belongs to $(0,1]$.
There is a countably supported probability distribution of
$(b,Y,h)$, including a residual atom $(0,I,\varnothing)$, such that
\[
 b\ge0,\qquad \mathbb E[I+bY]=\widehat{\mathcal R}_{\lambda_*,S},
\]
with absolute convergence in operator norm.
Every outcome with $b\ne0$ satisfies:
\begin{enumerate}[label=(\roman*)]
\item $\ell(h)\ge1$, all labels in $D(h)$ belong to $S$, and each
block is site-rooted. In fact, each block is rooted already under
intrinsic Majorana-support overlap.
\item For unit-modulus phases $\zeta_h,\xi_h$,
\[
 Y=\zeta_h\prod_{\lambda\text{ in }D(h)}P_\lambda
   =\xi_h\gamma_{K_h},\qquad
 K_h=\mathop{\triangle}_{\lambda\text{ in }D(h)}\Supp(\lambda),
\]
where $|K_h|$ is even and
\[
 \pi(K_h)\subseteq
 \bigcup_{\lambda\text{ in }D(h)}\Supp_\pi(\lambda).
\]
The phases need not be fourth roots of unity.
\item $b\le\theta(3\Delta_{\mathrm{site}})^{\ell(h)}$.
\end{enumerate}
The theorem itself requires no smallness bound $\Delta_{\mathrm{site}}\le1/72$.
\end{thm}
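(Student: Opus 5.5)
The plan is to transcribe the proof of Theorem~\ref{thm:connected-propagator} to the atomic interaction picture, replacing intrinsic overlap by site overlap throughout. First, $\theta\in(0,1]$: by Lemma~\ref{lem:sm-atomic-local-mass}, $a_{\lambda_*}>0$ for every active label when $\beta>0$. Since self-overlap is included in \eqref{eq:sm-site-degree-definition}, we also have $a_{\lambda_*}\le\Delta_{\mathrm{site}}$.

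Next, the deletion bridge $G(s):=\widehat U_{\widehat S}(0,s)\widehat U_S(s,0)$ satisfies the same deletion equation \eqref{eq:deletion-propagation}, with generator $A_*(s)=z_{\lambda_*}(s)\widehat U_{\widehat S}(0,s)P_{\lambda_*}\widehat U_{\widehat S}(s,0)$. The derivation uses only the two endpoint derivatives and composition, so quadraticity of the reference is never needed. The outer Dyson series and the inner nested-commutator expansion \eqref{eq:label-resolved-inner-v4} then apply verbatim. Because all $P_\lambda$ are normalized Hermitian even strings, each nonzero nested commutator equals $2^r\zeta P_{\nu_r}\cdots P_{\nu_1}P_{\lambda_*}$, and the block is rooted under intrinsic Majorana overlap. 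Intrinsic overlap implies site overlap, so every block is also site-rooted, which gives (i). For (ii), a product of strings is a phase times $\gamma_{K}$ with $K$ the symmetric difference of the supports. Since $\pi$ of a symmetric difference lies in the union of the images, the site inclusion follows.

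For (iii), repeat the tree-domination argument of part 5) of Theorem~\ref{thm:connected-propagator}. The connectivity indicator is bounded by a sum over labeled spanning trees whose edges use the weaker site-overlap indicator, which is legitimate because intrinsic overlap implies site overlap. Each child contributes $\sum_{\nu:\Supp_\pi(\nu)\cap\Supp_\pi(\mu)\ne\varnothing}a_\nu\le\Delta_{\mathrm{site}}$, and the permutation symmetry gives $1/r!$. Cayley's formula then yields $M_r\le\frac{a_{\lambda_*}}2\frac{(r+1)^{r-1}}{r!}\Delta_{\mathrm{site}}^r$. The bound $\mathcal T(1/3)<1$ and the relaxed outer simplex show that the normalized mass of the $m$-block histories is at most $(\theta/2)^m$, so the total mass is at most $\frac1\theta\sum_m(\theta/2)^m\le1$.

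Finally, build the countably supported distribution exactly as in the discrete version at the end of part 8). For each discrete history $h$, let $A_h$ be its integrated operator contribution, a scalar times one string, and let $M_h$ be its absolute coefficient mass. Set $s_h=\theta(3\Delta_{\mathrm{site}})^{\ell(h)}$, $p_h=M_h/s_h$, $b_h=s_h\|A_h\|/M_h$, and $Y_h=A_h/\|A_h\|$, and put the residual mass on $(0,I,\varnothing)$. Absolute convergence of the Dyson series in finite dimension justifies regrouping and operator-norm convergence, giving $\mathbb E[I+bY]=\widehat{\mathcal R}_{\lambda_*,S}$.

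The main obstacle is bookkeeping rather than new analysis. The bound in (iii) must hold with $\Delta_{\mathrm{site}}$ even though the rooted structure is intrinsic, which requires dominating the intrinsic connectivity indicator by site-overlap trees. One must also make sure no step uses that the interaction-picture generator is quadratic. Continuity of $z_\lambda$ and time reflection, from Lemma~\ref{lem:sm-atomic-local-mass} and Proposition~\ref{prop:sm-atomic-reconstruction}, supply everything the original argument used.
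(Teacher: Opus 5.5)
Your proposal is correct and follows the paper's proof essentially step for step. It uses the same deletion equation, two-stage Dyson expansion, intrinsic rootedness of nonzero blocks, and rooted-tree/Cayley bound with $\Delta_{\mathrm{site}}$; you put site-overlap indicators on the tree edges, whereas the paper uses intrinsic trees and then bounds the intrinsic conflict degree by $\Delta_{\mathrm{site}}$, which comes to the same thing. It also uses the same packaging $p_h=M_h/s_h$, $b_h\le s_h$ with a residual atom. The only point the paper makes more explicit is absolute summability, through the majorants $e^{L_0}$ and $e^{B_0}-1$ for the label-resolved series. This is worth stating, because no smallness of $\Delta_{\mathrm{site}}$ is assumed, so $\sum_h p_h\le1$ alone does not control $\sum_h\|A_h\|$; you should also fix $Y_h=I$ when $A_h=0$.
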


\begin{proof}
Continuity, reflection and activity somewhere in $[-T,T]$, with
$T>0$, imply
$a_\lambda=2\int_0^T|z_\lambda(s)|\,ds>0$.
Self-overlap gives
$0<a_{\lambda_*}\le\Delta_{\mathrm{site}}$.

Set
\[
 G_{\mathrm{str}}(u):=\widehat{U}_{\widehat S}(0,u)\widehat{U}_S(u,0),\qquad
 \widehat{B}_*(u):=\widehat{U}_{\widehat S}(0,u)P_{\lambda_*}\widehat{U}_{\widehat S}(u,0).
\]
Differentiation gives
$G_{\mathrm{str}}'(u)=-z_{\lambda_*}(u)\widehat{B}_*(u)G_{\mathrm{str}}(u)$ and $G_{\mathrm{str}}(0)=I$.
For a block define
\[
 C_{\boldsymbol\nu}:=
 \operatorname{ad}_{P_{\nu_r}}\cdots
 \operatorname{ad}_{P_{\nu_1}}(P_{\lambda_*}),\qquad
 \Sigma_r(s):=\{0\le\tau_r\le\cdots\le\tau_1\le s\}.
\]
At $r=0$ use $C_\varnothing=P_{\lambda_*}$ and inner integral one.
The first, conjugation expansion is
\[
 \widehat{B}_*(s)=\sum_{r\ge0}\sum_{\boldsymbol\nu\in\widehat S^r}
 \int_{\Sigma_r(s)}
 \left(\prod_{j=1}^r z_{\nu_j}(\tau_j)\right)
 C_{\boldsymbol\nu}\,d\boldsymbol\tau.
\]
Substituting it into the Volterra iteration of $G_{\mathrm{str}}$ gives the second,
outer expansion, and hence the exact two-stage Dyson series
\begin{align}\label{eq:sm-site-history-expansion}
 \widehat{\mathcal R}_{\lambda_*,S}&=I+\sum_h A_h,\\
 A_h&:=(-1)^m\int_{\mathfrak D_h(T)}
 \left(\prod_i z_{\lambda_*}(s_i)\right)
 \left(\prod_{i,j}z_{\nu_{i,j}}(\tau_{i,j})\right)
 C_{\boldsymbol\nu_1}\cdots C_{\boldsymbol\nu_m}
 \,d\boldsymbol\tau\,d\mathbf s,\nonumber\\
 \mathfrak D_h(T)&:=
 \{0\le s_m\le\cdots\le s_1\le T,\
       \boldsymbol\tau_i\in\Sigma_{r_i}(s_i),\ 1\le i\le m\}.
 \nonumber
\end{align}
Here the sum is over discrete label histories; times have already
been integrated. The finite-dimensional Dyson estimates give
$\sum_h\|A_h\|<\infty$: if
\[
 L_0:=2\sum_{\nu\in\widehat S}\int_0^T|z_\nu(u)|\,du,\qquad
 B_0:=e^{L_0}\int_0^T|z_{\lambda_*}(s)|\,ds,
\]
the inner absolute series is bounded by $e^{L_0}$, and the outer
Dyson majorant bounds $\sum_h\|A_h\|$ by $e^{B_0}-1$.
Thus the displayed regrouping by discrete histories is justified.

For a nonzero block, the CAR imply that its nested commutator is
$2^r$ times a unit-phased even Majorana string. A new intrinsic
support disjoint from the root and all preceding supports would
commute with their even product and annihilate the nested
commutator. This proves intrinsic rooted overlap, and hence site
rooted overlap. Products give the stated symmetric-difference
support. Applying $\pi$ only afterwards gives its inclusion in
the union of site supports.

Define the genuine absolute mass of a fixed block by
\[
 m(\boldsymbol\nu):=
 2^r\mathbf1_{\{C_{\boldsymbol\nu}\ne0\}}
 \int_0^T|z_{\lambda_*}(s)|
       \int_{\Sigma_r(s)}
             \prod_j|z_{\nu_j}(\tau_j)|
       \,d\boldsymbol\tau\,ds .
\]
The rooted-tree estimate \eqref{eq:block-mass-bound-v4} in the proof
of Theorem~\ref{thm:connected-propagator} gives
\begin{equation}\label{eq:sm-site-block-mass}
 \sum_{\boldsymbol\nu\in\widehat S^r}m(\boldsymbol\nu)
 \le\frac{a_{\lambda_*}}2
       \frac{(r+1)^{r-1}}{r!}\Delta_{\mathrm{site}}^r ,
\end{equation}
with combinatorial factor one at $r=0$.
Indeed, dominate the nonzero-commutator indicator by the sum over
spanning trees of its connected occurrence graph. After summing
labels and trees the non-root time-label pairs are symmetric,
so their ordered simplex contributes $1/r!$.
The root integral contributes $a_{\lambda_*}/2$.
For each tree edge, summing a child label and integrating its
time, with the factor two from the commutator, contributes at most
the intrinsic conflict degree, and thus at most $\Delta_{\mathrm{site}}$.
Cayley's tree count proves the estimate. These arguments require
no quadratic reference Hamiltonian. Site overlap is only a
majorant, not a sufficient test for a nonzero commutator.

Let
\[
 q:=\sum_{r\ge0}\sum_{\boldsymbol\nu\in\widehat S^r}
       \frac{m(\boldsymbol\nu)}{(3\Delta_{\mathrm{site}})^{r+1}}.
\]
With $\mathcal T(x)=\sum_{n\ge1}n^{n-1}x^n/n!$ as in the weak-coupling
proof, its previously established bound $\mathcal T(1/3)<1$ yields
$0\le q\le\frac{\theta}{2}\mathcal T(1/3)<\theta/2$.
For a full history put
\[
 M_h:=\prod_i m(\boldsymbol\nu_i),\qquad
 s_h:=\theta(3\Delta_{\mathrm{site}})^{\ell(h)}.
\]
Enlarging the ordered outer simplex to independent root-time
intervals gives $\|A_h\|\le M_h$, with no additional outer
$1/m!$. Define
\[
 p_h:=\frac{M_h}{s_h},\qquad
 b_h:=\begin{cases}s_h\|A_h\|/M_h,&M_h>0,\\0,&M_h=0,\end{cases}
 \qquad
 Y_h:=\begin{cases}A_h/\|A_h\|,&A_h\ne0,\\I,&A_h=0.\end{cases}
\]
Then $p_hb_hY_h=A_h$, $0\le b_h\le s_h$, and
\[
 \sum_h p_h=\frac1\theta\sum_{m\ge1}q^m
 \le\frac1\theta\sum_{m\ge1}(\theta/2)^m
 =\frac1{2(1-\theta/2)}\le1.
\]
Assign $p_0:=1-\sum_hp_h$ to $(0,I,\varnothing)$.
Every nonzero $A_h$ is a complex scalar times a single fixed
Majorana string, so its normalized $Y_h$ has the asserted form.
Finally,
\[
 \sum_h\|p_h(I+b_hY_h)\|
 \le\sum_hp_h+\sum_h\|A_h\|<\infty,\qquad
 p_0I+\sum_hp_h(I+b_hY_h)=I+\sum_hA_h=\widehat{\mathcal R}_{\lambda_*,S}.
\]
\end{proof}

\begin{lemma}[Site potential and sampled supports]
\label{lem:sm-site-history-potential}
For $S\subseteq\mathcal L_{\mathrm{str}}$ and a Majorana-index set $J$, define
\[
 \Phi^{\mathrm{site}}_S(J):=\frac1{\Delta_{\mathrm{site}}}
 \sum_{\substack{\lambda\in S:\,
 \Supp_\pi(\lambda)\cap\pi(J)\ne\varnothing}}a_\lambda.
\]
This potential is nonnegative, monotone in both arguments, and subadditive
on unions of index sets. Each virtual support has potential at most one.
For every sample in Theorem~\ref{thm:sm-site-connected-propagator},
using the identity monomial for zero amplitudes, one has
\begin{equation}\label{eq:sm-site-history-potential}
 \Phi^{\mathrm{site}}_{\widehat S}(\operatorname{supp}Y)\le\ell(h),
\end{equation}
where the residual history has length zero. If
$\Supp_\pi(\lambda_*)\cap\pi(J)\ne\varnothing$, then
\begin{equation}\label{eq:sm-site-potential-deletion}
 \Phi^{\mathrm{site}}_S(J)-\Phi^{\mathrm{site}}_{\widehat S}(J)=a_{\lambda_*}/\Delta_{\mathrm{site}}=\theta.
\end{equation}
\end{lemma}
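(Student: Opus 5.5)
The plan is to mirror the weak-coupling overlap-potential lemma (Lemma~\ref{lem:potential-bounds}), with intrinsic supports replaced by site images under $\pi$.

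\textbf{Elementary properties.} Nonnegativity holds because all $a_\lambda\ge0$ and $\Delta_{\mathrm{site}}>0$. For monotonicity, enlarging $S$ adds nonnegative terms. Enlarging $J$ enlarges $\pi(J)$, and hence enlarges the set of labels $\lambda$ whose site support meets it. For subadditivity, $\pi(J_1\cup J_2)=\pi(J_1)\cup\pi(J_2)$. Any $\lambda$ meeting the union therefore meets one of the two pieces, and the indicator sum is bounded by the sum of the two sums. For a virtual support $\Supp(\mu)$, we have $\pi(\Supp(\mu))=\Supp_\pi(\mu)$. The potential is then at most $\Delta_{\mathrm{site}}^{-1}$ times the sum of $a_\lambda$ over all $\lambda\in\mathcal L_{\mathrm{str}}$ site-overlapping $\mu$, and this is at most $1$ by definition~\eqref{eq:sm-site-degree-definition}.

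\textbf{History bound~\eqref{eq:sm-site-history-potential}.} For the residual atom, $Y=I$ has empty support, so the potential is $0=\ell$. For $b\ne0$, part (ii) of Theorem~\ref{thm:sm-site-connected-propagator} gives $\operatorname{supp}Y=K_h$ with $\pi(K_h)\subseteq\bigcup_{\lambda\in D(h)}\Supp_\pi(\lambda)$. By monotonicity in $J$ (through $\pi(J)$) and subadditivity over the $\ell(h)$ entries of $D(h)$, counted with multiplicity, we get $\Phi^{\mathrm{site}}_{\widehat S}(K_h)\le\sum_{\lambda\in D(h)}\Phi^{\mathrm{site}}_{\widehat S}(\Supp\lambda)\le\ell(h)$, using the per-support bound together with monotonicity in $S$ ($\widehat S\subseteq\mathcal L_{\mathrm{str}}$). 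The one subtlety is that we use the inclusion of $\pi(K_h)$ rather than $K_h\subseteq\bigcup\Supp(\lambda)$. Monotonicity should therefore be phrased through $\pi(J)$, and I would state it in that form.

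\textbf{Deletion identity~\eqref{eq:sm-site-potential-deletion}.} Since $\widehat S=S\setminus\{\lambda_*\}$, the difference of the two sums is exactly the contribution of $\lambda_*$. That contribution is $a_{\lambda_*}/\Delta_{\mathrm{site}}=\theta$ precisely when $\Supp_\pi(\lambda_*)\cap\pi(J)\neq\varnothing$, which is the hypothesis. I do not expect a genuine obstacle. The only care needed is bookkeeping: the potential depends on $J$ only through $\pi(J)$, and $\Delta_{\mathrm{site}}>0$ is assumed, as in this subsection.
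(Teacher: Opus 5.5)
Your proposal is correct and follows essentially the same route as the paper: subadditivity through site images, the per-label bound read off from the definition of $\Delta_{\mathrm{site}}$, the inclusion of $\pi(K_h)$ in the union of the $\ell(h)$ site supports, and exact removal of $\lambda_*$'s activity for the deletion identity. The one case you leave implicit is a non-residual outcome with $b=0$ (for instance $A_h=0$ with $M_h>0$). There $Y=I$ by the stated convention, so the potential is zero, exactly as for the residual atom.
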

\begin{proof}
All activities are nonnegative. A label meeting the site image of a union
meets at least one of the individual site images, which proves subadditivity.
The defining conflict-degree bound gives
$\Phi^{\mathrm{site}}_S(\Supp(\nu))\le\Delta_{\mathrm{site}}/\Delta_{\mathrm{site}}=1$ for each label
$\nu$. The sampled support is contained, after applying $\pi$, in the
union of the $\ell(h)$ virtual site supports. Summing their potentials
therefore proves~\eqref{eq:sm-site-history-potential}. If the amplitude
vanishes, choosing $Y=I$ gives potential zero directly. Finally, deleting
one overlapping label removes precisely its activity from the potential
sum, proving~\eqref{eq:sm-site-potential-deletion}.
\end{proof}

\subsection{Atomic Gaussian-cone lemmas}
A key property used in the weak-coupling regime is the closure of the Gaussian cone under Gaussian congruences: if $A \in \text{FGC}$, then $e^{-Q/2} A e^{-Q/2} \in \text{FGC}$ for any quadratic Hamiltonian $Q$. However, congruence by a convex-Gaussian operator does not necessarily preserve the Gaussian cone $\FGC$. Consequently, analyzing the strong-coupling regime requires additional lemmas that exploit the specific structure of the Fermi–Hubbard model.
We prove the required atomic statement locally and then combine factors
with disjoint site supports. In particular, no general cone-preservation
claim for the quartic atomic congruence is used.

All Majorana strings below use the fixed Hermitian normalization introduced
above: $\gamma_J^\dagger=\gamma_J$ and $\gamma_J^2=I$.
We keep track of the full site carrier of every local cone decomposition,
namely all Majorana labels on its one or two sites.  The local operators are
embedded into the full system by the ordered CAR embeddings.  The supported
form of disjoint-support closure then allows us to multiply cone elements
on distinct site blocks.  In particular, the factors being interchanged
are even and have disjoint site carriers; this argument does not assume
that arbitrary products of cone elements belong to $\FGC$.

\begin{lemma}[\bf Odd-site Gaussian-cone lemma]
\label{lem:odd-site}
Consider four modes belonging to two Hubbard sites, and let $\gamma_J$ be the
normalized Hermitian Majorana string associated with $J$; in particular,
$\gamma_J^\dagger=\gamma_J$ and $\gamma_J^2=I$. Suppose
the two sites are $x,y$ and that $d_x(J)$ and $d_y(J)$ are odd. For every
$a_x,a_y,r\in\mathbb R$ with $|r|\le1$,
\begin{equation}\label{eq:sm-two-site-cone}
 e^{-a_xV_x-a_yV_y}+r \gamma_J\in\FGC.
\end{equation}
\end{lemma}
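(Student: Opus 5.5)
The plan is to reduce the statement to a block-diagonal form governed by the two on-site parities. We then treat the pieces away from the "geometric-mean" balance by Lemma~\ref{lem:single-string} and Lemma~\ref{lem:cone-closure}, and handle the balanced block by an explicit Gaussian decomposition.

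\emph{Sector decomposition.} Write $\Pi_x^{\pm}:=(I\pm V_x)/2$ and similarly for $y$, so that
\[
e^{-a_xV_x-a_yV_y}=\sum_{s,s'=\pm}e^{-sa_x-s'a_y}\,\Pi_x^{s}\Pi_y^{s'}.
\]
Since $d_x(J)$ and $d_y(J)$ are odd, $\gamma_J$ anticommutes with both $V_x$ and $V_y$. Hence $\gamma_J\Pi_x^s\Pi_y^{s'}=\Pi_x^{-s}\Pi_y^{-s'}\gamma_J$, and $\gamma_J$ commutes with
\[
Q_\pm:=\tfrac12(I\pm V_xV_y).
\]
This splits the operator into two commuting blocks:
\[
\mathcal B_+:=e^{-A}\Pi_x^+\Pi_y^+ + e^{A}\Pi_x^-\Pi_y^- + r\gamma_JQ_+,\qquad A=a_x+a_y,
\]
and the analogous $\mathcal B_-$ with $A'=a_x-a_y$ and $Q_-$. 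It suffices to show $\mathcal B_\pm\in\FGC$ and then use the positive-sum closure in Lemma~\ref{lem:cone-closure}(i).

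\emph{Local normal form.} Every quadratic Gaussian unitary supported on site $x$ commutes with the site parity $V_x$. Conjugating by such unitaries on each site is a Gaussian congruence (Lemma~\ref{lem:cone-closure}(iii)), so we may assume $\gamma_J$ restricted to each site is a single fixed Majorana, or a fixed triple. A triple equals $V_x$ times a single Majorana up to phase. Since $V_x,V_y$ act as the scalars $\pm1$ on the sectors, $\gamma_J$ acts on each block as $\pm i\gamma_a\gamma_b$, with $a$ on site $x$ and $b$ on site $y$. Only the sign depends on the block, and that sign is absorbed into $r$.

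\emph{Reduction of the balanced part.} The positivity constraint within each block is $e^{-A}e^{A}=1\ge r^2$, which explains the hypothesis $|r|\le1$. I would split off the excess along the larger diagonal entry: for instance, if $A\ge0$,
\[
e^{A}\Pi_x^-\Pi_y^-=\Pi_x^-\Pi_y^-+(e^{A}-1)\Pi_x^-\Pi_y^-.
\]
The excess term $(e^A-1)\Pi_x^-\Pi_y^-$ is a nonnegative multiple of a disjoint-support product of the single-string elements $I\pm V_x$ and $I\pm V_y$, so it lies in $\FGC$ by Lemma~\ref{lem:single-string} and Lemma~\ref{lem:cone-closure}(ii). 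This reduces $\mathcal B_+$ to the balanced operator
\[
e^{-A}\Pi_x^+\Pi_y^+ + \Pi_x^-\Pi_y^- + r\,i\gamma_a\gamma_bQ_+ .
\]
However, at unequal weights this still fails the naive bound $e^{-A}\ge|r|$. So the plan is instead to treat the geometric-mean structure directly, as follows.

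\emph{Main step (expected obstacle).} Write the balanced block as a congruence
\[
K\,Q_+(I+r\,i\gamma_a\gamma_b)Q_+\,K,\qquad K:=e^{-(A/4)(V_x+V_y)}.
\]
The obstacle is that $K$ is quartic, so Lemma~\ref{lem:cone-closure}(iii) does not apply directly. On $Q_+$ we have $V_x=V_y$, and $V_x=Z_{x\uparrow}Z_{x\downarrow}$. I would try to replace $K$ on this sector by a quadratic Gaussian operator built from the two on-site bilinears that do not involve $\gamma_a,\gamma_b$, and which agrees with $K$ on the range of $Q_+$ after averaging over the two sign choices of the remaining on-site bilinear. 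Concretely, one writes $Q_+$ as a sum of the four pure Gaussian sector projectors, each an ordered product of $(I\pm i\gamma\gamma')/2$ factors. On each summand the desired weights $e^{\mp A/2}$ are produced by a quadratic $e^{-hZ}$ acting on a mode containing $\gamma_a$ or $\gamma_b$. Each resulting term is then $G(I+r\,i\gamma_a\gamma_b)G^\dagger$ multiplied by commuting Gaussian projectors, which lies in $\FGC$ by Lemma~\ref{lem:single-string} together with Lemma~\ref{lem:cone-closure}(ii)--(iii).

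If this matching of weights fails, the fallback is to verify the claim by a direct finite-dimensional computation. In the reduced normal form, each block is a $2\times2$ family in $(e^{-A},e^{A},r)$ tensored with a fixed two-dimensional structure. One would then exhibit an explicit convex decomposition into Gaussian states on four modes, checked with the four-mode criterion of Ref.~\cite{PhysRevA.90.020302} via $C_+=0$ for each component. I expect the second, balanced-block step to carry all the difficulty; the sector decomposition and the excess terms are routine.
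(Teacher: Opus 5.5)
\textbf{Verdict: there is a genuine gap.} Your split into the $Q_\pm$ blocks is correct, and so is the observation that each block carries diagonal weights $e^{\mp A}$ whose product is $1\ge r^2$. But the proposal stops exactly where the content of the lemma lies. You still need to show that each block stays in the cone under congruence by the quartic $K=e^{-(A/4)(V_x+V_y)}$, and you only conjecture this (``I would try\ldots''). The concrete version you sketch cannot be right as written: $Q_+$ has rank $8$, so it is not a sum of four pure Gaussian projectors. The fallback, finding a decomposition and checking $C_+=0$ on each component, describes a search rather than an argument.

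There is also an error in the local normal form. Suppose exactly one site carries a triple, say $d_x(J)=3$ and $d_y(J)=1$, so that $\gamma_J\propto V_x\gamma_a\gamma_b$. Since $V_x=+1$ on $\Pi_x^+\Pi_y^+$ but $V_x=-1$ on $\Pi_x^-\Pi_y^-$, you get $\gamma_JQ_+\propto\gamma_a\gamma_b(\Pi_x^+\Pi_y^+-\Pi_x^-\Pi_y^-)$. The sign therefore changes inside a single block and cannot be absorbed into $r$. Your reduction is fine only when both degrees are $1$ or both are $3$, since then $V_xV_y$ is a scalar on each block.

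\textbf{The missing idea} is to refine the decomposition down to two-dimensional pieces on which every vector is Gaussian. The paper does this in the occupation basis. There $\gamma_J|f\rangle=\zeta(f)|g(f)\rangle$, where $g$ is a fixed-point-free involution that flips exactly one occupation on each site (an on-site triple is a parity times a single Majorana). The weights $w(f)=e^{-a_xp_x(f)-a_yp_y(f)}$ satisfy $w(g(f))=w(f)^{-1}$. Hence the operator equals $\frac12\sum_fB_f$, where $B_f$ acts on $\operatorname{span}\{|f\rangle,|g(f)\rangle\}$ with diagonal entries $w(f),w(f)^{-1}$ and off-diagonal entries of modulus $|r|$. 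Because $1-r^2\ge0$, one can write $B_f=|v_f\rangle\langle v_f|+|u_f\rangle\langle u_f|$. Since $f$ and $g(f)$ differ in exactly two modes, every vector in that span is a pure Gaussian state: a BCS pair or a one-particle Slater determinant, times fixed occupations. So $B_f\in\FGC$, with no non-Gaussian congruence needed.

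Your own heuristic can also be made rigorous by conditioning on spectator modes, and this handles the mixed case uniformly.
\begin{itemize}
\item On each site exactly one mode contains an odd number of indices of $J$; call the other mode the spectator.
\item $\gamma_J$ commutes with the two spectator bilinears. On their joint eigenspace with eigenvalues $\sigma,\sigma'$, one has $V_x=\sigma Z_x^{\mathrm{act}}$ and $V_y=\sigma'Z_y^{\mathrm{act}}$, which are quadratic.
\item On that eigenspace $\gamma_J$ acts as $\pm P$, with $P=i\gamma_u\gamma_v$ quadratic and anticommuting with $X=a_x\sigma Z_x^{\mathrm{act}}+a_y\sigma'Z_y^{\mathrm{act}}$.
\item Hence $e^{-X}\pm rP=e^{-X/2}(I\pm rP)e^{-X/2}$ is a genuine Gaussian congruence of a single-string element, multiplied by a disjointly supported spectator projector.
\end{itemize}
Either route makes the excess-splitting detour unnecessary.
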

\begin{proof}
    
For an occupation $f\in\{0,1\}^4$, define the occupation $g(f)$ and phase
$\zeta(f)$ by
\[
 \gamma_J|f\rangle=\zeta(f)|g(f)\rangle.
\]
Unitarity gives $|\zeta(f)|=1$, and applying $\gamma_J^2=I$
shows that $g(g(f))=f$ and $\zeta(g(f))\zeta(f)=1$.
In particular, $\zeta(g(f))=\overline{\zeta(f)}$.
The occupation map has no fixed points, as the following flip count shows,
so these data decompose the occupation basis into two-element orbits. The odd
values of $d_x(J)$ and $d_y(J)$ imply
\[
 g(g(f))=f,
 \qquad |\{k:f_k\ne g(f)_k\}|=2,
 \qquad |\zeta(f)|=1,
 \qquad \zeta(g(f))=\overline{\zeta(f)}.
\]
Indeed, an odd Majorana string on the two modes of one site has degree $1$ or
$3$.  A degree-$3$ string is, up to a phase, the site-parity operator times
the other degree-$1$ Majorana operator. In either case it flips exactly
one occupation on that site. Therefore $\gamma_J$ flips one occupation at
each of the two sites, which proves the second identity above.
If $p_x(f),p_y(f)\in\{\pm1\}$ are the two site-parity eigenvalues, put
\[
 w(f)=e^{-a_xp_x(f)-a_yp_y(f)}.
\]
The two parities change sign under $g$, and hence
\begin{equation}\label{eq:sm-weight-reflection}
 w(g(f))=w(f)^{-1},\qquad w(f)>0.
\end{equation}

We use the following four-mode two-determinant criterion. If occupation configurations
$f$ and $g$ differ in exactly two modes, then for all $a,b\in\mathbb C$,
\[
 |a f+b g\rangle\langle a f+b g|\in\FGC.
\]
Indeed, fix the occupations of the two modes on which $f$ and $g$ agree and relabel the two
differing modes.  If the active occupations are $00$ and $11$, the normalized
nonzero vector is a two-mode BCS state
\[
 a|00\rangle+b|11\rangle
 \propto\exp(\lambda c_1^\dagger c_2^\dagger)|00\rangle,
\]
with zero-coefficient endpoints obtained directly or by continuity.  If the
active occupations are $10$ and $01$, it is the one-particle Slater
determinant
\[
 (a c_1^\dagger+b c_2^\dagger)|00\rangle.
\]
Both are pure fermionic Gaussian states.  The fixed-mode occupation factor is
Gaussian, and ordered CAR products and mode relabelings preserve Gaussianity.
Multiplying the normalized projector by the squared norm proves the
unnormalized cone statement; the zero vector is immediate.

Define the two-determinant block on the two-element occupation orbit $\{f,g(f)\}$ by

 \begin{align}
 B_f=w(f)|f\rangle\langle f|
       +w(f)^{-1}|g(f)\rangle\langle g(f)|
     +r \zeta(f)|g(f)\rangle\langle f|
       +r\overline{\zeta(f)}|f\rangle\langle g(f)|.
 \end{align}
Each $B_f$ lies in $\FGC$.  To see the underlying factorization explicitly,
write $s_f=\sqrt{w(f)}$.  Then
\[
 B_f=|v_f\rangle\langle v_f|+|u_f\rangle\langle u_f|,
 \quad
 v_f=s_f|f\rangle+\frac{r \zeta(f)}{s_f}|g(f)\rangle,
 \quad
 u_f=\frac{\sqrt{1-r^2}}{s_f}|g(f)\rangle.
\]
The condition $|r|\le1$ ensures $1-r^2\ge0$, while $s_f>0$
makes both denominators well defined.  At $|r|=1$ the vector $u_f$ vanishes,
which is permitted. The two occupations differ in exactly two entries, so
both rank-one terms are Gaussian by the four-mode two-determinant criterion;
their nonnegative sum is therefore in the cone.

The occupation basis diagonalizes $-a_xV_x-a_yV_y$, and the Majorana action has
phase $\zeta(f)$.
Consequently
\[
 e^{-a_xV_x-a_yV_y}=\sum_f w(f)|f\rangle\langle f|,
 \qquad
 r \gamma_J=\sum_f r \zeta(f)|g(f)\rangle\langle f|.
\]
Reindexing both reverse terms along the involution $g$ and using \eqref{eq:sm-weight-reflection} gives
\[
 e^{-a_xV_x-a_yV_y}+r \gamma_J=\frac12\sum_f B_f.
\]
Thus \eqref{eq:sm-two-site-cone} follows from closure of $\FGC$ under finite sums and nonnegative
scaling.  This is the local step that treats paired sites of odd on-site Majorana degree.

\end{proof}

We also need the two-mode cone criterion for positive even operators.

\begin{lemma}\label{lem:onesite}
Let $A$ be a positive operator on the two complex fermionic modes of one Hubbard site.  If $[A,V_x]=0$, then $A\in\FGC$.
\end{lemma}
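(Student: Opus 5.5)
Since $[A,V_x]=0$ and $V_x=(2n_{x\uparrow}-I)(2n_{x\downarrow}-I)$ is the site parity operator, $A$ is block diagonal with respect to the parity decomposition of the four-dimensional two-mode Fock space. The even sector is spanned by $\{|00\rangle,|11\rangle\}$ and the odd sector by $\{|10\rangle,|01\rangle\}$. So the plan is to write $A=A_++A_-$, with $A_\pm=\Pi_\pm A\Pi_\pm$ and $\Pi_\pm=(I\pm V_x)/2$. Each $A_\pm$ is positive, being a compression of a positive operator that commutes with the projector. It then suffices to show that each block lies in $\FGC$, and to conclude by closure under positive finite sums, Lemma~\ref{lem:cone-closure}(i).

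For each block, I will use the spectral decomposition $A_\pm=\sum_k \mu_k|\psi_k\rangle\langle\psi_k|$ with $\mu_k\ge0$ and eigenvectors $|\psi_k\rangle$ inside the given two-dimensional parity sector. Every vector in the even sector has the form $a|00\rangle+b|11\rangle$. Every vector in the odd sector has the form $a|10\rangle+b|01\rangle$. In each case the two basis occupations differ in exactly two modes.

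I will then invoke the two-determinant criterion established inside the proof of Lemma~\ref{lem:odd-site}, specialized to two modes with no spectator modes. It says that $|af+bg\rangle\langle af+bg|\in\FGC$ whenever $f$ and $g$ differ in exactly two modes. Concretely, the even vector is proportional to the BCS state $\exp(\lambda c_\uparrow^\dagger c_\downarrow^\dagger)|00\rangle$, or to $|11\rangle$ at the endpoint. The odd vector is the one-particle Slater determinant $(ac_\uparrow^\dagger+bc_\downarrow^\dagger)|00\rangle$. Both are pure Gaussian states, so every rank-one term lies in the cone, and nonnegative combinations finish the argument.

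The only delicate point is the endpoint case in the even sector, where $a=0$ and the exponential parametrization fails. I will handle it directly, since $|11\rangle$ is a Gaussian occupation state, or alternatively by the closedness of $\FGC$ shown in Lemma~\ref{lem:cone-closure}(i). I should also confirm that the two spin modes are ordered in the same way as in the global CAR embedding, so that the local statement transfers. This is only bookkeeping, because Gaussianity is invariant under mode relabeling.
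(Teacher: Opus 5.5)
Your proposal is correct and follows essentially the same route as the paper. Both arguments use $[A,V_x]=0$ to block-diagonalize $A$ in the two parity sectors, spectrally decompose each positive $2\times 2$ block, and identify every even eigenvector as a two-mode BCS state and every odd eigenvector as a one-particle Slater determinant. Your extra handling of the $a=0$ endpoint and of the mode ordering is harmless bookkeeping that the paper leaves implicit.
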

\begin{proof}
The local Fock space splits into two parity sectors,\[
\mathcal H_{\mathrm{even}}=\operatorname{span}\{|00\rangle,|11\rangle\},
\qquad
\mathcal H_{\mathrm{odd}}=\operatorname{span}\{|10\rangle,|01\rangle\}.
\]
Since $A$ commutes with $V_x$, it is block diagonal in these sectors.  Spectrally decompose each positive $2\times2$ block.  Every pure even state $a|00\rangle+b|11\rangle$ is a two-mode BCS Gaussian state, and every pure odd state $a|10\rangle+b|01\rangle$ is a one-particle Slater determinant.  Thus every eigenprojector is Gaussian, and the positive spectral decomposition places $A$ in $\FGC$.
\end{proof}

\begin{lemma}[\bf Even-site Gaussian-cone lemma]
\label{lem:even-site}
If $J_x$ is supported on one site and has even on-site Majorana
degree, then for every $a_x,r\in\mathbb R$ with $|r|\le1$,
\begin{align}
     e^{-a_xV_x/2}(I+r \gamma_{J_x})e^{-a_xV_x/2}\in\FGC.
\end{align}

\end{lemma}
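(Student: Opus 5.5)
The plan is to reduce the statement to Lemma~\ref{lem:onesite}. The key observation is that a Majorana string $\gamma_{J_x}$ with even on-site degree on site $x$ is even on that site. It therefore commutes with the site parity $V_x$, because $V_x$ is, up to sign, the product of the four site Majoranas. Concretely, an even-degree string on one site has degree $0$, $2$, or $4$. Degree $0$ is $I$. Degree $4$ is $\pm V_x$. Degree $2$ is a quadratic Hermitian string built from two of the four site Majoranas, and it commutes with the full-degree-$4$ product because any two even strings share an even number of Majoranas here (each degree-$2$ string shares two indices with the degree-$4$ string). Hence $[\gamma_{J_x},V_x]=0$ in all three cases.

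Given this commutation, $e^{-a_xV_x/2}$ commutes with $I+r\gamma_{J_x}$, so the operator in question equals
\[
A:=e^{-a_xV_x/2}(I+r\gamma_{J_x})e^{-a_xV_x/2}=e^{-a_xV_x}(I+r\gamma_{J_x}).
\]
I then verify the two hypotheses of Lemma~\ref{lem:onesite}.

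\emph{Positivity.} Since $|r|\le1$ and $\gamma_{J_x}^2=I$ with $\gamma_{J_x}$ Hermitian, we have $I+r\gamma_{J_x}\ge0$. A congruence $A=C(I+r\gamma_{J_x})C^\dagger$ with $C=e^{-a_xV_x/2}$ Hermitian preserves positivity, so $A\ge0$.

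\emph{Commutation.} Both factors commute with $V_x$, so $[A,V_x]=0$.

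Lemma~\ref{lem:onesite} then gives $A\in\FGC$ on the two-mode site Fock space. If the statement is meant inside the larger system, it follows by the ordered CAR embedding of the site block together with Lemma~\ref{lem:cone-closure}(ii): take a product with the identity on the remaining sites, and note that the identity lies in $\FGC$ by Lemma~\ref{lem:single-string} with $\alpha=0$.

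The only step requiring care is the commutation claim for the degree-$2$ case, together with the sign and phase conventions relating $V_x$ to $\gamma$ of the full site set. I would check this directly. With the Majoranas of site $x$ labeled $1,\dots,4$, we have $V_x=Z_\uparrow Z_\downarrow=(i\gamma_1\gamma_2)(i\gamma_3\gamma_4)$. Each bilinear $\gamma_a\gamma_b$ commutes with this product, since moving it through picks up $(-1)^{2\cdot 4-2}=+1$ by counting anticommutations.

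As an alternative, even without invoking commutation, one could write $A$ block-diagonally in the even and odd parity sectors and apply the spectral argument from Lemma~\ref{lem:onesite} directly.
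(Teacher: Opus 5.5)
Your proposal is correct and follows essentially the same route as the paper. The sandwich is positive because $|r|\le1$, $\gamma_{J_x}$ is a Hermitian involution, and $e^{-a_xV_x/2}$ is Hermitian; the even on-site degree gives $[\gamma_{J_x},V_x]=0$; Lemma~\ref{lem:onesite} and the ordered CAR embedding of the site then finish the argument. Your degree-by-degree check of the commutation, via the count $(-1)^{2\cdot4-2}=+1$, only fills in a step the paper states without detail.
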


\begin{proof}
   Indeed, the Hermitian involution $\gamma_{J_x}$ has spectrum in
$\{-1,1\}$, so $I+r\gamma_{J_x}\ge0$ for $|r|\le1$.
The factor $D_x=e^{-a_xV_x/2}$ is Hermitian, and hence
$D_x(I+r\gamma_{J_x})D_x$ is positive.
Moreover, as $J_x$ has even on-site Majorana
degree, we have $[\gamma_{J_x},V_x]=0$ and thus 
$[I+r \gamma_{J_x}, V_x]=0$.
Since $D_x$ is a function of $V_x$, it too commutes with $V_x$;
therefore the entire positive sandwich is even on this site.
Lemma~\ref{lem:onesite} now applies.  Its ordered CAR embedding gives a
cone decomposition supported on the full carrier of the site.
In particular, taking $r=0$ proves $e^{-a_xV_x}=D_x^2\in\FGC$,
which will supply the factors on unused sites.
\end{proof}

We next assemble the general dressed single-string factor. 
\begin{lemma}[\bf Atomic Hubbard Gaussian-cone lemma]
\label{lem:sm-atomic-dressed-string}
    For real numbers $a_x$, a set
$S$ of sites, and a normalized Hermitian even Majorana string $\gamma_J$
supported on those sites, define the touched-site set
\[
 S_J:=\{x\in S:\text{$J$ contains at least one of the four Majorana indices at $x$}\}.
\]
For every $A\subseteq S$, let
\begin{align}
    \mathcal D_A:=\prod_{x\in A}e^{-a_xV_x/2}.
\end{align}
Then, for every real $\alpha$ with $|\alpha|\le1$,
\begin{equation}\label{eq:sm-atomic-sandwich}
 \mathcal D_S(I+\alpha \gamma_J)\mathcal D_S\in\FGC.
\end{equation}
\end{lemma}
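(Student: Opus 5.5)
The idea is to reduce \eqref{eq:sm-atomic-sandwich} to the two local lemmas already proved: the Even-site Lemma~\ref{lem:even-site} and the Odd-site Lemma~\ref{lem:odd-site}. We cut $\gamma_J$ into commuting blocks with disjoint site carriers and then apply the disjoint-support closure of Lemma~\ref{lem:cone-closure}(ii).

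\emph{Splitting $\gamma_J$ into blocks.} For each touched site $x\in S_J$ let $J_x:=\{j\in J:\pi(j)=x\}$. Then $\gamma_J=\pm\prod_{x\in S_J}\gamma_{J_x}$, where each $\gamma_{J_x}$ is normalized, though it may be odd. Since $|J|$ is even, the number of sites with odd $d_x(J)$ is even. Pair these odd sites arbitrarily as $\{x_1,y_1\},\dots,\{x_k,y_k\}$. The blocks are then of two kinds:
\begin{itemize}
\item single sites $x$ with $d_x(J)$ even, giving the string $\gamma_{J_x}$;
\item odd pairs $\{x_i,y_i\}$, giving the string $\gamma_{B_i}$ with $B_i=J_{x_i}\cup J_{y_i}$.
\end{itemize}
Every block string is even and normalized Hermitian, and distinct blocks have disjoint supports, so they commute. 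We get $\gamma_J=\eta\prod_B\gamma_B$ with $\eta\in\{\pm1\}$ forced by Hermiticity. Absorb $\eta$ into $\alpha$.

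\emph{Convex reduction to products.} As in the proof of Lemma~\ref{lem:single-string}, first write $I+\alpha\gamma_J$ as a convex combination of $I\pm\gamma_J$. Then iterate
\[
I\pm\gamma_{B}\gamma_{B'}=\tfrac12\bigl[(I+\gamma_B)(I\pm\gamma_{B'})+(I-\gamma_B)(I\mp\gamma_{B'})\bigr].
\]
This expresses $I+\alpha\gamma_J$ as a nonnegative combination of products $\prod_B(I+\epsilon_B\gamma_B)$ with $\epsilon_B=\pm1$. By Lemma~\ref{lem:cone-closure}(i) it suffices to treat one such product.

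\emph{Distributing the sandwich.} Each $D_x=e^{-a_xV_x/2}$ is even and supported on site $x$. So it commutes with every block not touching $x$, and the $D_x$ commute among themselves. Hence
\[
\mathcal D_S\prod_B(I+\epsilon_B\gamma_B)\mathcal D_S=\prod_B\bigl[\mathcal D_{B}(I+\epsilon_B\gamma_B)\mathcal D_B\bigr]\prod_{x\in S\setminus S_J}e^{-a_xV_x},
\]
where $\mathcal D_B$ denotes the product of $D_x$ over the sites of block $B$. The factors on the right are handled as follows.
\begin{itemize}
\item An untouched site gives $e^{-a_xV_x}$, which lies in $\FGC$ by Lemma~\ref{lem:even-site} with $r=0$.
\item An even single-site block gives $D_x(I+\epsilon\gamma_{J_x})D_x\in\FGC$ directly by Lemma~\ref{lem:even-site}.
\item For an odd pair block, $\gamma_B$ has odd degree at each of its two sites, so it anticommutes with $V_x$ and with $V_y$. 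Therefore $D_xD_y\gamma_BD_xD_y=\gamma_B$, since the two exponentials cancel. The block then equals $e^{-a_xV_x-a_yV_y}+\epsilon\gamma_B$, which lies in $\FGC$ by Lemma~\ref{lem:odd-site}.
\end{itemize}

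\emph{Assembling.} All these factors live on pairwise disjoint site carriers (full four-Majorana sets per site, two sites for pair blocks). Lemma~\ref{lem:cone-closure}(ii) then places their ordered graded-CAR product in $\FGC$, and summing with (i) finishes the argument.

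The step I expect to be the main obstacle is the bookkeeping in this last step. A pair block may consist of two sites that are not adjacent in the mode ordering, and the local lemma is stated for four modes in a fixed order. So I must check that the ordered CAR embedding of the two-site carrier maps the local operator $e^{-a_xV_x-a_yV_y}+\epsilon\gamma_B$ exactly onto the global one. This holds because $V_x$ and $\gamma_B$ are even on the combined carrier, so the embedding preserves them up to a sign. Any such sign, together with sign changes arising when the factors are reordered, is real and can be absorbed into $\epsilon_B$, and Lemma~\ref{lem:odd-site} holds for all $|r|\le1$.
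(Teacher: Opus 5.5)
Your proof is correct and takes essentially the same route as the paper. Both use the same partition of $S_J$ into even singleton blocks and pairs of odd sites, the same appeal to Lemmas~\ref{lem:even-site} and~\ref{lem:odd-site} (including the cancellation $D_xD_y\gamma_BD_xD_y=\gamma_B$ on odd pairs), and disjoint-support closure.

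The only difference is how $\alpha$ is distributed over the blocks. You reduce to $\alpha=\pm1$ and iterate the two-term identity from Lemma~\ref{lem:single-string}. The paper instead averages $\prod_i(B_i+|\alpha|^{1/m}\chi_iC_i)$ over random signs $\chi_i$ with prescribed product, which at $|\alpha|=1$ is the same uniform average over sign patterns.
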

\begin{proof}
    For nonempty $J$, the sum of the on-site Majorana degree over all sites is $|J|$, which is
even. Hence the number of sites with odd on-site degree is even.
For a deterministic partition, first list the even-degree touched sites in
increasing order as singleton blocks.  List the odd-degree sites in increasing
order and pair consecutive sites in that list, placing these pairs after
the singleton blocks.  No odd site remains unpaired.  Thus the blocks cover
$S_J$ exactly once, their full site carriers are pairwise disjoint, and
the restricted Majorana supports of all blocks have even cardinality.
For each block $i$, let
\begin{align}
     J_i:=\{j\in J:\pi(j)\in i\},
\end{align}
and let $\gamma_{J_i}$ use the same fixed Hermitian normalization as
$\gamma_J$.  We do not choose its phase independently; the ordered product
of these local strings can differ from the global string by a sign,
which will be retained explicitly below.

For an even singleton block $i=\{x\}$, let
\begin{align}
    B_i:=e^{-a_xV_x},
 \qquad
 C_i:=e^{-a_xV_x/2}\gamma_{J_i}e^{-a_xV_x/2}.
\end{align}
The local string commutes with $V_x$, and thus Lemma~\ref{lem:even-site}
gives
\[
 B_i+rC_i
 =e^{-a_xV_x/2}(I+r\gamma_{J_i})e^{-a_xV_x/2}\in\FGC,
 \qquad |r|\le1.
\]
For an odd two-site block $i=\{x,y\}$, let
\begin{align}
     B_i:=e^{-a_xV_x-a_yV_y},
 \qquad C_i:=\gamma_{J_i}.
\end{align}
The local string anticommutes with both $V_x$ and $V_y$.
Writing $Q=(a_xV_x+a_yV_y)/2$, this gives
$e^{-Q}\gamma_{J_i}=\gamma_{J_i}e^{Q}$, and consequently
\[
 e^{-(a_xV_x+a_yV_y)/2}\gamma_{J_i}
 e^{-(a_xV_x+a_yV_y)/2}=C_i.
\]
Then Lemma~\ref{lem:odd-site} gives
\[
 B_i+rC_i\in\FGC,
 \qquad |r|\le1.
\]

Since $J\ne\varnothing$, there are $m\ge1$ blocks.  Reordering the
product of their normalized Majorana strings into the ordered word for $J$
gives a scalar phase multiplying $\gamma_J$.  Every block string is even;
therefore block strings on disjoint carriers commute, and their product is
both Hermitian and an involution.  The scalar phase is consequently real
and equals a sign $\eta\in\{\pm1\}$.
The atomic half-density factors are also even and commute across distinct
blocks.  Moving only these cross-block factors, without commuting a local
string through its own half-density, gives
\begin{align}
     \prod_{i=1}^m B_i=\mathcal D_{S_J}^2,
 \qquad
 \prod_{i=1}^m C_i=\eta\,\mathcal D_{S_J}\gamma_J\mathcal D_{S_J}.
\end{align}
Let $\varrho=|\alpha|^{1/m}\le1$.  If $\alpha\ne0$, put
$\tau=\eta\operatorname{sgn}(\alpha)$; if $\alpha=0$, put $\tau=1$.
Choose $\chi_1,\ldots,\chi_{m-1}$ independently and uniformly from
$\{\pm1\}$ and let
\begin{align}
     \chi_m=\tau\prod_{i=1}^{m-1}\chi_i.
\end{align}
For $m=1$ this prescription means simply $\chi_1=\tau$, with no
independent signs to sample.  For a nonempty proper subset $A$, if $m\notin A$,
one of the independent signs in the subproduct has zero mean.  If $m\in A$,
substituting for $\chi_m$ leaves an independent sign indexed outside $A$
to an odd power, and averaging that sign again gives zero.  Thus
\begin{align}
   \mathbb E\!\left[\prod_{i\in A}\chi_i\right]=0
 \quad\bigl(\varnothing\ne A\subsetneq\{1,\ldots,m\}\bigr),
 \qquad
 \mathbb E\!\left[\prod_{i=1}^m\chi_i\right]=\tau. 
\end{align}
Expanding in the fixed order therefore gives
\begin{align}
 \mathbb E_\chi\prod_{i=1}^m(B_i+\varrho\chi_iC_i)
 =\prod_{i=1}^mB_i+\tau \varrho^m\prod_{i=1}^mC_i
 =\mathcal D_{S_J}(I+\alpha \gamma_J)\mathcal D_{S_J}.
\end{align}
Here $\varrho^m=|\alpha|$, and if $\alpha=0$ then $\varrho=0$,
so the identity holds for the chosen $\tau=1$ as well.
The expansion uses a fixed operator order and requires no commutation
between $B_i$ and $C_i$ within a block.
Every product inside the expectation belongs to $\FGC$ by
Lemmas~\ref{lem:odd-site} and~\ref{lem:even-site} and supported
disjoint-CAR-product closure.  The expectation is a finite average with
nonnegative weights $2^{-(m-1)}$. Taking this positive average proves
the required dressed single-string factor on $S_J$ belongs to $\FGC$.

There are two edge cases in \eqref{eq:sm-atomic-sandwich}, both needed later.  If $J=\varnothing$,
then the dressed single-string factor equals $(1+\alpha)\mathcal D_S^2$.
Each onsite density $e^{-a_xV_x}$ is in the one-site cone by the
$r=0$ case of Lemma~\ref{lem:even-site}; their product
$\mathcal D_S^2$ is in the cone by disjoint-support closure.
Since $1+\alpha\ge0$, the empty-string factor is also in the cone,
including $\alpha=-1$, when it vanishes.  If $S=\varnothing$,
the onsite product is the identity, so the same argument applies.
If $S_J\subsetneq S$, operators on the unused sites commute
with both the atomic half-Gibbs factor on $S_J$ and the Majorana string, and one has the exact
factorization
\begin{align}
    \mathcal D_S(I+\alpha \gamma_J)\mathcal D_S
 =\mathcal D_{S_J}(I+\alpha \gamma_J)\mathcal D_{S_J}\,\mathcal D_{S\setminus S_J}^{\,2}.
\end{align}
The second factor is the square of an atomic half-Gibbs factor and belongs to the
cone on its disjoint support. 
Hence, we prove  \eqref{eq:sm-atomic-sandwich} for arbitrary $S$.

\end{proof}

\begin{cor}
\label{cor:sm-terminal-atomic-cone}
    Let $(J_j,\alpha_j)_{j=1}^m$ be a finite family such that every $J_j$ has
even cardinality, $\alpha_j\in\mathbb R$, $|\alpha_j|\le1$, and the nonempty
$J_j$ have pairwise disjoint site supports.
Then 
\begin{equation}\label{eq:sm-terminal-sandwich}
 e^{-\beta UV_{\rm int}/2}
 \prod_j(I+\alpha_j \gamma_{J_j})
 e^{-\beta UV_{\rm int}/2}\in\FGC.
\end{equation}

\end{cor}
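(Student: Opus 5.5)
The plan is to reduce \eqref{eq:sm-terminal-sandwich} to Lemma~\ref{lem:sm-atomic-dressed-string}, applied once per string, and then to combine the blocks with the disjoint-support closure in Lemma~\ref{lem:cone-closure}(ii).

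\textbf{Factorization.} Set $a_x:=\beta U/4$ for every site $x$. Then
\[
e^{-\beta UV_{\rm int}/2}=\prod_{x\in\Lambda}e^{-a_xV_x/2}=\mathcal D_\Lambda .
\]
Discard the factors with $J_j=\varnothing$: each contributes the nonnegative scalar $1+\alpha_j$. If some $\alpha_j=-1$ with $J_j=\varnothing$, the whole operator is $0\in\FGC$, so we may assume all remaining $J_j$ are nonempty. For each remaining $j$, let $S_j:=\pi(J_j)$ be its site support; by hypothesis the sets $S_j$ are pairwise disjoint. Let $S_0:=\Lambda\setminus\bigcup_j S_j$ be the untouched sites.

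The operators $e^{-a_xV_x/2}$ are even and commute with everything supported on other sites. Likewise, each $I+\alpha_j\gamma_{J_j}$ is even and supported on the carrier of $S_j$. Sliding the half-density factors of each block next to the string on that block therefore gives
\[
\mathcal D_\Lambda\prod_j(I+\alpha_j\gamma_{J_j})\mathcal D_\Lambda
=\Bigl(\prod_j \mathcal D_{S_j}(I+\alpha_j\gamma_{J_j})\mathcal D_{S_j}\Bigr)\,\mathcal D_{S_0}^{\,2}.
\]
All the operators being interchanged are even with disjoint carriers, so this rearrangement introduces no signs.

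\textbf{Membership of each factor.} Lemma~\ref{lem:sm-atomic-dressed-string}, with $S=S_j$ and $|\alpha_j|\le1$, places each factor $\mathcal D_{S_j}(I+\alpha_j\gamma_{J_j})\mathcal D_{S_j}$ in $\FGC$, supported on the full carrier of $S_j$. For the untouched sites, the $r=0$ case of Lemma~\ref{lem:even-site} gives $e^{-a_xV_x}\in\FGC$ for each $x\in S_0$, so $\mathcal D_{S_0}^2$ is a disjoint product of cone elements.

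\textbf{Combining the blocks.} The carriers of $S_1,S_2,\dots$ and of the individual sites in $S_0$ are pairwise disjoint even Majorana subspaces. Lemma~\ref{lem:cone-closure}(ii) then shows that their ordered graded-CAR product lies in $\FGC$.

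\textbf{Main obstacle.} The only delicate point is the interface with Lemma~\ref{lem:cone-closure}(ii). Each cone decomposition must be realized on its block through the canonical ordered CAR embedding, and the operator product in the display must coincide with the graded product used in that lemma. This holds because every local factor is even: even operators on disjoint Majorana supports commute, and parity strings cancel. Any reordering sign would already have been absorbed inside Lemma~\ref{lem:sm-atomic-dressed-string}, so none arises between blocks.
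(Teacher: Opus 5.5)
Your proposal is correct and follows the paper's proof essentially step for step. You pull out the empty strings as the nonnegative scalar $\prod_{J_j=\varnothing}(1+\alpha_j)$ and write $e^{-\beta UV_{\rm int}/2}=\mathcal D_\Lambda$ with $a_x=\beta U/4$. You then commute the even site half-densities into the factorization $\mathcal D_{S_0}^{2}\prod_j\mathcal D_{S_j}(I+\alpha_j\gamma_{J_j})\mathcal D_{S_j}$ and invoke Lemma~\ref{lem:sm-atomic-dressed-string} blockwise, the $r=0$ case of Lemma~\ref{lem:even-site} on the unused sites, and disjoint-support closure, with the same edge cases (vanishing scalar, or $S_0=\Lambda$) as the paper.
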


\begin{proof}
Let $E:=\{j:J_j=\varnothing\}$ and $N:=\{j:J_j\ne\varnothing\}$, and set
\[
 c_E:=\prod_{j\in E}(1+\alpha_j)\ge0.
\]
Each empty string contributes its scalar factor to $c_E$.
For each $j\in N$, put
\[
 S_j:=\pi(J_j)=\{\pi(q):q\in J_j\},
 \qquad S_0:=\Lambda\setminus\bigcup_{j\in N}S_j.
\]
The sets $S_j$ for $j\in N$ are pairwise disjoint, and $S_0$ contains
the unused sites.  Taking $a_x=\beta U/4$ in the definition of
$\mathcal D_A$ and using commutation of the onsite parity operators gives
\[
 \mathcal D_\Lambda
 =\prod_{x\in\Lambda}e^{-\beta UV_x/8}
 =e^{-\beta UV_{\rm int}/2}.
\]
The local centers and the atomic half-densities are even and supported
on their indicated site carriers. Thus factors from distinct site sets
commute, yielding the exact factorization
\[
 e^{-\beta UV_{\rm int}/2}
 \prod_{j=1}^{m}(I+\alpha_j\gamma_{J_j})
 e^{-\beta UV_{\rm int}/2}
 =c_E\,\mathcal D_{S_0}^{2}
 \prod_{j\in N}\left[
  \mathcal D_{S_j}(I+\alpha_j\gamma_{J_j})\mathcal D_{S_j}
 \right],
\]
where the product over $N$ retains the original index order.
By Lemma~\ref{lem:sm-atomic-dressed-string}, each dressed factor lies
in the Gaussian cone supported on $S_j$.  The unused-site density
$\mathcal D_{S_0}^{2}$ is a disjoint product of one-site cone elements.
Supported disjoint-product closure and nonnegative scaling by $c_E$
therefore prove~\eqref{eq:sm-terminal-sandwich}.
If $c_E=0$, the entire operator is zero; if $N=\varnothing$, the
right-hand side is simply $c_E\mathcal D_\Lambda^2$.
An empty terminal family is included by the empty-product convention
$c_E=1$.  Thus no nonempty-string or nonempty-family assumption is needed.
\end{proof}

\subsection{Recordwise site decoupling and exact terminal expansion}
The following theorem supplies the exact Gibbs-operator identity, together
with the support and convergence properties needed to apply the atomic cone
lemmas. Its proof spells out the mapped support-decoupling construction
for this interaction picture.

\begin{thm}[Exact terminal site-pinning representation]
\label{thm:sm-terminal-site-expansion}
Let $\beta\ge0$, $\tau_\beta=\beta/2$, and
$A_\beta=e^{-\beta UV_{\rm int}/2}$. Let $\mathcal L_{\mathrm{str}}$ be the finite
active virtual-label set of the atomic interaction-picture expansion of
$H_{U,t}$, and let $\Delta_{\mathrm{site}}$ be its integrated site-conflict
strength, including self-conflicts. If
$\Delta_{\mathrm{site}}\le1/72$, there exist an integer
$m\le N:=|\mathcal L_{\mathrm{str}}|$, a countable outcome set $\Omega$, and numbers
$p_\omega\ge0$ with $\sum_\omega p_\omega=1$, together with real
coefficients $\alpha_{\omega j}$ and even Majorana supports
$J_{\omega j}$, such that
\[
 |\alpha_{\omega j}|\le1,\qquad
 \alpha_{\omega j}=0\ \Longrightarrow\ J_{\omega j}=\varnothing,
\]
the site supports $\pi(J_{\omega j})$ are pairwise disjoint, and
\begin{equation}\label{eq:sm-terminal-gibbs}
 e^{-\beta H_{U,t}}
 =\sum_{\omega\in\Omega}p_\omega A_\beta
 \left(\prod_{j=1}^{m}
 (I+\alpha_{\omega j}\gamma_{J_{\omega j}})\right)A_\beta.
\end{equation}
The operator-valued sum is absolutely convergent in norm. Its factors are
kept in the displayed order throughout the construction, and a
zero-coefficient factor is exactly the identity with empty support.
\end{thm}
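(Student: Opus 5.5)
The plan is to rerun the weak-coupling support-decoupling iteration in the atomic interaction picture, with the site potential $\Phi^{\mathrm{site}}$ in place of $\Phi$. The trivial cases come first. If $\beta=0$, or if $\Delta_{\mathrm{site}}=0$ (then every activity vanishes and $\mathcal L_{\mathrm{str}}$ is empty), the deterministic record with $m=0$ already gives \eqref{eq:sm-terminal-gibbs}, by Proposition~\ref{prop:sm-atomic-reconstruction}. Otherwise we start from the single record $S=\mathcal L_{\mathrm{str}}$ with empty factor list. Its Gibbs identity $e^{-\beta H_{U,t}}=A_\beta\widehat U_S(T,0)\widehat U_S(0,-T)A_\beta$ is exactly \eqref{eq:sm-atomic-symmetric-gibbs}.

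We maintain the invariants (P1)--(P5), with two changes. In (P3), disjointness is now required of the site images $\pi(\Supp X_j)$. In (P4) and (P5), $\Phi$ is replaced by $\Phi^{\mathrm{site}}_S(\Supp X_j)$. One deletion step works as before.
\begin{enumerate}
\item If the last factor is active and has positive site potential, choose $\lambda_*\in S$ whose site support meets $\pi(\Supp X_m)$. Otherwise append the inactive factor $(0,I)$ and choose any $\lambda_*\in S$.
\item Draw two independent samples $(b_i,Y_i,h_i)$ from Theorem~\ref{thm:sm-site-connected-propagator}.
\item Select one of the seven branches $W_1,\dots,W_7$ with probabilities $2^{-\theta}$ and $(1-2^{-\theta})/6$, and rescale by $1/p_J$.
\end{enumerate}

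The branch average reproduces $\widehat{\mathcal R}(I+\alpha X)\widehat{\mathcal R}^\dagger$. Decoupled earlier factors commute through every step: zero site potential implies Majorana support disjoint from all remaining labels, so these factors commute with $V_S$, the propagators, and the sampled $Y_i$. Together with \eqref{eq:sm-site-deletion-bridges}, this preserves (P1).

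For the coefficient bound, branch 1 uses the exact potential drop \eqref{eq:sm-site-potential-deletion}. The history branches use $b\le\theta(3\Delta_{\mathrm{site}})^{\ell}$, the estimate $6\theta/(1-2^{-\theta})\le 12$, the fact that $12(3\Delta_{\mathrm{site}})^\ell\le 2^{-\ell}$ when $\Delta_{\mathrm{site}}\le 1/72$, and the bound \eqref{eq:sm-site-history-potential}. Subadditivity from Lemma~\ref{lem:sm-site-history-potential} handles the mixed branches. This is verbatim the weak-coupling argument; it is essentially the mapped-support construction of Theorem~\ref{prop:mapped-pinning} with $\pi$ the site map, applied to the atomic picture. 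Only time reflection and the deletion-propagator bounds were used there, and both hold here.

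Each step deletes one label from every record, so after exactly $N$ steps every record has $S=\varnothing$, and we take $m=N$ after padding with inactive factors. At that point $\Phi^{\mathrm{site}}_\varnothing\equiv 0$, so (P5) gives $|\alpha_j|\le1$, and (P3) gives pairwise disjoint site supports. The propagators $\widehat U_\varnothing$ are the identity, so the Gibbs identity becomes \eqref{eq:sm-terminal-gibbs}. We set $J_{\omega j}=\varnothing$ whenever $\alpha_{\omega j}=0$, using (P2) to identify the nonzero factors as $I+\alpha\gamma_K$ with real $\alpha$. Recording these factors in insertion order fixes the displayed product order, and the order does not matter anyway, because distinct even factors on disjoint supports commute.

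\textbf{Main obstacle: convergence.} The outcome set is a finite-depth tree in which each node has countably many children, so $\Omega$ is countable and $\sum p_\omega=1$ by the tower property. Absolute norm convergence needs a uniform bound. Every terminal summand satisfies $\|\prod_j(I+\alpha_j\gamma_{J_j})\|\le 2^{N}$, so each term in \eqref{eq:sm-terminal-gibbs} has norm at most $p_\omega\|A_\beta\|^2 2^N$, which is summable. This domination also justifies exchanging the nested expectations at every stage, and hence the regrouping that turns the stagewise identities into the single terminal sum. I would verify carefully that the per-record branch choice (different $\lambda_*$ per record) does not break this dominated-convergence argument. It should not, since the bound is uniform in the record.
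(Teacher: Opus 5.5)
Your proposal is correct and is essentially the paper's own proof. The paper also runs the site-mapped support-decoupling iteration with $\Phi^{\mathrm{site}}$, the same seven-branch update, the $1/72$ budget, and finite termination after $N$ deletions; the only bookkeeping difference is that it stores an inactive identity in the unfinished last slot, so every record carries exactly $k$ factors at stage $k$, whereas you update in place and pad at the end.

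For convergence, the paper bounds every intermediate summand uniformly by $C_k=\|A_\beta\|^2 2^k\bigl(\sum_S\|\widehat U_S(\tau_\beta,0)\|\bigr)\bigl(\sum_S\|\widehat U_S(0,-\tau_\beta)\|\bigr)$. Your terminal bound $\|A_\beta\|^2 2^N$ also suffices, but only once you state the missing step: the fiberwise conditional identity plus the triangle inequality make each stage's absolute norm sum at most the next stage's. That is what carries domination of the terminal summands, whose propagators are trivial, back to the intermediate ones.
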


\begin{proof}
We first treat $\beta>0$ and $\Delta_{\mathrm{site}}>0$.
For $S\subseteq\mathcal L_{\mathrm{str}}$, let
$V_S(s)=\sum_{\lambda\in S}z_\lambda(s)P_\lambda$ and denote its
time-ordered propagator by $\widehat{U}_S(s,u)$, with the convention of
the interaction-picture construction above. Define
\[
 \Phi^{\mathrm{site}}_S(K)=\frac1{\Delta_{\mathrm{site}}}
 \sum_{\substack{\lambda\in S:\
       \pi(\operatorname{supp}P_\lambda)\cap\pi(K)\ne\varnothing}}
 a_\lambda.
\]

\paragraph{1. Initial exact identity and induction invariant.}
A record at stage $k$ consists of a remaining set $S_\omega$ and an
ordered list of exactly $k$ affine factors. Put
\[
 B_\omega=\prod_{j=1}^{k}
 (I+\alpha_{\omega j}\gamma_{J_{\omega j}}),\qquad
 F(S,B)=A_\beta \widehat{U}_S(\tau_\beta,0)
 B \widehat{U}_S(0,-\tau_\beta)A_\beta.
\]
The exact virtual expansion and symmetric interaction-picture identity
give
\[
 F(\mathcal L_{\mathrm{str}},I)
 =A_\beta \widehat{U}_{\mathcal L_{\mathrm{str}}}(\tau_\beta,0)
   \widehat{U}_{\mathcal L_{\mathrm{str}}}(0,-\tau_\beta)A_\beta
 =e^{-\beta H_{U,t}}.
\]
Thus the singleton record with probability one, remaining set
$\mathcal L_{\mathrm{str}}$, and an empty factor list has the required initial
expectation. A factor is called active when its coefficient is nonzero.
We maintain the following assertions at every stage:
\begin{enumerate}
\item The outcomes form a countable probability distribution, and the
family $p_\omega F(S_\omega,B_\omega)$ is absolutely norm summable
with sum $e^{-\beta H_{U,t}}$.
\item Each factor has a real coefficient and a normalized Hermitian
even Majorana monomial. A zero coefficient is stored with empty support
and identity monomial.
\item Distinct active factors have disjoint site supports.
\item Every active factor strictly before the last factor is finished:
$\Phi^{\mathrm{site}}_{S_\omega}(J_{\omega j})=0$ for $j<k$.
\item Every coefficient obeys
$|\alpha_{\omega j}|\le2^{-\Phi^{\mathrm{site}}_{S_\omega}(J_{\omega j})}$.
\item Every remaining set has cardinality $N-k$.
\end{enumerate}
The factor assertions are vacuous initially. Both the remaining set and
the future deletion root are allowed to depend on the current record.

\paragraph{2. Finished-prefix preparation and root selection.}
Fix a record with $S\ne\varnothing$. If its last factor is active and
has nonzero remaining potential, denote it by $Y=I+\alpha X$ and select
$r\in S$ conflicting with its site support. Retain the preceding
factors and replace the old last slot by the inactive identity. If the
last factor is inactive or already finished, retain the whole old list,
put $\alpha=0$, $X=I$, and $Y=I$, and choose any $r\in S$. The same prescription applies to an
empty list. In both cases the old product is exactly $B=CY$, where the
retained list $C$ has the old length and all its active factors are
finished. Its active supports are disjoint from the support of $Y$.
Write $\widehat S=S\setminus\{r\}$ and $\theta=a_r/\Delta_{\mathrm{site}}$.

Every virtual label has positive activity at positive time radius.
Indeed, its continuous coefficient is nonzero somewhere on
$[-\tau_\beta,\tau_\beta]$; time reflection moves such a value to
$[0,\tau_\beta]$, where continuity gives a strictly positive integral
of its absolute value. Self-conflict gives $a_r\le\Delta_{\mathrm{site}}$, so
$0<\theta\le1$. Whenever $\alpha\ne0$, the selected root has the exact
deletion drop
\begin{equation}\label{eq:sm-terminal-potential-drop}
 \Phi^{\mathrm{site}}_S(\operatorname{supp}X)
 =\Phi^{\mathrm{site}}_{\widehat S}(\operatorname{supp}X)+\theta.
\end{equation}
No positive-drop assertion is needed when $\alpha=0$.

A finished active factor is site-disjoint from every label in $S$:
its potential is a sum of nonnegative terms and all these label
activities are positive. It is therefore also disjoint in intrinsic
Majorana support. Even CAR monomials on disjoint supports commute.
Consequently $C$ commutes with the remaining propagators and with the
exact deletion bridge
\[
 R_{\mathrm{str}}=\widehat{U}_{\widehat S}(0,\tau_\beta)
       \widehat{U}_S(\tau_\beta,0).
\]

\paragraph{3. Connected sampling and the seven affine updates.}
Theorem~\ref{thm:sm-site-connected-propagator} supplies a countable
distribution $q_a$ and operators $Z_a=I+b_aE_a$ such that
\[
 q_a\ge0,\qquad \sum_aq_a=1,\qquad \sum_aq_aZ_a=R_{\mathrm{str}},
\]
with a norm-summable expectation. The residual outcome has amplitude
zero, empty support, and length zero. A nonresidual outcome is an
integrated history of integer length $\ell_a\ge1$, with a unit-modulus
phased even Majorana monomial $E_a$ and
\begin{equation}\label{eq:sm-terminal-history-budget}
 |b_a|\le\theta(3\Delta_{\mathrm{site}})^{\ell_a},\qquad
 \Phi^{\mathrm{site}}_{\widehat S}(\operatorname{supp}E_a)\le\ell_a.
\end{equation}
For a nonzero amplitude, the intrinsic support is the symmetric difference
of its virtual-label supports. Zero amplitudes may use the identity monomial.
In either case the site image is contained in the union of the history supports. All history
labels lie in $S$. Thus retained active supports are disjoint from all
history supports, and the potential of the union of two history
supports is at most $\ell_a+\ell_b$.

Draw two such outcomes $a,b$ independently and select one of seven
choices independently, with probabilities
\[
 w_0=2^{-\theta},\qquad w_1=\cdots=w_6=w=
 \frac{1-2^{-\theta}}6.
\]
They are strictly positive and sum to one. With
$\mathcal H(E)=(E+E^*)/2$, define the seven Hermitian updates
\begin{align*}
 D_0&=\alpha X,&
 D_1&=b_a\mathcal H(E_a),&
 D_2&=b_b\mathcal H(E_b),\\
 D_3&=\alpha b_a\mathcal H(E_aX),&
 D_4&=\alpha b_b\mathcal H(E_bX),\\
 D_5&=b_ab_b\mathcal H(E_aE_b^*),&
 D_6&=\alpha b_ab_b\mathcal H(E_aXE_b^*).
\end{align*}
For choice $c$, the new affine operator is $Y_{ab,c}=I+D_c/w_c$.
Products of phased even Majorana monomials remain phased even Majorana
monomials; their Hermitian parts multiply normalized Hermitian
monomials by real numbers of absolute value at most one. Hence
$Y_{ab,c}=I+\alpha'\gamma_{J'}$ exactly, with real $\alpha'$ and even
$J'$. If $\alpha'=0$, use $J'=\varnothing$. Its support lies in the
appropriate union of the old support and the one or two sampled
history supports. Append this factor to $C$ and use remaining set
$\widehat S$. The successor has exactly $k+1$ stored factors: in the
unfinished-last case the old last slot remains as an inactive identity.

\paragraph{4. Support and coefficient bounds for every choice.}
The retained factors remain finished because
$0\le\Phi^{\mathrm{site}}_{\widehat S}\le\Phi^{\mathrm{site}}_S$. They are disjoint from both the old
factor and all history supports, and hence from the new support. Thus
the support, normalization, and finished-prefix assertions persist.
For the retain choice, the coefficient bound is immediate if $\alpha=0$.
Otherwise, Eq.~\eqref{eq:sm-terminal-potential-drop} pays for the inverse
probability exactly:
\[
 \frac{|\alpha|}{w_0}
 \le2^{-\Phi^{\mathrm{site}}_{\widehat S}(\operatorname{supp}X)}.
\]
For a nontrivial choice, concavity on $[0,1]$ gives
$1-2^{-\theta}\ge\theta/2$, so $\theta/w\le12$. Since
$\Delta_{\mathrm{site}}\le1/72$, every integer $\ell\ge1$ satisfies
\[
 \frac{\theta(3\Delta_{\mathrm{site}})^\ell}{w}
 \le12(3\Delta_{\mathrm{site}})^\ell\le2^{-\ell}.
\]
For two nonresidual histories, $\theta^2\le\theta$ similarly gives
\[
 \frac{\theta^2(3\Delta_{\mathrm{site}})^{\ell_a+\ell_b}}w
 \le2^{-(\ell_a+\ell_b)}.
\]
An update containing a residual amplitude is zero. Equation
\eqref{eq:sm-terminal-history-budget} converts these length bounds to
the corresponding potential bounds. A mixed update includes the
additional factor
$|\alpha|\le2^{-\Phi^{\mathrm{site}}_{\widehat S}(\operatorname{supp}X)}$.
Subadditivity of potential on unions, followed by monotonicity on
subsets, therefore gives
$|\alpha'|\le2^{-\Phi^{\mathrm{site}}_{\widehat S}(J')}$ in all seven cases. This
preserves the full coefficient invariant needed at the next step, not
only its weaker consequence $|\alpha'|\le1$.

\paragraph{5. Exact conditional operator expectation.}
Expanding the seven updates, using $X=X^*$ but making no commutation
assumption about the sampled operators, gives
\[
 \sum_cw_cY_{ab,c}
 =\frac12\bigl(Z_aYZ_b^*+Z_bYZ_a^*\bigr).
\]
The two samples are independent, so their subsequent average is
\begin{equation}\label{eq:sm-terminal-fiber-expectation}
 \sum_{a,b,c}q_aq_bw_cY_{ab,c}=R_{\mathrm{str}}Y R_{\mathrm{str}}^*.
\end{equation}
All the sums here are justified in norm. The sampler expectations are
absolutely summable in the finite-dimensional operator space, so the
independent product family is summable. Moreover the packaged new
factors satisfy $\|Y_{ab,c}\|\le2$, and their weighted norms are bounded
by the summable family $2q_aq_bw_c$. Thus the finite choice sum and the
countable sample sums may be interchanged.

The exact consecutive-deletion identities and time reflection yield
\[
 \widehat{U}_S(\tau_\beta,0)
 =\widehat{U}_{\widehat S}(\tau_\beta,0)R_{\mathrm{str}},\qquad
 \widehat{U}_S(0,-\tau_\beta)
 =R_{\mathrm{str}}^*\widehat{U}_{\widehat S}(0,-\tau_\beta).
\]
Multiply Eq.~\eqref{eq:sm-terminal-fiber-expectation} on the left by
$A_\beta \widehat{U}_{\widehat S}(\tau_\beta,0)C$ and on the right by
$\widehat{U}_{\widehat S}(0,-\tau_\beta)A_\beta$. The established
commutation $C R_{\mathrm{str}}=R_{\mathrm{str}}C$ gives
\begin{align}
 \sum_{a,b,c}q_aq_bw_cF(\widehat S,CY_{ab,c})
 =A_\beta \widehat{U}_{\widehat S}(\tau_\beta,0)
 C R_{\mathrm{str}}Y R_{\mathrm{str}}^*\widehat{U}_{\widehat S}(0,-\tau_\beta)A_\beta
 =F(S,CY).
 \label{eq:sm-terminal-conditional-expectation}
\end{align}
In particular, the unfinished factor has not been commuted through the
deletion bridge.

\paragraph{6. Probability normalization and the countable tower identity.}
For an old outcome $\omega$, the successor probability is
\[
 p'_{\omega,a,b,c}=p_\omega
 q_{\omega,a}q_{\omega,b}w_{\omega,c}.
\]
The outcome space is the countable disjoint union of these fibers;
each root, sample distribution, and choice distribution may depend on
$\omega$. Nonnegativity and normalization follow from those of the
old distribution and each conditional fiber.

To justify summing Eq.~\eqref{eq:sm-terminal-conditional-expectation}
over old outcomes, define the finite uniform budget at factor count $k$,
\[
 C_k=\|A_\beta\|^2\,2^k
 \left(\sum_{S\subseteq\mathcal L_{\mathrm{str}}}
 \|\widehat{U}_S(\tau_\beta,0)\|\right)
 \left(\sum_{S\subseteq\mathcal L_{\mathrm{str}}}
 \|\widehat{U}_S(0,-\tau_\beta)\|\right).
\]
The two sums are finite because $\mathcal L_{\mathrm{str}}$ is finite. Nonnegative
potential and the coefficient invariant imply $|\alpha_j|\le1$.
Each affine factor therefore has norm at most two and its ordered
product has norm at most $2^k$. Submultiplicativity proves
$\|F(S,B)\|\le C_k$ uniformly in the record. In particular,
\[
 \sum_{\omega'}\|p'_{\omega'}
 F(S_{\omega'},B_{\omega'})\|
 \le C_{k+1}\sum_{\omega'}p'_{\omega'}=C_{k+1}<\infty.
\]
Multiplying the conditional identity by $p_\omega$ and summing over
old outcomes is consequently legitimate, and preserves the exact
Gibbs expectation. This establishes absolute summability at every
stage, rather than relying on a formal rearrangement of countable
sums.

\paragraph{7. Finite termination and collapse to the terminal sum.}
Every successor deletes a root belonging to its own remaining set.
Consequently its remaining cardinality decreases by exactly one. At
stage $k$, all records have $N-k$ remaining labels, independently of
their possibly different deletion orders. After $N$ steps every
remaining set is empty and every stored list has length $N$.
The remaining generator is then zero, both remaining propagators are
the identity, and every remaining potential is zero. The coefficient
invariant gives $|\alpha_{\omega j}|\le1$, while the exact expectation
invariant becomes Eq.~\eqref{eq:sm-terminal-gibbs}. Its norm sum is
at most $\|A_\beta\|^2 2^N$. Inactive supports are empty, so
disjointness of active site supports gives pairwise disjointness of
the entire stored support list.

\paragraph{8. Zero-time and zero-conflict cases.}
If $\beta=0$, choose directly a singleton terminal distribution with
remaining set $\varnothing$ and an empty factor list. Equal-time
propagators and $A_\beta$ are the identity, and this representation
gives $e^0=I$ with $m=0$. It need not coincide with the canonical
all-label initial record: a coefficient can be nonzero at the single
point zero, even though its integrated activity is zero.

If $\beta>0$ and $\Delta_{\mathrm{site}}=0$, any active virtual label
would have strictly positive activity by the continuity and reflection
argument in step 2. Self-conflict would simultaneously imply
$a_\lambda\le\Delta_{\mathrm{site}}=0$, a contradiction. Hence
$\mathcal L_{\mathrm{str}}=\varnothing$, and the canonical initial singleton itself
is terminal, again with $m=0$. Neither boundary case divides by
$\Delta_{\mathrm{site}}$ or by a vanishing branch probability.
\end{proof}

\subsection{Completion of the strong-coupling proof}
\begin{proof}[Proof of Theorem~\ref{thm:hubbard-strong-sm}]
By Proposition~\ref{prop:sm-site-overlap-bound} and the assumed condition,
\[
 \Delta_{\mathrm{site}}
 \le\frac{8D|t|}{|U|}\left(e^{\beta|U|/2}-1\right)
 \le\frac1{72}.
\]
Theorem~\ref{thm:sm-terminal-site-expansion} therefore gives the exact,
absolutely norm-convergent expansion~\eqref{eq:sm-terminal-gibbs}.
For each outcome, Corollary~\ref{cor:sm-terminal-atomic-cone} places its
site-disjoint terminal sandwich in $\FGC$. Since $p_\omega\ge0$,
closure of $\FGC$ under nonnegative scaling and norm-convergent countable
sums gives $e^{-\beta H_{U,t}}\in\FGC$. This invokes the atomic dressed-factor
lemma, not Gaussian congruence by the quartic atomic half-Gibbs factor.
The Hermitian Hamiltonian has a positive-definite Gibbs operator and hence
a strictly positive trace. Dividing by this trace gives the convex-Gaussian
state $\rho_\beta(H_{U,t})$.

Finally, when $t\ne0$ and $D>0$, multiplication by the positive quantity
$|U|/(8D|t|)$ and monotonicity of the exponential give
\[
 \frac{8D|t|}{|U|}\left(e^{\beta|U|/2}-1\right)\le\frac1{72}
 \quad\Longleftrightarrow\quad
 \beta\le\frac2{|U|}\log\!\left(1+\frac{|U|}{576D|t|}\right).
\]
This proves the stated inverse-temperature window as well.
\end{proof}

\end{document}